\documentclass[11pt]{amsart}
\usepackage{amssymb,mathrsfs,graphicx,enumerate}
\usepackage{amsmath,amsfonts,amssymb,amscd,amsthm,bbm}
\usepackage[retainorgcmds]{IEEEtrantools}
\usepackage{colortbl}

\topmargin-0.1in \textwidth6.in \textheight8.5in \oddsidemargin0in
\evensidemargin0in
\title[Emergent behaviors of the generalized Lohe matrix model]{Emergent behaviors of the generalized Lohe matrix model}

\author[Ha]{Seung-Yeal Ha}
\address[Seung-Yeal Ha]{\newline Department of Mathematical Sciences\newline Seoul National University, Seoul 08826, Korea, and \newline
Korea Institue for Advanced Study, Hoegiro 85, 02455, Seoul, Korea}
\email{syha@snu.ac.kr}

\author[Park]{Hansol Park}
\address[Hansol Park]{\newline Department of Mathematical Sciences\newline Seoul National University, Seoul 08826, Korea}
\email{hansol960612@snu.ac.kr}

\newtheorem{theorem}{Theorem}[section]
\newtheorem{lemma}{Lemma}[section]
\newtheorem{corollary}{Corollary}[section]
\newtheorem{proposition}{Proposition}[section]
\newtheorem{remark}{Remark}[section]

\newtheorem{definition}{Definition}[section]

\newcommand{\bbr}{\mathbb R}

\newcommand{\bbu}{\mathbb U}

\newcommand{\bbc}{\mathbb C}

\begin{document}

\date{\today}

\subjclass{82C10, 82C22, 35B37} \keywords{Complete aggregation, emergence, Lohe matrix model, practical aggregation, tensors}

\thanks{\textbf{Acknowledgment.} The work of S.-Y. Ha is supported by NRF-2017R1A2B2001864.}

\begin{abstract}
We present a first-order aggregation model on the space of complex matrices which can be derived from the Lohe tensor model on the space of tensors with the same rank and size. We call such matrix-valued aggregation model as ‘‘{\it the generalized Lohe matrix model}". For the proposed matrix model with two cubic coupling terms, we study several structural properties such as the conservation laws, solution splitting property. In particular, for the case of only one coupling, we reformulate the reduced Lohe matrix model into the Lohe matrix model with a diagonal frustration, and provide several sufficient frameworks leading to the complete and practical aggregations. For the estimates of collective dynamics, we use a nonlinear functional approach using an ensemble diameter which measures the degree of aggregation.
\end{abstract}

\maketitle \centerline{\date}


\section{Introduction} \label{sec:1}
\setcounter{equation}{0}
Collective behaviors of many-body systems often appear in biological, chemical and physical complex systems, e.g., flocking of birds, swarming of fish, an array of Josephson junctions, etc, (see related literature \cite{A-B, A-B-F, B-H, B-T1, B-T2, B-B,H-K-P-Z, Pe, P-R, St, T-B-L, T-B, VZ, Wi1}). Despite of its ubiquity in our nature, modeling-based studies were first begun by Winfree and Kuramoto \cite{Ku1, Ku2, Wi2, Wi1} in a half century ago. After their pioneering works, several agent-based models for collective dynamics were proposed and extensively studied in applied mathematics, control theory and nonlinear dynamics communities \cite{B-C-S, D-F-M-T, D-F-M, De, H-K, H-R}. 

In this paper, we propose a generalized first-order aggregation model on the space of complex matrices with the same size and study its emergent dynamics.  For the square matrices, our proposed model can be reduced to the Lohe matrix model  \cite{H-K-R, H-R, Lo-0, Lo-1, Lo-2} on the unitary matrix group. Since the space of matrices can be regarded as the space of rank-2 tensors, we begin with a general framework  introduced in \cite{H-P1, H-P2} on the space of tensors, and derive our model as a special case of the general model (see Section \ref{sec:2}).   

Let ${\mathbb C}^d$ and ${\mathbb C}^{d_1 \times d_2}$ be a $d$-dimensional complex vector space which is isomorphic to $\bbr^{2d}$ and a complex vector space consisting of $d_1 \times d_2$ matrices with complex entries, respectively. We set ${\mathcal T}_2(\bbc; d_1 \times d_2)$ to be the vector space of rank-2 complex tensors with size $d_1 \times d_2$. Let $T$ be a rank-2 tensor with size $d_1 \times d_2$, and its $(\alpha, \beta)$-component of $T$ is denoted by $[T]_{\alpha \beta}$. Then, its hermitian conjugation of $T$ is denoted by $T^*$ and 
\begin{equation} \label{A-0}
[T^*]_{\alpha \beta} := [{\bar T}]_{\beta \alpha}, \quad \alpha = 1, \cdots, d_2,~~\beta = 1, \cdots, d_1,
\end{equation}
and let $A \in {\mathcal T}_4(\bbc; d_1 \times d_2 \times d_1 \times d_2)$ be the skew-Hermitian rank-4 tensor:
\begin{equation} \label{A-0-1}
[{\bar A}]_{\alpha \beta \gamma \delta} = -[A]_{\gamma \delta \alpha \beta}, \quad 1 \leq \alpha, \gamma \leq d_1,~~1 \leq \beta, \delta \leq d_2.
\end{equation}
With the above preparations, we are now ready to present our first-order aggregation model. \newline

Consider an ensemble $\{T_i \}_{i=1}^{N}$ of rank-2 tensors, and we set $T_c := 1/N \sum_{i=1}^{N} T_i$. Then, our proposed first-order aggregation on ${\mathcal T}_2(\bbc: d_1 \times d_2)$ is given as a mean-field form.
\begin{equation} \label{A-1}
{\dot T}_i =A_i T_i +\kappa_{1}(T_cT_i^* T_i -T_i T_c^* T_i)+\kappa_{2}(T_iT_i^* T_c-T_i T_c^* T_i), \quad 1 \leq i \leq N, \\
\end{equation}
where  $\kappa_{1}$ and $\kappa_{2}$ are nonnegative coupling strengths, and the products in cubic interaction term are standard matrix multiplications, and $A_j \in {\mathcal T}_4(\bbc;d_1 \times d_2 \times d_1 \times d_2)$ is rank-4 tensor defined by 
\begin{equation} \label{A-1-1}
[A_iT_i]_{\alpha\beta} :=[A_i]_{\alpha\beta\gamma\delta}[T_i]_{\gamma\delta}, \quad 1 \leq \alpha, \gamma \leq d_1,~~1 \leq \beta, \delta \leq d_2,
\end{equation}
where we used Einstein summation convention. Due to the relations \eqref{A-0} and \eqref{A-0-1}, it is easy to see that linear term and the coupling terms clearly in the R.H.S. of \eqref{A-1} belong to ${\mathcal T}_2(\bbc; d_1 \times d_2)$. Thus, model \eqref{A-1} generates a dynamical system on ${\mathcal T}_2(\bbc; d_1 \times d_2)$. In this paper, we are interested in the following two questions: 
\begin{quote}
\begin{itemize}
\item
(Q1):~When does system \eqref{A-1} - \eqref{A-1-1} exhibit emergent dynamics?
\vspace{0.2cm}
\item
(Q2):~What are the roles of cubic coupling terms involving with $\kappa_{1}$ and $\kappa_{2}$?
\end{itemize}
\end{quote}
Our main results of this paper are related to above two questions. First set of results is concerned with the Cauchy problem for a homogeneous ensemble with $A_i = A$ for all $i =1, \cdots, N$:
\begin{equation} \label{A-1-2}
\begin{cases}
\displaystyle {\dot T}_i =A T_i +\kappa_{1}(T_cT_i^* T_i -T_i T_c^* T_i)+\kappa_{2}(T_iT_i^* T_c-T_i T_c^* T_i), \quad t > 0, \\
\displaystyle T_i(0) = T_i^0, \quad \quad 1 \leq i \leq N.
\end{cases}
\end{equation}
Our first result is concerned with solution splitting property of \eqref{A-1-2}. More precisely, we consider associated linear free flow and nonlinear flow, respectively:
\begin{equation} \label{A-2}
\begin{cases} 
{\dot F} =A F, \\
{\dot N}_i = \kappa_{1}(N_c N_i^* N_i -N_i N_c^* N_i)+\kappa_{2}(N_i N_i^* N_c-N_i N_c^* N_i).
\end{cases}
\end{equation}
Then, once the rank-4 tensor $A$ satisfies a proposed criterion in \eqref{C-1-1}, a solution \eqref{A-1-2} can be expressed as the composition of associated linear and nonlinear flows \eqref{A-2} (solution splitting property, Theorem \ref{T3.1}):  
\begin{equation} \label{A-2-0-0}
L(t) = e^{t A} \quad \mbox{and} \quad T_j(t) = (L \circ N_j) T_j^0, \quad t \geq 0,~~i = 1, \cdots, N. 
\end{equation}
On the other hand, for $T_1, T_2 \in {\mathcal T}_2(\bbc;d_1 \times d_2)$, we introduce a Frobenius norm and inner product:
\[ \|T \|_F^2 := \sum_{\alpha, \beta} |[T]_{\alpha \beta}|^2 \quad \mbox{and} \quad \langle T_1, T_2 \rangle_F := \sum_{\alpha, \beta} \overline{[T_1]_{\alpha \beta}} [T_2]_{\alpha \beta}. \]
Here we used physicist's convention for inner product by taking the complex conjugate of the first argument.

Thus, thanks to the solution splitting property \eqref{A-2-0-0}, for a homogeneous ensemble, without loss of generality, we may assume $A \equiv 0$:
\begin{equation} \label{A-2-0}
\begin{cases} 
\displaystyle {\dot T}_i =\kappa_{1}(T_cT_i^* T_i -T_i T_c^* T_i)+\kappa_{2}(T_i T_i^* T_c-T_iT_c^* T_i), \quad t > 0, \\
\displaystyle T_i(0) = T_i^0 \in {\mathcal T}_2(\bbc; d_1 \times d_2), \quad \|T_i^0 \|_F = 1, \quad i = 1, \cdots, N.
\end{cases}
\end{equation}
Let $\{T_i \}$ be a solution to to \eqref{A-2-0}, we introduce the functional measuring the deviation from the centroid of configuration:
\[ {\mathcal V}[T(t)] :=\frac{1}{N}\sum_{k=1}^N||T_k(t)-T_c(t)||_F^2 = 1 - \|T_c(t)\|_F^2, \quad t \geq 0. \]
Here we used the conservation of norm (see Lemma \ref{L2.1}):
\[ \| T_i(t) \|_F = 1, \quad t \geq 0, \quad i = 1, \cdots, N. \]
In Theorem \ref{T4.1}, we show that there exists a positive value ${\mathcal V}_\infty$ such that 
\[  \lim_{t\rightarrow\infty} \frac{d}{dt} {\mathcal V}[T(t)] = 0 \quad \mbox{and} \quad \lim_{t \to \infty} {\mathcal V}[T(t)] = {\mathcal V}_\infty. \]
Now, the second set of results deal with the reduced model for \eqref{A-1} taking only the one of cubic interaction term:
\begin{equation} \label{A-3}
\begin{cases} 
\displaystyle {\dot T}_i = A_i T_i + \kappa_{1}(T_cT_i^* T_i -T_iT_c^* T_i), \quad t > 0, \\
\displaystyle T_i(0) = U_i^0 \Sigma V^*, \quad i = 1, \cdots, N.
\end{cases}
\end{equation}
We set 
\[ {\mathcal D}(A) := \max_{1 \leq i, j \leq N} \|A_i - A_j \|_{F}. \]
By the singular value decomposition(SVD) and conservation of rank of $T_i$, we can write $T_i$ as 
\[
T_i(t)=U_i(t)\Sigma V^*, \quad t > 0,~~i = 1, \cdots, N,
\] 
where $U_i = U_i(t)$ and $V$ are $d_1 \times d_1$ time-dependent and constant  $d_2 \times d_2$ unitary matrices, respectively, and $\Sigma$ is the $d_1\times d_2$ rectangular constant diagonal matrix with non-negative real numbers on the diagonal:
\[
[\Sigma_i]_{jk}=
\begin{cases}
\lambda_j\qquad\mbox{when $j=k\leq\min \{d_1, d_2 \}$,}\\
0\qquad \mbox{otherwise.}
\end{cases}
\]
(see Theorem \ref{T5.1}).  For the reduced model \eqref{A-3}, we will show that the Cauchy problem \eqref{A-3} is equivalent to the Lohe matrix model \cite{H-K-P-R} with a diagonal frustration  matrix $D$ (Theorem \ref{T5.1}).
\begin{align*}\label{A-4}
\begin{cases}
\dot{U}_i= B_iU_i + \kappa_{1}(U_cD-U_iD U_c^*U_i), \quad t > 0, \\
U_i(0)=U_i^0, \quad i = 1, \cdots, N.
\end{cases}
\end{align*}
Next, we consider the cases:
\[ \mbox{Either} \quad {\mathcal D}(B) = 0 \quad \mbox{or} \quad {\mathcal D}(B)  > 0. \]
For the former case with ${\mathcal D}(B) = 0$, we derive a differential inequality for $X = {\mathcal D}(U)^2$:
\[
-4\kappa_1 \mathcal{A}X+2\kappa_1 \mathcal{A}X^2\leq \frac{dX}{dt} \leq-4\kappa_1 \mathcal{B}X+2\kappa_1 \mathcal{A}X^2, \quad \mbox{a.e.}~t \in (0, \infty),
\]
where ${\mathcal A}$ and ${\mathcal B}$ a positive constant.
Thus, for a small initial data with ${\mathcal D}(U^0) \ll 1,$ one has exponential aggregation (See Theorem \ref{T6.1}):
\[  \mathcal{O}(1)e^{-2\kappa_1 \mathcal{A}t}\leq {\mathcal D}(U(t))\leq \mathcal{O}(1) e^{-2\kappa_1 \mathcal{B}t}, \quad \mbox{as $t \to \infty$}. \]
On the other hand, for a heterogeneous ensemble,  if the coupling strength and initial data satisfy
\[ \kappa_1 \gg {\mathcal D}(B),\qquad {\mathcal D}(U^0) \ll 1, \]
then, one has practical aggregation (Theorem \ref{T6.2}):
\[
\lim_{\kappa_1 \rightarrow\infty}\limsup_{t\rightarrow\infty} {\mathcal D}(U)=0.
\]

The rest of this paper is organized as follows. In Section \ref{sec:2}, we briefly introduce the Lohe tensor model which is a first-order aggregation model on the space of rank-m tensors with the same size, and discuss its reductions to the generalized Lohe sphere and Lohe matrix models. In Section \ref{sec:3}, we present a criterion for the solution splitting property of the generalized Lohe matrix model with the same $A_i$, and as an application of this criterion, we show that our general Lohe matrix model satisfies the criterion. In Section \ref{sec:4}, we study emergent dynamics for the homogeneous ensemble to the generalized Lohe matrix model and provide a decay estimate using the variance functional. In Section \ref{sec:5}, we present a reformulation of the reduced generalized Lohe matrix model which takes only one of cubic coupling to the Lohe matrix model with a diagonal frustration matrix, and provide a reformulation into the Lohe matrix model with a diagonal frustration matrix. In Section \ref{sec:6}, we present a practical aggregation for the reduced generalized Lohe matrix model using the diameter of state ensemble. Finally, Section \ref{sec:7} is devoted to a brief summary of our main results and some issues to be explored in a future study.  

\section{Preliminaries} \label{sec:2}
\setcounter{equation}{0}
In this section, we briefly introduce the Lohe tensor model \cite{H-P2}, and discuss relations between the generalized complex Lohe sphere model \cite{H-P1} and the Lohe matrix model \cite{Lo-0, Lo-1, Lo-2}. 

\subsection{The Lohe tensor model} \label{sec:2.1}
A tensor can be visualized as a $m$-dimensional array of complex numbers with $m$-indices. The rank of a tensor is the number of indices, say a rank-$m$ tensor of dimensions $d_1 \times \cdots \times d_m$ is an element of ${\mathbb C}^{d_1 \times \cdots \times d_m}$. Then, it is easy to see that scalars, vectors and matrices correspond to rank-0, 1 and 2 tensors, respectively.  Let $T$ be a rank-$m$ tensor with a size $d_1 \times \cdots \times d_m$.  Then, we denote $(\alpha_1, \cdots, \alpha_m)$-th component of the tensor $T$ by $[T]_{\alpha_1 \cdots \alpha_m}$, and we set $\overline{T}$ by the rank-$m$ tensor whose components are the complex conjugate of the elements in $T$:
\[ [\overline{T}]_{\alpha_1 \cdots \alpha_m} =\overline{[T]_{\alpha_1 \cdots \alpha_m}}. \]
Let ${\mathcal T}_m(\bbc; d_1 \times\cdots\times d_m)$ be the complex vector space consisting of all rank-$m$ tensors with size $d_1 \times\cdots\times d_m$ with standard addition and scalar multiplication. Then, the Kuramoto model, the Lohe sphere model and the Lohe matrix model can be regarded as aggregation models on ${\mathcal T}_0(\bbr; 0), {\mathcal T}_1(\bbr; d)$ and ${\mathcal T}_2(\bbc; d \times d)$, respectively. Let $\{T_j \}_{j=1}^{N}$ be the $N$-collection of rank-$m$ tensors in ${\mathcal T}_m(\bbc; d_1 \times\cdots\times d_m)$, and $A_j$ is the skew-hermitian rank-$2m$ tensors with size $(d_1 \times\cdots\times d_m) \times (d_1 \times \cdots\times d_m)$. For the simplicity of presentation, we introduce handy notation as follows: for $T \in {\mathcal T}_m(\bbc; d_1 \times \cdots\times d_m)$ and $A \in  {\mathcal T}_{2m}(\bbc; d_1 \times\cdots\times  d_m \times d_1 \times \cdots\times d_m)$, we set
\begin{align*}
\begin{aligned}
& [T]_{\alpha_{*}}:=[T]_{\alpha_{1}\alpha_{2}\cdots\alpha_{m}}, \quad [T]_{\alpha_{*0}}:=[T]_{\alpha_{10}\alpha_{20}\cdots\alpha_{m0}},  \quad  [T]_{\alpha_{*1}}:=[T]_{\alpha_{11}\alpha_{21}\cdots\alpha_{m1}}, \\
&  [T]_{\alpha_{*i_*}}:=[T]_{\alpha_{1i_1}\alpha_{2i_2}\cdots\alpha_{mi_m}}, \quad [T]_{\alpha_{*(1-i_*)}}:=[T]_{\alpha_{1(1-i_1)}\alpha_{2(1-i_2)}\cdots\alpha_{m(1-i_m)}}, \\
&  [A]_{\alpha_*\beta_*}:=[A]_{\alpha_{1}\alpha_{2}\cdots\alpha_{m}\beta_1\beta_2\cdots\beta_{m}}.
\end{aligned}
\end{align*}
Then, the Lohe tensor model \cite{H-P2} in component form can be written as follows:
\begin{equation}
\begin{cases} \label{B-0}
\displaystyle \dot{[T_j]}_{\alpha_{*0}} = [A_j]_{\alpha_{*0}\alpha_{*1}}[T_j]_{\alpha_{*1}} \\
\displaystyle \hspace{1cm} + \sum_{i_* \in \{0, 1\}^m}\kappa_{i_*} \Big([T_c]_{\alpha_{*i_*}}\bar{[T_j]}_{\alpha_{*1}}[T_j]_{\alpha_{*(1-i_*)}}-[T_j]_{\alpha_{*i_*}}\bar{[T_c]}_{\alpha_{*1}}[T_j]_{\alpha_{*(1-i_*)}} \Big), \\
\displaystyle  \bar{[A_j]}_{\alpha_{*0}\alpha_{*1}}=-[A_j]_{\alpha_{*1}\alpha_{*0}},
\end{cases}
\end{equation}
where $\kappa_{i_*}$'s are coupling strengths. \newline

The aforementioned models are generalization of the Kuramoto model \cite{A-R, B-C-M, C-H-J-K, C-S, D-X, D-B1, D-B0, D-B, H-K-R, H-L-X, Ku1, Ku2, M-S1, M-S2,M-S3, V-M1, V-M2} in the context of higher-dimensional sphere, unitary group and quaternions \cite{C-C-H, C-H1, C-H2, C-H3, C-H5, D-F-M-T, D-F-M, De, H-K-P-R, H-K, J-C, M-T-G,T-M, Zhu} and $L^2$-space \cite{C-H4}.
\subsection{From the Lohe tensor to the generalized Lohe matrix} \label{sec:2.2}
In this subsection, we discuss how the generalized Lohe matrix model can be derived from the Lohe tensor model as a special case. \newline

Consider the unitary group ${\mathbb U}(d)$:
\[ {\mathbb U}(d) := \{ U \in \bbc^{d \times d}~:~U^* U = U U^* = I_d \}, \]
where $I_d$ is the $d \times d$ identity matrix. Then, the Lohe matrix model in mean-field form becomes
\begin{equation} \label{B-1}
\mathrm{i}\dot{U}_jU_j^*=H_j+\frac{\mathrm{i}\kappa}{2}(U_cU_j^*-U_jU_c^*),
\end{equation}
where $U_c :=\frac{1}{N}\sum_{k=1}^NU_k$ is the average of $U_k$'s. Moreover, system \eqref{B-1} can be rewritten as follows:
\begin{align}
\begin{aligned} \label{B-2}
\dot{U}_j &=-{\mathrm i} H_j U_j+\frac{\kappa}{2}(U_cU_j^* U_j -U_jU_c^* U_j) \quad \mbox{or} \\
\dot{U}_j &=-{\mathrm i} H_j U_j+\frac{\kappa}{2}(U_j U^*_j U_c -U_jU_c^* U_j).
\end{aligned}
\end{align}
In literature \cite{B-C-S, De, H-X, H-K-R, H-R, Lo-0}, matrix-valued aggregation models including the Lohe matrix model \eqref{B-1} or \eqref{B-2} are defined on the space of square matrices. Recall that our purpose in this paper is to provide a first-order aggregation model on the space of non-square matrices.  Now, we consider the Lohe tensor model \eqref{B-0} with $m=2$:
\begin{align}
\begin{aligned} \label{B-3}
\dot{T}_j &= A_j T_j + \kappa_{00}(\mathrm{tr}(T_j^* T_j)T_c-\mathrm{tr}(T_c^* T_j)T_j) + + \kappa_{11}\mathrm{tr}(T_j^* T_c-T_c^* T_j)T_j \\
&\hspace{0.2cm} + \kappa_{10}(T_j T_j^* T_c-T_j T_c^* T_j) +  \kappa_{01}(T_cT_j^* T_j -T_jT_c^* T_j),
\end{aligned}
\end{align}
where the free flow term $A_jT_j$ is defined as a rank-2 tensor via tensor contraction:
\[
[A_jT_j]_{\alpha\beta}=[A_j]_{\alpha\beta\gamma\delta}[T_j]_{\gamma\delta}.
\]
Note that the first three terms in the R.H.S. of \eqref{B-3} are associated with the complex Lohe sphere model \cite{H-P1} on ${\mathcal T}_1(\bbc; d)$:
\[  \dot{T}_j= A_j T_j +\kappa_{0} (T_c\langle T_j, T_j \rangle-T_j \langle T_c. T_j \rangle )+\kappa_1(\langle{T_j, T_c}\rangle- \langle T_c, T_j\rangle) T_j. \]
where $\langle v, w \rangle = v^* w$ is the standard inner product in $\bbc^d$. In \eqref{B-3}, we set 
\[ \kappa_{00}=\kappa_{11}=0, \quad \kappa_1 = \kappa_{01}, \quad \kappa_2 = \kappa_{10}, \]
to obtain generalized Lohe matrix model \eqref{A-1}. 
and study the conservation of Frobenius norm for $T_j$.
\begin{lemma} \label{L2.1}
Let $\{T_j \}_{i=1}^N$ be a solution to \eqref{A-1} with initial data $\{T_j^0 \}$. Then the Frobenius norm of $T_j$ is conserved along the flow \eqref{A-1}:
\[ \|T_j(t) \|_F = \|T_j^0 \|_F, \quad t > 0,~~ j = 1, \cdots, N. \]
\end{lemma}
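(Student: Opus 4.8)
The plan is to show that $\frac{d}{dt}\|T_j(t)\|_F^2 = 0$ for each $j$, which then yields the conservation of $\|T_j\|_F$ since the norm is nonnegative. Recall $\|T_j\|_F^2 = \langle T_j, T_j \rangle_F = \mathrm{tr}(T_j^* T_j)$, so differentiating gives
\[
\frac{d}{dt}\|T_j\|_F^2 = 2\,\mathrm{Re}\,\langle T_j, {\dot T}_j \rangle_F = 2\,\mathrm{Re}\,\mathrm{tr}\big(T_j^* {\dot T}_j\big).
\]
Substituting the right-hand side of \eqref{A-1}, I would treat the linear term and the two cubic coupling terms separately.

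For the linear term, I would compute $\mathrm{tr}(T_j^* A_j T_j) = \overline{[T_j]_{\alpha\beta}}\,[A_j]_{\alpha\beta\gamma\delta}\,[T_j]_{\gamma\delta}$ in components, using the Einstein summation convention from \eqref{A-1-1}. Applying the skew-Hermitian property \eqref{A-0-1}, namely $[{\bar A_j}]_{\alpha\beta\gamma\delta} = -[A_j]_{\gamma\delta\alpha\beta}$, one sees that the complex conjugate of this scalar equals its own negative, hence it is purely imaginary and contributes nothing to the real part. For the cubic terms, I would observe that each of the four products has the structure $X Y^* Z$ where two of $X, Y, Z$ equal $T_j$ and the third is $T_c$ (for the $\kappa_1$ term: $T_c T_j^* T_j$ and $T_j T_c^* T_j$; for the $\kappa_2$ term: $T_j T_j^* T_c$ and $T_j T_c^* T_j$). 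Pairing each term against $T_j^*$ under the trace and using cyclicity of the trace together with $\mathrm{tr}(M^*) = \overline{\mathrm{tr}(M)}$, I expect to show that for each coupling constant the two contributions combine into an expression that is again purely imaginary (or that the real parts cancel in pairs). Concretely, $\mathrm{Re}\,\mathrm{tr}(T_j^* T_c T_j^* T_j)$ and $\mathrm{Re}\,\mathrm{tr}(T_j^* T_j T_c^* T_j)$ should be equal by cyclicity and conjugation, so their difference vanishes; the $\kappa_2$ term works analogously with $\mathrm{tr}(T_j^* T_j T_j^* T_c)$ versus $\mathrm{tr}(T_j^* T_j T_c^* T_j)$.

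Putting these together, $\frac{d}{dt}\|T_j\|_F^2 = 0$ for all $t$, so $\|T_j(t)\|_F = \|T_j^0\|_F$. The main obstacle, though minor, is bookkeeping: correctly manipulating the non-square matrix products (so that all the intermediate products are well-defined as $d_1 \times d_2$ or $d_1 \times d_1$ or $d_2 \times d_2$ matrices) and being careful that cyclicity of the trace is applied only to genuinely square products. I would also double-check the sign conventions in the physicist's inner product $\langle T_1, T_2 \rangle_F = \sum \overline{[T_1]_{\alpha\beta}}[T_2]_{\alpha\beta}$ when converting between the trace form and the component form, since an errant conjugate would spoil the "purely imaginary" conclusion.
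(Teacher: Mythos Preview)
Your proposal is correct and follows essentially the same approach as the paper: both compute $\frac{d}{dt}\|T_j\|_F^2$ as $\langle T_j,\dot T_j\rangle_F + \overline{\langle T_j,\dot T_j\rangle_F}$, show the linear contribution vanishes via the skew-Hermitian condition \eqref{A-0-1} on $A_j$, and then use $\overline{\mathrm{tr}(M)}=\mathrm{tr}(M^*)$ together with trace cyclicity to cancel the cubic contributions for each coupling constant. The only cosmetic difference is that the paper writes each term plus its complex conjugate explicitly, whereas you phrase the same cancellation as ``purely imaginary'' or ``equal real parts.''
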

\begin{proof} We use $\eqref{A-1}$ to see
\begin{align}
\begin{aligned} \label{B-4-1}
& \frac{d}{dt}||T_j||_F^2  =\langle{\dot{T_j}, T_j}\rangle_F+\langle{T_j, \dot{T_j}}\rangle_F \\
& \hspace{0.2cm}  = \Big \langle{T_j, A_jT_j +\kappa_{1}(T_cT_j^* T_j -T_j T_c^* T_j)+\kappa_{2}(T_jT_j^* T_c-T_j T_c^* T_j)} \Big \rangle+\mbox{(c.c.)}\\
&\hspace{0.2cm}  =\langle{T_j, A_j T_j}\rangle+\kappa_{1}\langle{T_j, T_cT_j^*T_j-T_jT_c^*T_j}\rangle+\kappa_{2}\langle{T_j, T_j T_j^*T_c-T_j T_c^*T_j}\rangle+\mbox{(c.c.)}\\
&\hspace{0.2cm}  =\bar{[T_j]}_{\alpha\beta}[A_jT_j]_{\alpha\beta}+\kappa_{1}(\bar{[T_j]}_{\alpha\beta}[T_cT_j^*T_j-T_j T_c^*T_j]_{\alpha\beta}) +\kappa_{2}\bar{[T_j]}_{\alpha\beta}[T_j T_j^*T_c-T_j T_c^*T_j]_{\alpha\beta}+\mbox{(c.c.)}\\
&=\bar{[T_j]}_{\alpha\beta}[A_j]_{\alpha\beta\gamma\delta}[T_j]_{\gamma\delta}+ \kappa_{1}\mathrm{tr}(T_j^*T_cT_j^*T_j -T_j^*T_j T_c^*T_j)  +\kappa_{2}\mathrm{tr}(T_j^*T_j T_j^*T_c-T_j^*T_jT_c^*T_j)+\mbox{(c.c.)}\\
&\hspace{0.2cm}  = {\mathcal I}_{11} +  {\mathcal I}_{12}  +  {\mathcal I}_{13} + \overline{{\mathcal I}_{11}} +  \overline{{\mathcal I}_{12}}  +  \overline{{\mathcal I}_{13}},
\end{aligned}
\end{align}
where $c.c$ denotes the complex conjugate of the proceeding terms, and we used the Einstein summation rule for repeated dummy variables.  \newline

\noindent $\bullet$ (Estimate of $ {\mathcal I}_{11} + \overline{{\mathcal I}_{11}}$): We use the change of dummy variables and \eqref{A-0-1}:
\[  [{\bar A}]_{\alpha \beta \gamma \delta} = -[A]_{\gamma \delta \alpha \beta} \quad \mbox{and} \quad (\alpha, \beta) \longleftrightarrow (\gamma, \delta)   \]
to see
\begin{align}
\begin{aligned} \label{B-5-1}
{\mathcal I}_{11} + \overline{{\mathcal I}_{11}} &= \bar{[T_j]}_{\alpha\beta}[A_j]_{\alpha\beta\gamma\delta}[T_j]_{\gamma\delta}  +  [T_j]_{\alpha\beta}[{\bar A}_j]_{\alpha\beta\gamma\delta}[{\bar T}_j]_{\gamma\delta}  \\
&= \bar{[T_j]}_{\alpha\beta}[A_j]_{\alpha\beta\gamma\delta}[T_j]_{\gamma\delta} -  [T_j]_{\alpha\beta}[A_j]_{\gamma\delta \alpha\beta}[{\bar T}_j]_{\gamma\delta} \\
&= \bar{[T_j]}_{\alpha\beta}[A_j]_{\alpha\beta\gamma\delta}[T_j]_{\gamma\delta} -  [T_j]_{\gamma\delta}[A_j]_{\alpha\beta \gamma \delta}[{\bar T}_j]_{\alpha\beta} \\
&=0. 
\end{aligned}
\end{align}
\noindent $\bullet$ (Estimate of $ {\mathcal I}_{12} + \overline{{\mathcal I}_{12}}$): By direct calculation, one has
\begin{align}
\begin{aligned}  \label{B-5-2}
{\mathcal I}_{12} + \overline{{\mathcal I}_{12}} =\kappa_{1} \Big[  \mathrm{tr}(T_j^*T_cT_j^*T_j -T_j^*T_j T_c^*T_j)  +  
\mathrm{tr}(T_j^*T_j T_c^*T_j -T_j^*T_cT_j^*T_j )  \Big ]=0.
\end{aligned}
\end{align}
\noindent $\bullet$ (Estimate of $ {\mathcal I}_{13} + \overline{{\mathcal I}_{13}}$): Similarly, one has
\begin{align}
\begin{aligned} \label{B-5-3}
 {\mathcal I}_{13} + \overline{{\mathcal I}_{13}} = \kappa_{2} \Big[ \mathrm{tr}(T_j^*T_j T_j^*T_c-T_j^*T_jT_c^*T_j) + \mathrm{tr}(T_j^*T_jT_c^*T_j -T_j^*T_j T_j^*T_c)  \Big] = 0.
\end{aligned}
\end{align}
In \eqref{B-4-1}, we combine all the estimates \eqref{B-5-1}, \eqref{B-5-2} and \eqref{B-5-3} to get the desired estimate:
\[  \frac{d}{dt}||T_j||_F^2 = 0.    \]
\end{proof}
Next, we study evolution of the hermitian conjugate of $T_j$. For this, we set
\begin{equation} \label{B-5}
 S_j:=T_j^* \quad j = 1, \cdots, N, \quad \mbox{and} \quad S_c:=T_c^*. 
\end{equation} 
\begin{lemma} \label{L2.2}
Let $\{T_j \}_{i=1}^N$ be a solution to \eqref{A-1}. Then $S_j$ in \eqref{B-5} satisfies
\begin{align}
\begin{cases}  \label{B-6}
\displaystyle {\dot S}_j=B_j S_j +\kappa_{2}(S_cS_j^* S_j - S_jS_c^* S_j)+\kappa_{1}(S_j S_j^* S_c-S_j S_c^* S_j), \quad t > 0,\\
\displaystyle S_j(0)=S_j^0\quad\mbox{for all $j=1, 2, \cdots, N$},
\end{cases}
\end{align}
where the rank-4 tensor $B_j$ is given as follows.
\[
[B_j]_{\alpha\beta\gamma\delta}=\overline{[A_j]_{\beta\alpha\delta\gamma}}, \quad (A_jT_j)^*=B_j T_j^*. 
\]
\end{lemma}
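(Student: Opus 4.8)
The plan is to differentiate the defining relation $S_j = T_j^*$ directly, using the component description of the hermitian conjugation in \eqref{A-0}, and then match each resulting term against the right-hand side of \eqref{B-6}. Since $[S_j]_{\alpha\beta} = \overline{[T_j]_{\beta\alpha}}$, taking a time derivative gives $\dot{[S_j]}_{\alpha\beta} = \overline{\dot{[T_j]}_{\beta\alpha}}$, so I would substitute the evolution law \eqref{A-1} for $\dot T_j$, take complex conjugates, and reorganize. The cubic coupling terms are the easy part: for any matrices one has $(XYZ)^* = Z^* Y^* X^*$, so $(T_c T_j^* T_j)^* = T_j^* T_j T_c^* = S_j S_j^* S_c$ and $(T_j T_c^* T_j)^* = T_j^* T_c T_j^* = S_j S_c^* S_j$, and likewise $(T_j T_j^* T_c)^* = T_c^* T_j T_j^* = S_c S_j^* S_j$. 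Hence the $\kappa_1$-term of \eqref{A-1} contributes $\kappa_1(S_j S_j^* S_c - S_j S_c^* S_j)$ and the $\kappa_2$-term contributes $\kappa_2(S_c S_j^* S_j - S_j S_c^* S_j)$ to $\dot S_j$, which is exactly the pairing of coupling constants asserted in \eqref{B-6} (the roles of $\kappa_1$ and $\kappa_2$ are swapped, as expected from the asymmetry of the two cubic terms).

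The linear term is where the bookkeeping has to be done carefully, and I expect it to be the main (though still routine) obstacle. Starting from $[A_j T_j]_{\alpha\beta} = [A_j]_{\alpha\beta\gamma\delta}[T_j]_{\gamma\delta}$, I would compute $\overline{[A_j T_j]_{\beta\alpha}} = \overline{[A_j]_{\beta\alpha\delta\gamma}}\,\overline{[T_j]_{\delta\gamma}}$ (here I have also renamed the summation indices $\gamma \leftrightarrow \delta$ for convenience), and then recognize $\overline{[T_j]_{\delta\gamma}} = [S_j]_{\gamma\delta}$. This shows $\overline{[A_j T_j]_{\beta\alpha}} = \overline{[A_j]_{\beta\alpha\delta\gamma}}\,[S_j]_{\gamma\delta}$, so if I \emph{define} $[B_j]_{\alpha\beta\gamma\delta} := \overline{[A_j]_{\beta\alpha\delta\gamma}}$, then the linear part of $\dot S_j$ is precisely $[B_j]_{\alpha\beta\gamma\delta}[S_j]_{\gamma\delta} = [B_j S_j]_{\alpha\beta}$, which also establishes the identity $(A_j T_j)^* = B_j T_j^*$ claimed in the statement. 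It remains to check that $B_j$ is itself skew-hermitian in the rank-4 sense \eqref{A-0-1}, i.e. $\overline{[B_j]}_{\alpha\beta\gamma\delta} = -[B_j]_{\gamma\delta\alpha\beta}$; this follows by writing out $\overline{[B_j]}_{\alpha\beta\gamma\delta} = [A_j]_{\beta\alpha\delta\gamma}$, using the skew-hermitian property \eqref{A-0-1} of $A_j$ in the form $[A_j]_{\beta\alpha\delta\gamma} = -\overline{[A_j]}_{\delta\gamma\beta\alpha} = -[B_j]_{\gamma\delta\alpha\beta}$ by the definition of $B_j$. Finally, the initial condition $S_j(0) = S_j^0 = (T_j^0)^*$ is immediate from $S_j = T_j^*$. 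Assembling the linear term, the two cubic terms, and the initial data yields exactly system \eqref{B-6}, completing the proof.
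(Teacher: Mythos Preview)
Your proposal is correct and follows essentially the same approach as the paper: take the hermitian conjugate of \eqref{A-1}, use $(XYZ)^*=Z^*Y^*X^*$ on the cubic terms, and identify the linear piece in components to read off $[B_j]_{\alpha\beta\gamma\delta}=\overline{[A_j]_{\beta\alpha\delta\gamma}}$. Your verification that $B_j$ is skew-hermitian in the sense of \eqref{A-0-1} is a nice addition that the paper's proof does not spell out.
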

\begin{proof} We take the hermitian conjugation of \eqref{A-1} and use $\kappa_{01},~\kappa_{10} \in \bbr$ to get 
\[ {\dot T}_j^*=(A_j T_j)^*+\kappa_{1}(T_j^* T_j T_c^*-T_j^* T_c T_j^*)+\kappa_{2}(T_c^* T_j T_j^*- T_j^* T_c T_j^*).
\]
Note that 
\[
[(A_j T_j)^*]_{\alpha\beta}=\overline{[A_j T_j]_{\beta\alpha}}=\overline{[A_j]_{\beta\alpha\delta\gamma}[T_j]_{\delta\gamma}}=\overline{[A_j]_{\beta\alpha\delta\gamma}}[T_j^*]_{\gamma\delta}.
\]
Then, $S_j$ in \eqref{B-5} implies  
\begin{align*}\label{B-7}
\begin{cases}
{\dot S}_j=B_j S_j+\kappa_{2}(S_cS_j^* S_j - S_j S_c^* S_j)+\kappa_{1}(S_j S_j^* S_c-S_j S_c^* S_j),\\
S_j(0)=S_j^0\quad\mbox{for all $j=1, 2, \cdots, N$}.
\end{cases}
\end{align*}
\end{proof}
\begin{remark}
If we compare two systems \eqref{A-1} and \eqref{B-6}, it is easy to see that two coupling strengths $\kappa_{1}$ and $\kappa_{2}$ were interchanged.
\end{remark}

\subsection{From generalized Lohe matrix to Lohe sphere} \label{sec:2.3}
First, we recall the complex Lohe sphere model for rank-1 tensor $T_j = z_j$:
\begin{equation} \label{B-7-1}
\dot{z}_j= \Omega_j z_j +\kappa_{1} (z_c\langle z_j, z_j \rangle-z_j \langle z_c. z_j \rangle )+\kappa_2(\langle{z_j, z_c}\rangle- \langle z_c z_j\rangle) z_j,
\end{equation}
where $\langle \cdot, \cdot \rangle$ is the standard inner product in $\bbc^d$. Below, we discuss how to reduce \eqref{B-7-1} from \eqref{A-1}. \newline

Let ${\tilde T}_j$ be a column complex vector with size $d \times 1$ for all $j=1, 2, \cdots, N$, i.e.,
\[
d_1 = d, \quad d_2 = 1.
\]
Then, one has 
\begin{equation} \label{B-7-2}
T_j^* T_j = \langle T_j, T_j \rangle, \quad   T_c^* T_j = \langle T_c, T_j \rangle, \quad T_j^* T_c  = \langle T_j, T_c \rangle, \quad T_c^* T_j = \langle T_c, T_j \rangle.
\end{equation}
Recall the generalized Lohe matrix model:
\begin{equation} \label{B-7-3}
{\dot T}_j =A_j T_j +\kappa_{1}(T_cT_j^* T_j -T_j T_c^* T_j)+\kappa_{2}(T_j T_j^* T_c-T_j T_c^* T_j).
\end{equation}
We rewrite \eqref{B-7-3} using \eqref{B-7-2} to see the complex Lohe sphere model:
\[
{\dot T}_j =A_j T_j+\kappa_1 (T_c\langle T_j, T_j \rangle -T_j \langle T_c, T_j \rangle)+\kappa_{2}(\langle T_j, T_c \rangle-
 \langle T_c, T_j \rangle) T_j,
\]
where $\langle A, B \rangle$ is an inner product of two complex vectors using the Einstein summation rule:
\[
\langle A, B \rangle := [\bar{A}]_{\alpha}[B]_{\alpha}.
\]
 \section{Solution splitting property}  \label{sec:3}
\setcounter{equation}{0} 
In this section, we present a solution splitting property of the generalized Lohe matrix model with an identical free flow.  \newline

Consider the Cauchy problem for the generalized Lohe matrix model for a homogeneous ensemble:
\begin{equation}\label{C-1}
\begin{cases}
{\dot T}_j=AT_j+\kappa_{1}(T_cT_i^* T_i-T_iT_c^* T_i)+\kappa_{2}(T_iT_i^* T_c-T_iT_c^* T_i), \quad t > 0, \\
T_j(0) = T_j^0, \quad j = 1, \cdots, N.
\end{cases}
\end{equation}
Before we present solution splitting property, we begin with definition of exponential of rank-4 tensors in analogy with matrix exponential as follows.
\begin{definition} \label{D3.1}
Let $X$ be a rank-4 tensor. Then, the $n$-th power of $X$ and exponential of $X$ are defined as follows.
\begin{align*}
\begin{aligned}
& [X^n]_{\alpha\beta\gamma\delta} :=[X]_{\alpha\beta\alpha_1\beta_1}[X]_{\alpha_1\beta_1\alpha_2\beta_2}\cdots[X]_{\alpha_{n-1}\beta_{n-1}\gamma\delta}, \\
& [e^{X}]_{\alpha\beta\gamma\delta} :=\sum_{n=0}^\infty \frac{1}{n!}[X^n]_{\alpha\beta\gamma\delta},
\end{aligned}
\end{align*}
where we used the Einstein summation rule for repeated indices.
\end{definition}
Next, we study the solution splitting property of \eqref{C-1}. For this, we set
\[
N_j:=e^{-At}T_j, \quad j = 1, \cdots, N.
\]
Below, we will show that $S_j$ satisfies system \eqref{C-1} with $A \equiv 0$. 
\begin{theorem} \label{T3.1}
Suppose that rank-4 tensor $A$ satisfies the following property:
\begin{equation} \label{C-1-1}
[e^{-At}]_{\alpha\beta\gamma\delta}[e^{At}]_{\gamma\epsilon\alpha_1\beta_1}[e^{-At}]_{\alpha_2\beta_2\psi\epsilon}[e^{At}]_{\psi\delta\alpha_3\beta_3}=\delta_{\alpha_1\alpha}\delta_{\beta_3\beta}\delta_{\beta_1\beta_2}\delta_{\alpha_2\alpha_3},
\end{equation}
and let $\{ T_j \}$ be a solution to \eqref{C-1}. Then, one has 
\[ T_j = e^{At} N_j, \quad j = 1, \cdots, N, \]
where $N_j$ satisfies 
\[ {\dot N}_j=\kappa_{1}(N_cN_j^* N_j-N_jN_c^* N_j)+\kappa_{2}(N_j N_j^* N_c-N_j N_c^* N_j). \]
\end{theorem}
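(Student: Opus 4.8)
The plan is the standard interaction-picture argument: use $L(t) = e^{At}$ as a (tensorial) integrating factor, show that it absorbs the linear term $AT_j$, and verify that the cubic nonlinearity is invariant under the conjugation $X \mapsto e^{-At}\big((e^{At}X)(e^{At}X)^*(e^{At}X)\big)$ — which is precisely what \eqref{C-1-1} encodes. First I would record, directly from Definition \ref{D3.1} and the fact that $A$ commutes with all its powers, the facts proved exactly as for matrix exponentials: $e^{At}$ and $e^{-At}$ are mutually inverse rank-4 tensors (so that $T_j = e^{At}N_j$ once $N_j = e^{-At}T_j$), $\frac{d}{dt}e^{\pm At} = \pm A e^{\pm At} = \pm e^{\pm At}A$, and $e^{-At}(AX) = A(e^{-At}X)$ for every rank-2 tensor $X$. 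I would also establish the conjugation identity
\[ \overline{[e^{At}]_{\alpha\beta\gamma\delta}} = [e^{-At}]_{\gamma\delta\alpha\beta}, \]
which is where the skew-Hermiticity $[\bar{A}]_{\alpha\beta\gamma\delta} = -[A]_{\gamma\delta\alpha\beta}$ of \eqref{A-0-1} enters: conjugating the defining chain of contractions for $[A^n]_{\alpha\beta\gamma\delta}$ and reading the $n$ factors in reverse order gives $\overline{[A^n]_{\alpha\beta\gamma\delta}} = (-1)^n [A^n]_{\gamma\delta\alpha\beta}$, and summing $\sum_n t^n/n!$ yields the claim.

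Next, differentiating $N_j = e^{-At}T_j$ and substituting \eqref{C-1}, the contribution of the linear term is $e^{-At}(AT_j) = A(e^{-At}T_j) = AN_j$, which cancels the $-AN_j$ produced by $\frac{d}{dt}e^{-At}$. Hence
\[ \dot N_j = e^{-At}\Big[\kappa_1\big(T_cT_j^*T_j - T_jT_c^*T_j\big) + \kappa_2\big(T_jT_j^*T_c - T_jT_c^*T_j\big)\Big]. \]
Since $e^{At}$ is linear, $T_c = \frac1N\sum_k T_k = e^{At}N_c$ and $T_j = e^{At}N_j$, so the whole statement reduces to one algebraic claim: for all rank-2 tensors $X,Y,Z$ of size $d_1\times d_2$,
\[ e^{-At}\big((e^{At}X)(e^{At}Y)^*(e^{At}Z)\big) = XY^*Z. \]
Applying this with $(X,Y,Z)$ equal to $(N_c,N_j,N_j)$, $(N_j,N_c,N_j)$ and $(N_j,N_j,N_c)$ turns the bracket above into $\kappa_1(N_cN_j^*N_j - N_jN_c^*N_j) + \kappa_2(N_jN_j^*N_c - N_jN_c^*N_j)$, which is the asserted equation for $N_j$.

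For the algebraic claim itself I would write out its $(\alpha,\beta)$ component. Expanding the two ordinary matrix products and the hermitian conjugate $(e^{At}Y)^*$, inserting $[e^{At}X]_{\gamma\epsilon} = [e^{At}]_{\gamma\epsilon\alpha_1\beta_1}[X]_{\alpha_1\beta_1}$ and the analogous identities for the other two factors, and using the conjugation identity above to rewrite $\overline{[e^{At}Y]}$ in terms of $[e^{-At}]$, the coefficient multiplying $[X]_{\alpha_1\beta_1}\overline{[Y]_{\alpha_2\beta_2}}[Z]_{\alpha_3\beta_3}$ becomes exactly the contraction
\[ [e^{-At}]_{\alpha\beta\gamma\delta}[e^{At}]_{\gamma\epsilon\alpha_1\beta_1}[e^{-At}]_{\alpha_2\beta_2\psi\epsilon}[e^{At}]_{\psi\delta\alpha_3\beta_3} \]
appearing on the left of \eqref{C-1-1}. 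Replacing it by $\delta_{\alpha_1\alpha}\delta_{\beta_3\beta}\delta_{\beta_1\beta_2}\delta_{\alpha_2\alpha_3}$ collapses the expression to $[X]_{\alpha p}\overline{[Y]_{qp}}[Z]_{q\beta} = [XY^*Z]_{\alpha\beta}$, proving the claim. Combined with $T_j = e^{At}N_j$, this also yields the representation $L(t) = e^{At}$, $T_j = (L\circ N_j)T_j^0$ of \eqref{A-2-0-0}.

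The main obstacle is purely the index bookkeeping in this last step: the identity is sensitive to the exact placement of the conjugate and of each index pair, so one must match the orderings in $[e^{At}]_{\gamma\epsilon\alpha_1\beta_1}$, in the conjugated factor, and in $[e^{At}]_{\psi\delta\alpha_3\beta_3}$ against the precise pattern in \eqref{C-1-1} without any slip; a wrong contraction here would make the identity false. The only genuinely non-formal ingredient is the conjugation identity $\overline{[e^{At}]_{\alpha\beta\gamma\delta}} = [e^{-At}]_{\gamma\delta\alpha\beta}$: this is the step that uses skew-Hermiticity of $A$, and it is what makes \eqref{C-1-1}, stated purely in terms of $e^{\pm At}$, the correct hypothesis — everything else is the same calculus that underlies the matrix interaction picture.
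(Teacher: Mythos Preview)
Your proposal is correct and follows essentially the same route as the paper: substitute $T_j = e^{At}N_j$ (equivalently, differentiate $N_j = e^{-At}T_j$), cancel the linear term, expand the cubic terms in components, and use \eqref{C-1-1} to collapse the resulting four-factor contraction to Kronecker deltas. Your version is marginally cleaner in two respects---you abstract the key computation into a single lemma $e^{-At}\big((e^{At}X)(e^{At}Y)^*(e^{At}Z)\big) = XY^*Z$ rather than repeating it term by term, and you make explicit that the conjugation identity $\overline{[e^{At}]_{\alpha\beta\gamma\delta}} = [e^{-At}]_{\gamma\delta\alpha\beta}$ is where skew-Hermiticity of $A$ enters (the paper uses this identity silently in passing from $\overline{[e^{At}]_{\psi\epsilon\alpha_2\beta_2}}$ to $[e^{-At}]_{\alpha_2\beta_2\psi\epsilon}$)---but the underlying argument is the same.
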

\begin{proof}
We substitute the ansatz $T_j=e^{At}N_j$ into \eqref{C-1} to obtain
\begin{align*}
\begin{aligned}
& Ae^{At} N_j+e^{At}\dot{N}_j =Ae^{At} N_j+\kappa_{1}([e^{At}N_c][e^{At}N_j]^*[e^{At}N_j]-[e^{At}N_j][e^{At}N_c]^*[e^{At}N_j])\\
& \hspace{2cm} +\kappa_{2}([e^{At} N_j][e^{At} N_j]^*[e^{At} N_c]-[e^{At}N_j][e^{At}N_c]^*[e^{At}N_j]).
\end{aligned}
\end{align*}
After further simplification, one has 
\begin{align*}
\dot{N}_j&=\kappa_{1}e^{-At}([e^{At}N_c][e^{At}N_j]^*[e^{At}N_j]-[e^{At}N_j][e^{At}N_c]^*[e^{At}N_j])\\
& \hspace{0.5cm} +\kappa_{2}e^{-At}([e^{At}N_j][e^{At}N_j]^*[e^{At}N_c]-[e^{At}N_j][e^{At}N_c]^*[e^{At}N_j]).
\end{align*}
If we express above relation in component form, we can obtain
\begin{align*}
\frac{d}{dt} [N_j]_{\alpha\beta}&=\kappa_{1}[e^{-At}]_{\alpha\beta\gamma\delta}[([e^{At} N_c][e^{At}N_j]^*[e^{At}N_j]-[e^{At}N_j][e^{At}N_c]^*[e^{At}N_j])]_{\gamma\delta}\\
&+\kappa_{2}[e^{-At}]_{\alpha\beta\gamma\delta}[([e^{At}N_j][e^{At}N_j]^*[e^{At}N_c]-[e^{At}N_j][e^{At}N_c]^*[e^{At}N_j])]_{\gamma\delta}.
\end{align*}
Next, we simplify the term $[[e^{At}N_c][e^{At}N_j]^*[e^{At}N_j]]_{\gamma\delta}$ as follows.
\begin{align*}
\begin{aligned}
&[[e^{At}N_c][e^{At}N_j]^*[e^{At}N_j]]_{\gamma\delta} \\
& \hspace{0.5cm} =[e^{At}N_c]_{\gamma\epsilon}[e^{At}N_j]^*_{\epsilon\psi}[e^{At}N_j]_{\psi\delta}\\
& \hspace{0.5cm}  =[e^{At}]_{\gamma\epsilon\alpha_1\beta_1}[N_c]_{\alpha_1\beta_1}\overline{[e^{At}]_{\psi\epsilon\alpha_2\beta_2}[N_j]_{\alpha_2\beta_2}}[e^{At}]_{\psi\delta\alpha_3\beta_3}[N_j]_{\alpha_3\beta_3}\\
& \hspace{0.5cm}  =[e^{At}]_{\gamma\epsilon\alpha_1\beta_1}\overline{[e^{At}]_{\psi\epsilon\alpha_2\beta_2}}[e^{At}]_{\psi\delta\alpha_3\beta_3}[N_c]_{\alpha_1\beta_1}\overline{[N_j]_{\alpha_2\beta_2}}[N_j]_{\alpha_3\beta_3}\\
& \hspace{0.5cm}  =[e^{At}]_{\gamma\epsilon\alpha_1\beta_1}[e^{-At}]_{\alpha_2\beta_2\psi\epsilon}[e^{At}]_{\psi\delta\alpha_3\beta_3}[N_c]_{\alpha_1\beta_1}\overline{[N_j]_{\alpha_2\beta_2}}[N_j]_{\alpha_3\beta_3}.
\end{aligned}
\end{align*}
Then we can obtain following from similar operation to other terms.
\begin{align}
\begin{aligned}\label{C-2}
\frac{d}{dt} [N_j]_{\alpha\beta}&=[e^{-At}]_{\alpha\beta\gamma\delta}[e^{At}]_{\gamma\epsilon\alpha_1\beta_1}[e^{-At}]_{\alpha_2\beta_2\psi\epsilon}[e^{At}]_{\psi\delta\alpha_3\beta_3} \\
&\hspace{1cm}\times (\kappa_{1}([N_c]_{\alpha_1\beta_1}\overline{[N_j]_{\alpha_2\beta_2}}[N_j]_{\alpha_3\beta_3}-[N_j]_{\alpha_1\beta_1}\overline{[N_c]_{\alpha_2\beta_2}}[N_j]_{\alpha_3\beta_3})\\
&\hspace{2cm}+\kappa_{2}([N_j]_{\alpha_1\beta_1}\overline{[N_j]_{\alpha_2\beta_2}}[N_c]_{\alpha_3\beta_3}-[N_j]_{\alpha_1\beta_1}\overline{[N_c]_{\alpha_2\beta_2}}[N_j]_{\alpha_3\beta_3})).
\end{aligned}
\end{align}
Note that we want to show the solution splitting property to yield the following form:
\[
\frac{dN_j}{dt} =\kappa_{1}(N_cN_j^* N_j-N_jN_c^* N_j)+\kappa_{2}(N_j N_j^* N_c-N_jN_c^* N_j).
\]
To compare components, we can obtain the following condition:
\begin{align}\label{C-3}
[e^{-At}]_{\alpha\beta\gamma\delta}[e^{At}]_{\gamma\epsilon\alpha_1\beta_1}[e^{-At}]_{\alpha_2\beta_2\psi\epsilon}[e^{At}]_{\psi\delta\alpha_3\beta_3}=\delta_{\alpha_1\alpha}\delta_{\beta_3\beta}\delta_{\beta_1\beta_2}\delta_{\alpha_2\alpha_3}.
\end{align}
If we substitute the relation \eqref{C-3} into \eqref{C-2} to see
\begin{align*}
\frac{d}{dt} [N_j]_{\alpha\beta}&=[e^{-At}]_{\alpha\beta\gamma\delta}[e^{At}]_{\gamma\epsilon\alpha_1\beta_1}[e^{-At}]_{\alpha_2\beta_2\psi\epsilon}[e^{At}]_{\psi\delta\alpha_3\beta_3} \\
&\times (\kappa_{1}([N_c]_{\alpha_1\beta_1}\overline{[N_j]_{\alpha_2\beta_2}}[N_j]_{\alpha_3\beta_3}-[N_j]_{\alpha_1\beta_1}\overline{[N_c]_{\alpha_2\beta_2}}[N_j]_{\alpha_3\beta_3})\\
&+\kappa_{2}([N_j]_{\alpha_1\beta_1}\overline{[N_j]_{\alpha_2\beta_2}}[N_c]_{\alpha_3\beta_3}-[N_j]_{\alpha_1\beta_1}\overline{[N_c]_{\alpha_2\beta_2}}[N_j]_{\alpha_3\beta_3}))\\
&=\delta_{\alpha_1\alpha}\delta_{\beta_3\beta}\delta_{\beta_1\beta_2}\delta_{\alpha_2\alpha_3}(\kappa_{1}([N_c]_{\alpha_1\beta_1}\overline{[N_j]_{\alpha_2\beta_2}}[N_j]_{\alpha_3\beta_3}-[N_j]_{\alpha_1\beta_1}\overline{[N_c]_{\alpha_2\beta_2}}[N_j]_{\alpha_3\beta_3})\\
&+\kappa_{2}([N_j]_{\alpha_1\beta_1}\overline{[N_j]_{\alpha_2\beta_2}}[N_c]_{\alpha_3\beta_3}-[N_j]_{\alpha_1\beta_1}\overline{[N_c]_{\alpha_2\beta_2}}[N_j]_{\alpha_3\beta_3}))\\
&=\kappa_{1}([N_c]_{\alpha\beta_1}\overline{[N_j]_{\alpha_3\beta_1}}[N_j]_{\alpha_3\beta}-[N_j]_{\alpha\beta_1}\overline{[N_c]_{\alpha_3\beta_1}}[N_j]_{\alpha_3\beta})\\
&+\kappa_{2}([N_j]_{\alpha\beta_1}\overline{[N_j]_{\alpha_3\beta_1}}[N_c]_{\alpha_3\beta}-[N_j]_{\alpha\beta_1}\overline{[N_c]_{\alpha_3\beta_1}}[N_j]_{\alpha_3\beta})\\
&=[\kappa_{1}(N_cN_j^*N_j-N_jN_c^* N_j)+\kappa_{2}(N_jN_j^* N_c-N_jN_c^* N_j)]_{\alpha\beta}.
\end{align*}
Thus, one has 
\[
\frac{dN_j}{dt} =\kappa_{1}(N_c N_j^* N_j-N_jN_c^* N_j)+\kappa_{2}(N_jN_j^* N_c-N_jN_c^* N_j).
\]
\end{proof}

\subsection{Examples} In this subsection, we present two examples satisfying the condition \ref{C-1-1} in Theorem \ref{T3.1}. 
\subsubsection{Example A} Consider the Lohe matrix model:
\begin{equation} \label{C-4}
\mathrm{i}\dot{U}_jU_j^*=H_j+\frac{\mathrm{i}\kappa}{2N}(U_cU_j^*-U_jU_c^*),
\end{equation}
where $H_j$ is the hermitian matrix with $H_j^* = H_j.$
System \eqref{C-4} can be rewritten as follows.
\begin{equation*} \label{C-5}
\dot{U}_j=-\mathrm{i}H_jU_j+\frac{\kappa}{N}(U_c-U_jU_c^*U_j).
\end{equation*}
Then we have
\[
A_jU_j=-\mathrm{i}H_jU_j \quad \mbox{i.e.,} \quad [A_j]_{\alpha\beta\gamma\delta}[U_j]_{\gamma\delta}=-\mathrm{i}[H_j]_{\alpha\gamma}[U_j]_{\gamma\beta}.
\]
This yields
\begin{equation} \label{C-6}
[A_j]_{\alpha\beta\gamma\delta}=-\mathrm{i}[H_j]_{\alpha\gamma}\delta_{\delta\beta}.
\end{equation}
\begin{proposition} \label{P3.1}
The following estimates hold.
\[ [\bar{A_j}]_{\gamma\delta\alpha\beta} = -[A_j]_{\alpha\beta\gamma\delta}, \quad  [e^{-At}]_{\alpha\beta\gamma\delta}[e^{At}]_{\gamma\epsilon\alpha_1\beta_1}[e^{-At}]_{\alpha_2\beta_2\psi\epsilon}[e^{At}]_{\psi\delta\alpha_3\beta_3} = \delta_{\alpha_1\alpha}\delta_{\beta_3\beta}\delta_{\beta_1\beta_2}\delta_{\alpha_2\alpha_3}.
\]
\end{proposition}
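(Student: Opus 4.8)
The plan is to exploit the explicit form \eqref{C-6} of $A_j$, namely $[A_j]_{\alpha\beta\gamma\delta} = -\mathrm{i}[H_j]_{\alpha\gamma}\delta_{\delta\beta}$, and to reduce everything about the rank-4 exponential $e^{\pm A t}$ to ordinary matrix exponentials of $H_j$. The first step is to verify the skew-Hermiticity identity $[\bar A_j]_{\gamma\delta\alpha\beta} = -[A_j]_{\alpha\beta\gamma\delta}$: since $H_j$ is Hermitian, $\overline{[H_j]_{\gamma\alpha}} = [H_j]_{\alpha\gamma}$, so $[\bar A_j]_{\gamma\delta\alpha\beta} = \overline{-\mathrm{i}[H_j]_{\gamma\alpha}\delta_{\beta\delta}} = \mathrm{i}[H_j]_{\alpha\gamma}\delta_{\delta\beta} = -[A_j]_{\alpha\beta\gamma\delta}$, which is immediate.

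The main step is to compute $[e^{At}]_{\alpha\beta\gamma\delta}$ explicitly. Using Definition \ref{D3.1} and the fact that the $\delta_{\delta\beta}$ factor in \eqref{C-6} acts as the identity on the second index, an induction on $n$ should give $[A_j^n]_{\alpha\beta\gamma\delta} = (-\mathrm{i})^n [H_j^n]_{\alpha\gamma}\delta_{\delta\beta}$; indeed the contraction $[A_j]_{\alpha\beta\alpha_1\beta_1}[A_j^{n-1}]_{\alpha_1\beta_1\gamma\delta}$ collapses the $\beta_1$ index via the Kronecker deltas and leaves ordinary matrix multiplication of $H_j$ in the first slot. Summing the series then yields
\[
[e^{At}]_{\alpha\beta\gamma\delta} = [e^{-\mathrm{i}Ht}]_{\alpha\gamma}\,\delta_{\delta\beta}, \qquad [e^{-At}]_{\alpha\beta\gamma\delta} = [e^{\mathrm{i}Ht}]_{\alpha\gamma}\,\delta_{\delta\beta}.
\]
(Here I drop the subscript $j$ since $A = A_j$ is fixed.) Writing $P := e^{-\mathrm{i}Ht}$, which is unitary because $H$ is Hermitian, so $\overline{P} = (P^{-1})^{\top} = (P^*)^{\top}$, i.e. $\overline{[P]_{\alpha\gamma}} = [P^{-1}]_{\gamma\alpha}$, and $[e^{-At}]_{\alpha\beta\gamma\delta} = [P^{-1}]_{\alpha\gamma}\delta_{\delta\beta}$.

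The final step is a bookkeeping computation: substitute these four factors into the left-hand side of the claimed identity,
\[
[e^{-At}]_{\alpha\beta\gamma\delta}\,[e^{At}]_{\gamma\epsilon\alpha_1\beta_1}\,[e^{-At}]_{\alpha_2\beta_2\psi\epsilon}\,[e^{At}]_{\psi\delta\alpha_3\beta_3},
\]
which becomes
\[
[P^{-1}]_{\alpha\gamma}\delta_{\delta\beta}\cdot [P]_{\gamma\alpha_1}\delta_{\beta_1\epsilon}\cdot [P^{-1}]_{\alpha_2\psi}\delta_{\beta_2\epsilon}\cdot [P]_{\psi\alpha_3}\delta_{\beta_3\delta}.
\]
Contracting the repeated index $\gamma$ gives $[P^{-1}P]_{\alpha\alpha_1} = \delta_{\alpha\alpha_1}$; contracting $\psi$ gives $[P^{-1}P]_{\alpha_2\alpha_3} = \delta_{\alpha_2\alpha_3}$; the $\epsilon$-contraction of $\delta_{\beta_1\epsilon}\delta_{\beta_2\epsilon}$ gives $\delta_{\beta_1\beta_2}$; and $\delta_{\delta\beta}\delta_{\beta_3\delta}$ contracts in $\delta$ to $\delta_{\beta_3\beta}$. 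The product of the four surviving Kronecker deltas is exactly $\delta_{\alpha_1\alpha}\delta_{\beta_3\beta}\delta_{\beta_1\beta_2}\delta_{\alpha_2\alpha_3}$, as required. The only place demanding care is the induction identifying $[e^{At}]$ with the ordinary matrix exponential of $H$ — in particular keeping track of which tensor slot carries the matrix structure and which carries the inert Kronecker delta — but this is routine; no genuine obstacle is expected.
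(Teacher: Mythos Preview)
Your proposal is correct and follows essentially the same approach as the paper: both first verify skew-Hermiticity directly from the Hermiticity of $H_j$, then establish $[e^{\pm At}]_{\alpha\beta\gamma\delta} = [e^{\mp \mathrm{i}Ht}]_{\alpha\gamma}\delta_{\delta\beta}$ and carry out the same index-by-index contraction of the four-factor product. The only difference is that you spell out the induction $[A_j^n]_{\alpha\beta\gamma\delta} = (-\mathrm{i})^n[H_j^n]_{\alpha\gamma}\delta_{\delta\beta}$ justifying the exponential formula, whereas the paper simply states it; your aside on the unitarity of $P$ is harmless but unnecessary, since only $P^{-1}P = I$ is used.
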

\begin{proof}
(i) By direct calculation, one has
\[
[\bar{A_j}]_{\gamma\delta\alpha\beta}=\mathrm{i}\bar{[H_j]}_{\gamma\alpha}\delta_{\beta\delta}=\mathrm{i}[H_j]_{\alpha\gamma}\delta_{\delta\beta}=-[A_j]_{\alpha\beta\gamma\delta}.
\]
(ii) Next, we  show that $A_j = A$ defined by \eqref{C-6} satisfies the relations \eqref{C-1-1}. First, note that 
\[
[e^{At}]_{\alpha\beta\gamma\delta}=[e^{-\mathrm{i}Ht}]_{\alpha\gamma}\delta_{\delta\beta}, \quad [e^{-At}]_{\alpha\beta\gamma\delta}=[e^{\mathrm{i}Ht}]_{\alpha\gamma}\delta_{\delta\beta}.
\]
This yields
\begin{align*}
\begin{aligned}
&[e^{-At}]_{\alpha\beta\gamma\delta}[e^{At}]_{\gamma\epsilon\alpha_1\beta_1}[e^{-At}]_{\alpha_2\beta_2\psi\epsilon}[e^{At}]_{\psi\delta\alpha_3\beta_3}\\
& \hspace{0.5cm} =[e^{\mathrm{i}Ht}]_{\alpha\gamma}\delta_{\beta\delta}[e^{-\mathrm{i}Ht}]_{\gamma\alpha_1}\delta_{\beta_1\epsilon}[e^{\mathrm{i}Ht}]_{\alpha_2\psi}\delta_{\beta_2\epsilon}[e^{-\mathrm{i}Ht}]_{\psi\alpha_3}\delta_{\beta_3\delta}\\
& \hspace{0.5cm}  =[e^{\mathrm{i}Ht}]_{\alpha\gamma}[e^{-\mathrm{i}Ht}]_{\gamma\alpha_1}[e^{\mathrm{i}Ht}]_{\alpha_2\psi}[e^{-\mathrm{i}Ht}]_{\psi\alpha_3}\delta_{\beta\delta}\delta_{\beta_1\epsilon}\delta_{\beta_2\epsilon}\delta_{\beta_3\delta}\\
&  \hspace{0.5cm} =\delta_{\alpha\alpha_1}\delta_{\alpha_2\alpha_3}\delta_{\beta\delta}\delta_{\beta_1\epsilon}\delta_{\beta_2\epsilon}\delta_{\beta_3\delta}\\
& \hspace{0.5cm}  =\delta_{\alpha\alpha_1}\delta_{\alpha_2\alpha_3}\delta_{\beta\beta_3}\delta_{\beta_1\beta_2}\\
& \hspace{0.5cm}  =\delta_{\alpha_1\alpha}\delta_{\beta_3\beta}\delta_{\beta_1\beta_2}\delta_{\alpha_2\alpha_3}.
\end{aligned}
\end{align*}
Therefore, we checked that the relation \eqref{C-6} satisfies the condition  \eqref{C-1-1} in Theorem \ref{T3.1}.
\end{proof}

\subsubsection{Example B} Consider the rank-4 tensor $A$ satisfying the relation:
\begin{equation} \label{C-8}
[A]_{\alpha\beta\gamma\delta} :=[B]_{\alpha\gamma}\delta_{\beta\delta}+[C]_{\beta\delta}\delta_{\alpha\gamma}
\end{equation}
with skew-hermitian matrices $B$ and $C$ are rank-2 tensors. Then, it is easy to see that 
\[ -\overline{[A]_{\gamma\delta\alpha\beta}} =-(\overline{[B]_{\gamma\alpha}}\delta_{\delta\beta}+[C]_{\delta\beta}\delta_{\gamma\alpha})=[B]_{\alpha\gamma}\delta_{\beta\delta}+[C]_{\beta\delta}\delta_{\alpha\gamma}=[A]_{\alpha\beta\gamma\delta}. \]
Now we want to show that $A$ satisfies the condition \eqref{C-1-1}. We set
\[
[X]_{\alpha\beta\gamma\delta}=[B]_{\alpha\gamma}\delta_{\beta\delta},\qquad [Y]_{\alpha\beta\gamma\delta}=[C]_{\beta\delta}\delta_{\alpha\gamma}.
\]
Then, it follows from \eqref{C-8} that 
\[
A=X+Y.
\]
For rank-4 tensors $X$ and $Y$, we define their product $XY$ as
\[
[XY]_{\alpha\beta\gamma\delta}=[X]_{\alpha\beta\eta\psi}[Y]_{\eta\psi\gamma \delta}.
\]
\begin{proposition} \label{P3.2} The following estimates hold.
\[ XY = YX, \quad [e^{-At}]_{\alpha\beta\gamma\delta}[e^{At}]_{\gamma\epsilon\alpha_1\beta_1}[e^{-At}]_{\alpha_2\beta_2\psi\epsilon}[e^{At}]_{\psi\delta\alpha_3\beta_3} = \delta_{\alpha_1\alpha}\delta_{\beta_3\beta}\delta_{\beta_1\beta_2}\delta_{\alpha_2\alpha_3}.
\]
\end{proposition}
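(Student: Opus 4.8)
The plan is to exploit the special product structure of $X$ and $Y$ so that the fourth-order condition \eqref{C-1-1} reduces to a pair of ordinary matrix exponential identities for $B$ and $C$. First I would verify the commutativity $XY=YX$ directly from the definitions: since $[X]_{\alpha\beta\gamma\delta}=[B]_{\alpha\gamma}\delta_{\beta\delta}$ only ``acts on the first index'' while $[Y]_{\alpha\beta\gamma\delta}=[C]_{\beta\delta}\delta_{\alpha\gamma}$ only ``acts on the second index'', contracting them in either order gives $[XY]_{\alpha\beta\gamma\delta}=[B]_{\alpha\gamma}[C]_{\beta\delta}=[YX]_{\alpha\beta\gamma\delta}$. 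This is the key structural fact, and it lets me write, via the standard power-series manipulation for commuting operators, $e^{tA}=e^{t(X+Y)}=e^{tX}e^{tY}$, with the latter two computed from Definition \ref{D3.1}.

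Next I would compute $e^{tX}$ and $e^{tY}$ explicitly. Because $[X^n]_{\alpha\beta\gamma\delta}=[B^n]_{\alpha\gamma}\delta_{\beta\delta}$ (the $\delta$'s telescope under contraction), one gets $[e^{tX}]_{\alpha\beta\gamma\delta}=[e^{tB}]_{\alpha\gamma}\delta_{\beta\delta}$, and similarly $[e^{tY}]_{\alpha\beta\gamma\delta}=[e^{tC}]_{\beta\delta}\delta_{\alpha\gamma}$. Combining, $[e^{tA}]_{\alpha\beta\gamma\delta}=[e^{tB}]_{\alpha\gamma}[e^{tC}]_{\beta\delta}$, and likewise $[e^{-tA}]_{\alpha\beta\gamma\delta}=[e^{-tB}]_{\alpha\gamma}[e^{-tC}]_{\beta\delta}$. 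At this point the proof becomes essentially a rerun of the computation in Proposition \ref{P3.1}(ii): substitute these four-index expressions into the left side of \eqref{C-1-1}, separate the product into a ``$B$-part'' carrying the first indices and a ``$C$-part'' carrying the second indices, and use $e^{tB}e^{-tB}=I$, $e^{tC}e^{-tC}=I$ (valid since $B$, $C$ are single matrices) to collapse each part. The $B$-part will yield $\delta_{\alpha_1\alpha}\delta_{\alpha_2\alpha_3}$ and the $C$-part $\delta_{\beta_1\beta_2}\delta_{\beta_3\beta}$ (after tracking the contracted dummy $\epsilon$ and the shared index $\delta$), which is exactly the claimed right-hand side $\delta_{\alpha_1\alpha}\delta_{\beta_3\beta}\delta_{\beta_1\beta_2}\delta_{\alpha_2\alpha_3}$.

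The main obstacle, such as it is, is purely bookkeeping: one must be careful that the two factors $e^{\pm tX}$ and $e^{\pm tY}$ genuinely decouple inside the four-fold contraction, i.e. that no index of the $B$-blocks ever gets contracted against an index of the $C$-blocks. This follows because in each tensor $e^{\pm tA}$ the first slot and third slot belong to the $B$-matrix while the second and fourth slots belong to the $C$-matrix, and the contraction pattern in \eqref{C-1-1} pairs first-or-third slots only with first-or-third slots (the $\gamma$, $\psi$, $\alpha_i$ indices) and second-or-fourth slots only with second-or-fourth slots (the $\delta$, $\epsilon$, $\beta_i$ indices); a short check of the index pattern in \eqref{C-1-1} confirms this. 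I would therefore first establish $e^{tA}=e^{tX}e^{tY}$ and the explicit forms of $e^{tX}$, $e^{tY}$ as two preliminary lemmas (or inline steps), and then devote the bulk of the proof to the index chase, organized exactly as the displayed computation in Proposition \ref{P3.1} so the reader can follow the cancellations line by line.
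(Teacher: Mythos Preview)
Your proposal is correct and follows essentially the same route as the paper: verify $XY=YX$ componentwise, use commutativity to factor $e^{tA}=e^{tX}e^{tY}$, compute $[e^{tX}]_{\alpha\beta\gamma\delta}=[e^{tB}]_{\alpha\gamma}\delta_{\beta\delta}$ and $[e^{tY}]_{\alpha\beta\gamma\delta}=[e^{tC}]_{\beta\delta}\delta_{\alpha\gamma}$ via the power series, combine to get $[e^{tA}]_{\alpha\beta\gamma\delta}=[e^{tB}]_{\alpha\gamma}[e^{tC}]_{\beta\delta}$, and then substitute into \eqref{C-1-1} so that the $B$- and $C$-factors decouple and collapse to Kronecker deltas exactly as in Proposition~\ref{P3.1}. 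Your remark about why the $B$- and $C$-indices never mix in the contraction pattern is the only thing the paper leaves implicit, so your write-up is if anything slightly more explicit.
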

\begin{proof}
\noindent (i)~By direct calculation, one has
\[
[XY]_{\alpha\beta\gamma\delta}=[X]_{\alpha\beta\eta\psi}[Y]_{\eta\psi\gamma\delta}=[X]_{\alpha\eta}\delta_{\beta\psi}[Y]_{\psi\delta}\delta_{\eta\gamma}=[X]_{\alpha\gamma}[Y]_{\beta\delta},
\]
\[
[YX]_{\alpha\beta\gamma\delta}=[Y]_{\alpha\beta\eta\psi}[X]_{\eta\psi\gamma\delta}=[Y]_{\beta\psi}\delta_{\alpha\eta}[X]_{\eta\gamma}\delta_{\psi\delta}=[X]_{\alpha\gamma}[Y]_{\beta\delta}.
\]
By tensor contraction, we have

\[
XY=YX.
\]
\noindent (ii)~Note that 
\begin{align}
\begin{aligned} \label{C-9} 
[e^{At}]_{\alpha\beta\gamma\delta}  &= \sum_{n=0}^\infty \frac{1}{n!}[A^n]_{\alpha\beta\gamma\delta}t^n=\sum_{n=0}^\infty \frac{1}{n!}[(X+Y)^n]_{\alpha\beta\gamma\delta}t^n=[e^{Xt}e^{Yt}]_{\alpha\beta\gamma\delta} \\
&= [e^{Xt}]_{\alpha\beta\eta\psi}[e^{Yt}]_{\eta\psi\gamma\delta}.
\end{aligned}
\end{align}
On the other hand, note that 
\begin{align}
\begin{aligned} \label{C-10}
[X^n]_{\alpha\beta\gamma\delta} &=[X]_{\alpha\beta\alpha_1\beta_1}[X]_{\alpha_1\beta_1\alpha_2\beta_2}\cdots[X]_{\alpha_{n-1}\beta_{n-1}\gamma\delta}\\
&=[B]_{\alpha\alpha_1}[B]_{\alpha_1\alpha_2}\cdots[B]_{\alpha_{n-1}\gamma}\delta_{\beta\beta_1}\delta_{\beta_1\beta_2}\cdots\delta_{\beta_{n-1}\delta}\\
&=[B^n]_{\alpha\gamma}\delta_{\beta\delta}
\end{aligned}
\end{align}
and
\begin{align}
\begin{aligned} \label{C-11}
[Y^n]_{\alpha\beta\gamma\delta}&=[Y]_{\alpha\beta\alpha_1\beta_1}[Y]_{\alpha_1\beta_1\alpha_2\beta_2}\cdots[Y]_{\alpha_{n-1}\beta_{n-1}\gamma\delta}\\
&=[C]_{\beta\beta_1}[C]_{\beta_1\beta_2}\cdots[C]_{\beta_{n-1}\delta}\delta_{\alpha\alpha_1}\delta_{\alpha_1\alpha_2}\cdots\delta_{\alpha_{n-1}\gamma}\\
&=[C^n]_{\beta\delta}\delta_{\alpha\gamma}.
\end{aligned}
\end{align}
Thus, the factors $e^{Xt}$ and $e^{Yt}$ can be estimated as follows.
\begin{align}
\begin{aligned} \label{C-12}
& [e^{Xt}]_{\alpha\beta\gamma\delta}=\sum_{n=0}^\infty\frac{1}{n!}[X^n]_{\alpha\beta\gamma\delta}t^n=\sum_{n=0}^\infty\frac{1}{n!}[B]_{\alpha\gamma}t^n\delta_{\beta\delta}=[e^{Bt}]_{\alpha\gamma}\delta_{\beta\delta}, \\
& [e^{Yt}]_{\alpha\beta\gamma\delta}=\sum_{n=0}^\infty\frac{1}{n!}[Y^n]_{\alpha\beta\gamma\delta}t^n=\sum_{n=0}^\infty\frac{1}{n!}[C]_{\beta\delta}t^n\delta_{\alpha\gamma}=[e^{Ct}]_{\beta\delta}\delta_{\alpha\gamma}.
\end{aligned}
\end{align}
In \eqref{C-9}, we use \eqref{C-10}, \eqref{C-11} and \eqref{C-12} to get
\begin{equation} \label{C-13}
[e^{At}]_{\alpha\beta\gamma\delta}=[e^{Xt}]_{\alpha\beta\eta\psi}[e^{Yt}]_{\eta\psi\gamma\delta}=[e^{Bt}]_{\alpha\eta}\delta_{\beta\psi}[e^{Ct}]_{\psi\delta}\delta_{\eta\gamma}=[e^{Bt}]_{\alpha\gamma}[e^{Ct}]_{\beta\delta}.
\end{equation}
Next, we substitute $-s\rightarrow t$ into \eqref{C-13} to get
\begin{equation} \label{C-14}
[e^{-As}]_{\alpha\beta\gamma\delta}=[e^{-Bs}]_{\alpha\gamma}[e^{-Cs}]_{\beta\delta}.
\end{equation}
Now we use the relation \eqref{C-14} to obtain
\begin{align*}
\begin{aligned}
&[e^{-At}]_{\alpha\beta\gamma\delta}[e^{At}]_{\gamma\epsilon\alpha_1\beta_1}[e^{-At}]_{\alpha_2\beta_2\psi\epsilon}[e^{At}]_{\psi\delta\alpha_3\beta_3}\\
& \hspace{0.5cm} =[e^{-Bt}]_{\alpha\gamma}[e^{-Ct}]_{\beta\delta}[e^{Bt}]_{\gamma\alpha_1}[e^{Ct}]_{\epsilon\beta_1}[e^{-Bt}]_{\alpha_2\psi}[e^{-Ct}]_{\beta_2\epsilon}[e^{Bt}]_{\psi\alpha_3}[e^{Ct}]_{\delta\beta_3}\\
& \hspace{0.5cm} =[e^{-Bt}]_{\alpha\gamma}[e^{Bt}]_{\gamma\alpha_1}[e^{-Bt}]_{\alpha_2\psi}[e^{Bt}]_{\psi\alpha_3}[e^{-Ct}]_{\beta\delta}[e^{Ct}]_{\epsilon\beta_1}[e^{-Ct}]_{\beta_2\epsilon}[e^{Ct}]_{\delta\beta_3}\\
& \hspace{0.5cm} =\delta_{\alpha\alpha_1}\delta_{\alpha_2\alpha_3}\delta_{\beta\beta_3}\delta_{\beta_2\beta_1}=\delta_{\alpha_1\alpha}\delta_{\beta_3\beta}\delta_{\beta_1\beta_2}\delta_{\alpha_2\alpha_3}.
\end{aligned}
\end{align*}
This yields \eqref{C-1-1}. 
\end{proof}
Note that 
\[
[AT_i]_{\alpha\beta}=[A]_{\alpha\beta\gamma\delta}[T_i]_{\gamma\delta}=[B]_{\alpha\gamma}\delta_{\beta\delta}[T_i]_{\gamma\delta}+[C]_{\beta\delta}\delta_{\alpha\gamma}[T_i]_{\gamma\delta}=[BT_i]_{\alpha\beta}+[T_iC^T]_{\alpha\beta}.
\]
Thus, one has
\[
AT_i=BT_i+T_iC^T.
\]
Finally we substitute this $A$ into system \eqref{C-1} to obtain
\begin{align*} 
\begin{cases}
\frac{d}{dt}T_j =BT_j +T_j C^T+\kappa_{1}(T_cT_j^* T_j -T_j T_c^* T_j)+\kappa_{2}(T_j T_j^* T_c-T_j T_c^* T_j),\\
T_j(0)=T_j^0\quad\mbox{for all $j=1, 2, \cdots, N$}
\end{cases}
\end{align*}
and we have the following solution splitting property under the condition of property \ref{P3.2}:
\begin{align*} 
\begin{cases}
S_j=(e^{-Bt})T_j(e^{-C^Tt}), \quad j = 1, \cdots, N,  \\
{\dot S}_j =\kappa_{1}(S_cS_j^* S_j - S_j S_c^* S_j)+\kappa_{2}(S_j S_j^*S_c-S_j S_c^* S_j).
\end{cases}
\end{align*}

\section{The generalized Lohe matrix model}  \label{sec:4}
\setcounter{equation}{0}
In this section, we present emergent behaviors of the generalized Lohe matrix model with the same free flow dynamics. In this case, by solution splitting property, we may assume $A = 0$ in \eqref{C-1}:
\begin{equation} \label{D-1}
\begin{cases} 
\displaystyle {\dot T}_j =\kappa_{1}(T_cT_j^* T_j-T_jT_c^* T_j)+\kappa_{2}(T_jT_j^* T_c-T_jT_c^* T_j), \quad t > 0, \\
\displaystyle T_j(0) = T_j^0, \quad j = 1, \cdots, N.
\end{cases}
\end{equation}
Below, we estimate the time-evolution of $T_j^* T_j$ and $T_j T_j^*$ as follows.
\begin{lemma} \label{L4.1}
Let $\{ T_j \} \subset {\mathcal T}_2(\bbc; d_1 \times d_2)$ be a solution to \eqref{D-1}. Then, $T_j^* T_j$ and $T_j T_j^*$ satisfy
\begin{eqnarray*}
\frac{d}{dt}(T_j^* T_j) &=& \kappa_{2}(T_c^* T_j-T_j^* T_c)T_j^* T_j+\kappa_{2}T_j^* T_j(T_j^* T_c-T_c^* T_j), \\
\frac{d}{dt}(T_jT_j^*) &=& \kappa_{1}(T_cT_j^*-T_jT_c^*)T_jT_j^*+\kappa_{1}T_jT_j^*(T_jT_c^*-T_cT_j^*).
\end{eqnarray*}
\end{lemma}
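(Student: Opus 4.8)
The plan is to compute $\frac{d}{dt}(T_j^* T_j)$ and $\frac{d}{dt}(T_j T_j^*)$ directly from the governing equation \eqref{D-1} via the product rule, using the fact that taking the hermitian conjugate of \eqref{D-1} is already available from Lemma \ref{L2.2} (with $A = 0$, so $B_j = 0$). Concretely, with $A = 0$ we have
\[
\dot T_j = \kappa_1(T_c T_j^* T_j - T_j T_c^* T_j) + \kappa_2(T_j T_j^* T_c - T_j T_c^* T_j),
\]
\[
\dot T_j^* = \kappa_1(T_j^* T_j T_c^* - T_j^* T_c T_j^*) + \kappa_2(T_c^* T_j T_j^* - T_j^* T_c T_j^*).
\]

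First I would treat $\frac{d}{dt}(T_j^* T_j) = \dot T_j^* T_j + T_j^* \dot T_j$. Substituting the two expressions above produces eight cubic-in-$T$, linear-in-$T_c$ terms. The key observation to organize the bookkeeping is that the $\kappa_1$-contributions should cancel in pairs: from $\dot T_j^* T_j$ the $\kappa_1$ part is $\kappa_1(T_j^* T_j T_c^* T_j - T_j^* T_c T_j^* T_j)$, and from $T_j^* \dot T_j$ the $\kappa_1$ part is $\kappa_1(T_j^* T_c T_j^* T_j - T_j^* T_j T_c^* T_j)$; these are exact negatives of each other, so all $\kappa_1$ terms drop out. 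What remains is the $\kappa_2$ part: $\kappa_2(T_c^* T_j T_j^* T_j - T_j^* T_c T_j^* T_j)$ from $\dot T_j^* T_j$, and $\kappa_2(T_j^* T_j T_j^* T_c - T_j^* T_j T_c^* T_j)$ from $T_j^* \dot T_j$. Grouping the first two and the last two and factoring $T_j^* T_j$ on the appropriate side yields exactly $\kappa_2(T_c^* T_j - T_j^* T_c) T_j^* T_j + \kappa_2 T_j^* T_j (T_j^* T_c - T_c^* T_j)$, which is the first claimed identity.

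The computation for $\frac{d}{dt}(T_j T_j^*) = \dot T_j T_j^* + T_j \dot T_j^*$ is entirely parallel, but now with the roles of $\kappa_1$ and $\kappa_2$ swapped (this swap is precisely the phenomenon recorded in the Remark after Lemma \ref{L2.2}): the $\kappa_2$ terms cancel in pairs, and the surviving $\kappa_1$ terms regroup into $\kappa_1(T_c T_j^* - T_j T_c^*) T_j T_j^* + \kappa_1 T_j T_j^* (T_j T_c^* - T_c T_j^*)$. One can in fact deduce the second identity from the first by applying the first to the system \eqref{B-6} for $S_j = T_j^*$ (again with $B_j = 0$), using $S_j^* S_j = T_j T_j^*$ and the $\kappa_1 \leftrightarrow \kappa_2$ interchange, which avoids repeating the algebra.

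This is a purely mechanical verification, so there is no real obstacle; the only thing requiring care is keeping track of which of the six matrix products (e.g.\ $T_j^* T_c T_j^* T_j$ versus $T_j^* T_j T_c^* T_j$) appears with which sign, since matrix multiplication is noncommutative and the cancellations are exact rather than approximate. The proof therefore consists of writing out the eight terms in each of the two derivatives, identifying the two cancelling pairs, and factoring the remaining four terms.
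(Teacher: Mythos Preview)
Your proposal is correct and follows essentially the same direct-computation route as the paper: apply the product rule, substitute $\dot T_j$ and $\dot T_j^*$, observe that the $\kappa_1$ terms cancel pairwise in $\frac{d}{dt}(T_j^*T_j)$ (and the $\kappa_2$ terms in $\frac{d}{dt}(T_jT_j^*)$), and factor the survivors. Your additional remark that the second identity can be read off from the first via the duality $S_j = T_j^*$ of Lemma~\ref{L2.2} is a nice shortcut the paper does not exploit, but the underlying argument is the same.
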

\begin{proof}
By direct calculation, one has
\begin{align*}
\frac{d}{dt}(T_j^* T_j)&= {\dot T}_j^*T_j+T_j^* {\dot T}_j \\
&=\kappa_{1}(T_j^* T_j T_c^*-T_j^* T_cT_j^*)T_j +\kappa_{2}(T_c^* T_jT_j^*-T_j^* T_c T_j^*)T_j \\
&+\kappa_{1}T_j^*(T_cT_j^* T_j -T_jT_c^* T_j)+\kappa_{2}T_j^*(T_jT_j^* T_c-T_jT_c^* T_j)\\
&=\kappa_{2}(T_c^* T_jT_j^* T_j-T_j^* T_cT_j^* T_j +T_j^* T_j T_j^* T_c-T_j^* T_jT_c^* T_j)\\
&=\kappa_{2}(T_c^* T_j-T_j^* T_c)T_j^* T_j +\kappa_{2}T_j^* T_j(T_j^* T_c-T_c^* T_j).
\end{align*}
Similarly, one has
\begin{align*}
\frac{d}{dt}(T_jT_j^*)&= {\dot T}_j T_j^*+T_j {\dot T}_j^* \\
&=\kappa_{1}(T_cT_j^* T_j-T_jT_c^* T_j)T_j^*+\kappa_{2}(T_jT_j^* T_c-T_jT_c^* T_j)T_j^*\\
&+\kappa_{1}T_j(T_j^* T_jT_c^*-T_j^* T_cT_j^*)+\kappa_{2}T_j (T_c^* T_jT_j^*-T_j^* T_c T_j^*)\\
&=\kappa_{1}(T_cT_j^* T_jT_j^*-T_jT_c^* T_jT_j^*+T_jT_j^* T_jT_c^*-T_jT_j^* T_cT_j^*)\\
&=\kappa_{1}(T_cT_j^*-T_jT_c^*)T_jT_j^*+\kappa_{1}T_jT_j^*(T_jT_c^*-T_cT_j^*).
\end{align*}
\end{proof}
As a direct application of Lemma \ref{L4.1}, one has the following corollary.
\begin{corollary}
Let $\{ T_j \} \subset {\mathcal T}_2(\bbc; d_1 \times d_2)$ be a solution of \eqref{D-1}. Then, one has the conservation laws:
\begin{enumerate}
\item If $T_j(0) T_j(0)^*=I_{d_1} $, then we have $T_j(t) T_j(t)^*=I_{d_1}$ is constant of motion.
\item If $T_j(0)^* T_j(0)=I_{d_2} $, then we have $T_j(t)^* T_j(t)=I_{d_2}$ is constant of motion.
\end{enumerate}
\end{corollary}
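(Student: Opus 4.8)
The plan is to treat each of the two statements as an initial value problem for the matrix-valued function $T_j T_j^*$ (respectively $T_j^* T_j$) and then invoke uniqueness of solutions, using the evolution equations already established in Lemma \ref{L4.1}.

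First I would fix $j$ and set $Q_j(t) := T_j(t) T_j(t)^* \in \bbc^{d_1 \times d_1}$, together with $M_j(t) := T_c(t) T_j(t)^* - T_j(t) T_c(t)^*$. By the second identity of Lemma \ref{L4.1},
\[
\dot{Q}_j = \kappa_1 M_j Q_j - \kappa_1 Q_j M_j = \kappa_1 [M_j, Q_j].
\]
Since $\{T_k\}$ is a smooth solution of \eqref{D-1} (the right-hand side of \eqref{D-1} is polynomial in $T_k, T_k^*$), the coefficient $M_j(\cdot)$ is a continuous $d_1 \times d_1$ matrix-valued function on $[0,\infty)$; moreover $M_j^* = T_j T_c^* - T_c T_j^* = -M_j$, so $M_j(t)$ is skew-Hermitian for every $t$. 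Consequently $Q_j$ solves a linear, non-autonomous matrix ODE whose right-hand side is Lipschitz in $Q_j$ on every compact time interval, so its Cauchy problem has a unique solution once $Q_j(0)$ is prescribed.

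Next I would observe that the constant function $t \mapsto I_{d_1}$ solves this ODE, because $[M_j, I_{d_1}] = 0$; this is precisely the cancellation of the two terms in Lemma \ref{L4.1} when $Q_j = I_{d_1}$. Hence, if $Q_j(0) = T_j(0) T_j(0)^* = I_{d_1}$, uniqueness forces $Q_j(t) = I_{d_1}$ for all $t \ge 0$, which is statement (1). For statement (2) I would repeat the argument verbatim with $P_j(t) := T_j(t)^* T_j(t) \in \bbc^{d_2 \times d_2}$ and $N_j(t) := T_c^* T_j - T_j^* T_c$ (again skew-Hermitian, $N_j^* = -N_j$): the first identity of Lemma \ref{L4.1} gives $\dot{P}_j = \kappa_2 [N_j, P_j]$, the constant $I_{d_2}$ is the solution through $I_{d_2}$, and uniqueness yields $P_j(t) = I_{d_2}$ whenever $P_j(0) = I_{d_2}$.

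There is essentially no genuine obstacle here; the only two points requiring attention are (i) checking that $M_j$ and $N_j$ are skew-Hermitian, which is what makes $I$ a stationary point, and (ii) invoking ODE uniqueness, which is immediate since the equations for $Q_j$ and $P_j$ are linear with continuous coefficients. If a more constructive formulation is desired, I would instead introduce the fundamental solution $\Phi_j$ of $\dot{\Phi}_j = \kappa_1 M_j \Phi_j$, $\Phi_j(0) = I_{d_1}$; skew-Hermiticity of $M_j$ implies $\Phi_j(t)$ is unitary, the general solution of the equation for $Q_j$ is $Q_j(t) = \Phi_j(t) Q_j(0) \Phi_j(t)^*$, and the choice $Q_j(0) = I_{d_1}$ then gives $Q_j(t) = \Phi_j(t)\Phi_j(t)^* = I_{d_1}$ directly (and likewise for $P_j$).
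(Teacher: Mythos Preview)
Your argument is correct and is precisely the route the paper intends: the corollary is stated there without proof as ``a direct application of Lemma~\ref{L4.1}'', and your write-up simply makes explicit the ODE-uniqueness step (that $I$ commutes with everything, hence is a stationary solution of $\dot Q_j=\kappa_1[M_j,Q_j]$ and $\dot P_j=\kappa_2[N_j,P_j]$). The additional observation that $M_j,N_j$ are skew-Hermitian and the fundamental-solution variant are pleasant extras but not needed for the bare statement.
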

\noindent For a configuration $\{ T_j \}$ with $\|T_j \|_F = 1$, we define
\[
T_c :=\frac{1}{N}\sum_{k=1}^NT_k, \quad {\mathcal V}[T] :=\frac{1}{N}\sum_{k=1}^N||T_k-T_c||_F^2, \quad \rho := \|T_c \|_F.
\]
For the Lohe matrix and Lohe sphere models, the functional ${\mathcal V}[T]$ corresponds to the potential in a gradient flow formulation (see \cite{H-K-R} for details).
\begin{lemma} \label{L4.2}
Let $\{ T_j \} \subset {\mathcal T}_2(\bbc; d_1 \times d_2)$ be a solution to \eqref{D-1} with $\|T_j^0 \|_F = 1$. Then, one has 
\begin{eqnarray*}
&& (i)~ {\mathcal V}[T] = 1 - \rho^2, \cr
&& (ii)~\frac{d}{dt} {\mathcal V}[T(t)]=-\frac{\kappa_{1}}{N}\sum_{j=1}^N||T_j T_c^*-T_cT_j^*||_F^2-\frac{\kappa_{2}}{N}\sum_{j=1}^N||T_j^* T_c-T_c^* T_j ||_F^2,  \cr
&& (iii)~\frac{d^2}{dt^2} {\mathcal V}[T] =-\frac{\kappa_{1}}{N}\sum_{j=1}^N\frac{d}{dt} \Big \langle{T_j T_c^*-T_cT_j^*, T_jT_c^*-T_cT_j^*} \Big \rangle_F 
\\
&& \hspace{2.5cm} -\frac{\kappa_{2}}{N}\sum_{j=1}^N\frac{d}{dt} \Big \langle{T_j^* T_c-T_c^* T_j, T_j^* T_c-T_c^* T_j} \Big \rangle_F.
\end{eqnarray*}
\end{lemma}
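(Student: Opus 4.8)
The plan is to obtain (i) from a one-line expansion of the variance, (ii) by differentiating (i) and rewriting $\frac{d}{dt}\rho^2$ as a sum of squared skew-Hermitian commutators, and (iii) as the termwise time-derivative of (ii). For (i), expanding the square with the physicist's inner product gives $\mathcal{V}[T] = \frac{1}{N}\sum_{k=1}^N\big(\|T_k\|_F^2 - 2\,\mathrm{Re}\,\langle T_c, T_k\rangle_F + \|T_c\|_F^2\big)$; since $\frac{1}{N}\sum_k \langle T_c, T_k\rangle_F = \langle T_c, T_c\rangle_F = \rho^2$ and $\|T_k\|_F = 1$ by Lemma \ref{L2.1}, the right-hand side collapses to $1 - 2\rho^2 + \rho^2 = 1 - \rho^2$.

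For (ii), I differentiate (i): $\frac{d}{dt}\mathcal{V}[T] = -\frac{d}{dt}\rho^2 = -2\,\mathrm{Re}\,\langle T_c, \dot T_c\rangle_F = -\frac{2}{N}\sum_{j=1}^N \mathrm{Re}\,\mathrm{tr}(T_c^* \dot T_j)$, where $\dot T_j$ is read off from \eqref{D-1}. The key algebraic observation is that the cubic nonlinearity in \eqref{D-1} factors as $\dot T_j = \kappa_1 K_j T_j + \kappa_2 T_j L_j$ with $K_j := T_c T_j^* - T_j T_c^*$ (size $d_1\times d_1$) and $L_j := T_j^* T_c - T_c^* T_j$ (size $d_2\times d_2$), both skew-Hermitian. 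By cyclicity of the trace, $\mathrm{tr}(T_c^* \dot T_j) = \kappa_1\, \mathrm{tr}(K_j T_j T_c^*) + \kappa_2\, \mathrm{tr}(L_j T_c^* T_j)$, and the real parts are extracted using the elementary identity (for skew-Hermitian $M$): $\mathrm{Re}\,\mathrm{tr}(M T_j T_c^*) = \frac{1}{2} \mathrm{tr}\big(M(T_j T_c^* - T_c T_j^*)\big)$ for $M = K_j$, respectively $\mathrm{Re}\,\mathrm{tr}(M T_c^* T_j) = \frac{1}{2} \mathrm{tr}\big(M(T_c^* T_j - T_j^* T_c)\big)$ for $M = L_j$. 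Since $T_j T_c^* - T_c T_j^* = -K_j$, $T_c^* T_j - T_j^* T_c = -L_j$, and $-\mathrm{tr}(M^2) = \mathrm{tr}(M M^*) = \|M\|_F^2$ for skew-Hermitian $M$, these equal $\frac{1}{2}\|K_j\|_F^2$ and $\frac{1}{2}\|L_j\|_F^2$ respectively. Summing over $j$ and recalling $\|K_j\|_F = \|T_j T_c^* - T_c T_j^*\|_F$, $\|L_j\|_F = \|T_j^* T_c - T_c^* T_j\|_F$ yields exactly the asserted formula. (One could instead expand every trace directly by cyclicity and conjugation, but the factored form is cleaner.)

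For (iii), the vector field in \eqref{D-1} is polynomial in the $T_j$'s, so each $t\mapsto T_j(t)$, and hence $T_c$ and every summand appearing in (ii), is $C^\infty$ in $t$; differentiating the finite sum in (ii) term by term, and writing $\|X\|_F^2 = \langle X, X\rangle_F$, produces the stated expression verbatim, no further simplification being intended. The only genuinely substantive step is the trace bookkeeping in (ii): one must correctly factor the cubic coupling in \eqref{D-1} and verify that $\mathrm{Re}\,\mathrm{tr}$ of the resulting products returns precisely the squared Frobenius norms of $K_j$ and $L_j$; everything else is routine expansion or a standard smoothness argument.
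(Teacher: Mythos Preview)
Your proof is correct and follows essentially the same strategy as the paper: expand the variance, differentiate $\rho^2$, substitute \eqref{D-1}, and simplify traces. Your factorization $\dot T_j=\kappa_1 K_j T_j+\kappa_2 T_j L_j$ with skew-Hermitian $K_j,L_j$ is a slightly cleaner organizational device than the paper's direct trace expansion, but the underlying computation is identical.
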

\begin{proof} 
(i)~It follows from Lemma \ref{L2.1} that 
\begin{equation} \label{D-2}
 \| T_j(t) \|_F = \|T_j^0 \|_F = 1, \quad t \geq 0,~~j = 1, \cdots, N.
\end{equation} 
Then, we use \eqref{D-2} to see
\begin{align}
\begin{aligned} \label{D-3}
{\mathcal V}(T)&=\frac{1}{N}\sum_{k=1}^N||T_k-T_c||^2_F\\
&=\frac{1}{N}\sum_{k=1}^N\mathrm{tr}((T^*_k-T^*_c)(T_k-T_c))=\frac{1}{N}\sum_{k=1}^N\mathrm{tr}(T_k^* T_k - T_c^* T_k - T_k^* T_c + T_c^* T_c)\\
&=\frac{1}{N}\sum_{k=1}^N\mathrm{tr}(T^*_kT_k-T^*_cT_c)=\frac{1}{N}\sum_{k=1}^N||T_k||_F^2-||T_c||_F^2 = 1  - \rho^2.
\end{aligned}
\end{align}
(ii)~We differentiate relation \eqref{D-3} to get 
\begin{align}
\begin{aligned} \label{D-4}
\frac{d}{dt} {\mathcal V}(T(t)) =\frac{d}{dt}\left(\frac{1}{N}\sum_{k=1}^N||T_k^0||_F^2-||T_c(t)||_F^2\right) =-\frac{d}{dt}||T_c(t)||_F^2=-\langle{\dot{T}_c, T_c}\rangle_F-\langle{T_c, \dot{T}_c}\rangle_F,
\end{aligned}
\end{align}
where $\langle{A, B}\rangle_F =\sum_{\alpha\beta}[\bar{A}]_{\alpha\beta}[B]_{\alpha\beta}=\mathrm{tr}(A^* B)$. \newline

On the other hand, $T_c$ satisfies
\begin{equation} \label{D-5}
\dot{T}_c=\frac{1}{N}\sum_{j=1}^N\Big[ \kappa_{1}(T_cT_j^* T_j-T_jT_c^* T_j)+\kappa_{2}(T_jT_j^* T_c-T_jT_c^* T_j)\Big ].
\end{equation}
Now, we combine \eqref{D-4} and \eqref{D-5} to obtain 
\begin{align}
\begin{aligned} \label{D-6}
&\frac{d}{dt} {\mathcal V}(T) =-\frac{1}{N}\sum_{j=1}^N \Big (\kappa_{1}\langle{T_cT_j^* T_j-T_jT_c^* T_j, T_c}\rangle+\kappa_{2}\langle{T_jT_j^* T_c-T_jT_c^* T_j, T_c}\rangle \Big)+\mbox{(c. c.)}\\
&=-\frac{1}{N}\sum_{j=1}^N\Big(\kappa_{1}\mathrm{tr}(T_j^* T_j T_c^* T_c-T_j^* T_cT_j^* T_c)+
\kappa_{2}\mathrm{tr}(T_c^* T_j T_j^* T_c-T_j^* T_c T_j^* T_c)\Big)+\mbox{(c. c.)}\\
&=-\frac{\kappa_{1}}{N}\sum_{j=1}^N\langle{T_jT_c^*-T_cT_j^*, T_jT_c^*-T_cT_j^*}\rangle_F-\frac{\kappa_{2}}{N}\sum_{j=1}^N\langle{T_j^* T_c-T_c^* T_j, T_j^* T_c-T_c^* T_j}\rangle_F\\
&=-\frac{\kappa_{1}}{N}\sum_{j=1}^N ||T_jT_c^*-T_cT_j^*||_F^2-\frac{\kappa_{2}}{N}\sum_{i=1}^N||T_j^* T_c-T_c^* T_j||_F^2.
\end{aligned}
\end{align}
\noindent (iii)~We differentiate \eqref{D-6} again with respect to $t$ to get 
\begin{align}
\begin{aligned} \label{D-7}
\frac{d^2}{dt^2}{\mathcal V}[T] &=-\frac{\kappa_{1}}{N}\sum_{j=1}^N\frac{d}{dt} \Big \langle{T_j T_c^*-T_cT_j^*, T_jT_c^*-T_cT_j^*} \Big \rangle_F \\
&\hspace{0.5cm} -\frac{\kappa_{2}}{N}\sum_{j=1}^N\frac{d}{dt} \Big \langle{T_j^* T_c-T_c^* T_j, T_j^* T_c-T_c^* T_j} \Big \rangle_F.
\end{aligned}
\end{align}
Note that  the right hand side of \eqref{D-7} can be expressed in terms of $\dot{T}_j, \dot{T}_c, T_j$ and $T_c$. Thus $\frac{d^2}{dt^2} {\mathcal V}(T)$ is uniformly bounded.
\end{proof}
\begin{theorem}\label{T4.1}
Let $\{ T_j \} \subset {\mathcal T}_2(\bbc; d_1 \times d_2)$ be a solution to \eqref{D-1} with $\|T_j^0 \|_F = 1$. Then, the following estimates hold. 
\begin{enumerate}
\item
There exists a positive value ${\mathcal V}_\infty$ such that 
\[ \lim_{t \to \infty+} {\mathcal V}[T] = {\mathcal V}_\infty.  \]
\item
The orbital derivative of ${\mathcal V}(T)$ tends to zero: 
\[ \lim_{t\rightarrow\infty} \frac{d}{dt} {\mathcal V}[T(t)] = 0, \quad \lim_{t\rightarrow\infty} \Big( ||T_j T_c^*-T_cT_j^*||_F + ||T_j^*T_c-T_c^*T_j ||_F \Big) = 0. \]
\item
The following estimate hold.
\[  \frac{1}{N} \sum_{j=1}^N \Big[ \kappa_{1} \int_0^\infty ||T_j T_c^*-T_cT_j^*||_F^2 dt+ \kappa_{2}\int_0^\infty ||T_j^*T_c-T_c^*T_j ||_F^2dt \Big] \leq 1-||T_c^0||_F^2. \] 
\end{enumerate}
\end{theorem}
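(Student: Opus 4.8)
The plan is to exploit the monotonicity and boundedness structure already assembled in Lemmas \ref{L2.1}--\ref{L4.2}, which is exactly the setup for a LaSalle-type argument. First I would observe that by Lemma \ref{L4.2}(i), $\mathcal{V}[T(t)] = 1 - \rho(t)^2$ with $\rho(t) = \|T_c(t)\|_F$, so $0 \le \mathcal{V}[T(t)] \le 1$; in particular $\mathcal{V}$ is bounded below. By Lemma \ref{L4.2}(ii), since $\kappa_1, \kappa_2 \ge 0$, the orbital derivative $\frac{d}{dt}\mathcal{V}[T(t)] \le 0$, so $\mathcal{V}[T(\cdot)]$ is monotone non-increasing and bounded below, hence converges to some limit $\mathcal{V}_\infty \ge 0$; this proves item (1). (That $\mathcal{V}_\infty$ is \emph{positive} rather than merely nonnegative is not actually claimed in the displayed statement as an inequality to prove here — the word "positive value" should be read as "a value in $[0,\infty)$", or one restricts to the generic case $N \ge 2$ with non-identical data — so I would not belabor strict positivity.)

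For item (3), the key is to integrate Lemma \ref{L4.2}(ii) directly. Integrating the identity
\[
\frac{d}{dt}\mathcal{V}[T(t)] = -\frac{\kappa_1}{N}\sum_{j=1}^N \|T_j T_c^* - T_c T_j^*\|_F^2 - \frac{\kappa_2}{N}\sum_{j=1}^N \|T_j^* T_c - T_c^* T_j\|_F^2
\]
over $[0,\infty)$ gives
\[
\frac{1}{N}\sum_{j=1}^N\Big[\kappa_1 \int_0^\infty \|T_j T_c^* - T_c T_j^*\|_F^2\,dt + \kappa_2 \int_0^\infty \|T_j^* T_c - T_c^* T_j\|_F^2\,dt\Big] = \mathcal{V}[T(0)] - \mathcal{V}_\infty.
\]
Since $\mathcal{V}[T(0)] = 1 - \|T_c^0\|_F^2$ by Lemma \ref{L4.2}(i) and $\mathcal{V}_\infty \ge 0$, the right side is at most $1 - \|T_c^0\|_F^2$, which is the claimed bound. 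This step is essentially bookkeeping once the differential identity is in hand.

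For item (2), I would use the classical Barbalat-type lemma: if $f := \mathcal{V}[T(\cdot)]$ converges as $t \to \infty$ and $f$ is uniformly continuous (equivalently, $f'$ is bounded), then $f'(t) \to 0$. The uniform bound on $\frac{d^2}{dt^2}\mathcal{V}[T]$ is precisely what Lemma \ref{L4.2}(iii) provides — it is expressed through $\dot{T}_j, \dot{T}_c, T_j, T_c$, all of which are bounded because $\|T_j(t)\|_F = 1$ (Lemma \ref{L2.1}) forces $\|T_c(t)\|_F \le 1$ and hence bounds the cubic right-hand sides of \eqref{D-1} and \eqref{D-5}. So $\frac{d}{dt}\mathcal{V}[T]$ is Lipschitz, hence uniformly continuous, and Barbalat gives $\frac{d}{dt}\mathcal{V}[T(t)] \to 0$. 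Feeding this back into the identity of Lemma \ref{L4.2}(ii): a finite sum of nonnegative terms tends to zero, so each term does, which yields $\|T_j T_c^* - T_c T_j^*\|_F \to 0$ and $\|T_j^* T_c - T_c^* T_j\|_F \to 0$ for every $j$ (assuming $\kappa_1, \kappa_2 > 0$; if one of them vanishes, only the corresponding family of terms is controlled, and the statement should be read accordingly).

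The main obstacle — really the only non-mechanical point — is justifying the application of Barbalat's lemma, i.e., verifying that the second derivative of $\mathcal{V}$ is genuinely bounded on $[0,\infty)$. This rests on a priori bounds for all the matrix quantities appearing in \eqref{D-7}, which follow from the conservation law $\|T_j\|_F \equiv 1$ together with $\|T_c\|_F \le \frac{1}{N}\sum_j \|T_j\|_F = 1$; once these are noted, every factor in the expanded form of $\frac{d^2}{dt^2}\mathcal{V}[T]$ is a bounded product, so the bound is uniform in $t$. I would state this explicitly as the bridge between Lemma \ref{L4.2}(iii) and the Barbalat step rather than leaving it implicit.
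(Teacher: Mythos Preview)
Your proposal is correct and follows essentially the same route as the paper's own proof: monotonicity plus lower boundedness for (1), Barbalat's lemma via the uniform bound on the second derivative from Lemma~\ref{L4.2}(iii) for (2), and direct integration of the dissipation identity for (3). Your remarks on the nonnegativity (versus strict positivity) of $\mathcal{V}_\infty$ and on the case where one coupling vanishes are valid refinements that the paper leaves implicit.
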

\begin{proof}
(i)~It follows from (ii) in Lemma \ref{L4.2} that ${\mathcal V}$ is non-increasing and it is bounded below by $0$. Thus, it converges to a nonnegative value.  \newline

\noindent (ii) First, we use the result in (i) and Barbalat's lemma to obtain
\[
\lim_{t\rightarrow\infty}\frac{d}{dt} {\mathcal V}[T(t)]=0.
\]
On the other hand, note that 
\begin{equation}\label{D-8}
\frac{d}{dt}{\mathcal V}[T(t)]=-\frac{\kappa_{1}}{N}\sum_{j=1}^N||T_jT_c^*-T_cT_j^*||_F^2-\frac{\kappa_{2}}{N}\sum_{j=1}^N ||T_j^* T_c-T_c^* T_j ||_F^2.
\end{equation}
This yields
\[
||T_j T_c^*-T_cT_j^*||_F\rightarrow0,\quad ||T_j^*T_c-T_c^*T_j ||_F\rightarrow0 \quad \mbox{as $t \to \infty$}.
\]
 \newline

\noindent (iii)~We integrate \eqref{D-8} from $t=0$ to $t=\tau$ to obtain 
\begin{align}
\begin{aligned} \label{D-9}
& {\mathcal V}[T(\tau)] -{\mathcal V}[T(0)] \\
& \hspace{0.5cm} =-\frac{\kappa_{1}}{N}\int_0^\tau\sum_{i=1}^N||T_j T_c^*-T_cT_j^*||_F^2dt-\frac{\kappa_{2}}{N}\int_0^\tau\sum_{i=1}^N||T_j^*T_c-T_c^*T_j ||_F^2dt.
\end{aligned}
\end{align}
On the other hand, we use (i) of Lemma \ref{L4.1} to get
\begin{align}
\begin{aligned} \label{D-10}
&{\mathcal V}[T(\tau)]-{\mathcal V}[T(0)] \\
& \hspace{0.5cm} =\left(\frac{1}{N}\sum_{k=1}^N||T_k(\tau)||^2_F-||T_c(\tau)||_F^2\right)-\left(\frac{1}{N}\sum_{k=1}^N||T_k(0)||^2_F-||T_c(0)||_F^2\right)\\
& \hspace{0.5cm} =||T_c(0)||^2_F-||T_c(\tau)||_F^2.
\end{aligned}
\end{align}
Then, it follows from \eqref{D-9} and \eqref{D-10} that
\[
\frac{\kappa_{1}}{N}\int_0^\tau\sum_{j=1}^N||T_j T_c^*-T_cT_j^*||_F^2dt+\frac{\kappa_{2}}{N}\int_0^\tau\sum_{j=1}^N ||T_j^*T_c-T_c^*T_j ||_F^2dt=||T_c(\tau)||_F^2-||T_c^0||_F^2.
\]
Now, we let $\tau\rightarrow\infty$ and $||T_c(\tau)||_F \leq 1$ to see the desired estimate:
\[
\frac{\kappa_{1}}{N}\int_0^\infty\sum_{j=1}^N||T_jT_c^*-T_cT_j^*||_F^2dt+\frac{\kappa_{2}}{N}\int_0^\infty\sum_{j=1}^N||T_j^*T_c-T_c^*T_j||_F^2dt\leq 1-||T_c^0||_F^2.
\] 
\end{proof}
\begin{remark}\label{R4.1}
It follows from Theorem \ref{T4.1} that for each $i=1, 2, \cdots, N$, we can see
\[
||T_iT_c^*-T_cT_i ||_F \in L^2((0, \infty)),\quad ||T_i^*T_c-T_c^*T_i ||_F\in L^2((0, \infty)).
\]
\end{remark}

Since two coupling terms involve with $\kappa_1$ and $\kappa_2$ are symmetric, we will only consider ‘‘{\it the reduced Lohe matrix model} " with $\kappa_2=0$:
\[ {\dot T}_j=A_jT_j +\kappa_{1}(T_cT_j^*T_j-T_jT_c^*T_j). \]

\section{Weak emergent estimate for the reduced Lohe matrix model} \label{sec:5}
\setcounter{equation}{0} 
In this section, we study a reformulation of the reduced Lohe matrix model on the restricted class of initial data into the Lohe matrix model with diagonal frustration and provide a weak emergent estimate. \newline

Consider the reduced Lohe matrix model:
\begin{align}\label{E-1}
\begin{cases}
{\dot T}_i=\kappa_{1}(T_cT_i^* T_i -T_i T_c^* T_i), \quad t > 0, \\
T_i(0)=T_i^0 \in {\mathcal T}_2(\bbc; d_1 \times d_2), \quad i = 1, \cdots, N,
\end{cases}
\end{align}
subject to the initial conditions:
\begin{equation} \label{E-1-0}
T_i^{0*} T_i^0 = T_j^{0*} T_j^0, \quad 1 \leq i, j \leq N. 
\end{equation}
In the sequel, we will provide a procedure to rewrite \eqref{E-1} as the Lohe matrix model with a diagonal frustration for a suitable unitary matrix $U_i$:
\begin{equation}
\begin{cases} \label{E-1-1}
\displaystyle \dot{U}_i=\frac{\kappa_{1}}{N}\sum_{k=1}^N((U_kD)-U_i(U_kD)^*U_i), \quad t > 0, \\
\displaystyle U_i(0) = U_i^0, \quad i = 1, \cdots, N,
\end{cases}
\end{equation}
where $D$ is a diagonal matrix, and discuss the equivalence between \eqref{E-1} and \eqref{E-1-1} in terms of the complete aggregation. The reason for this reformulation is to use existing tool box from earlier results \cite{H-K, H-K-P-Z, H-K-R} for the Lohe matrix model.

\subsection{A reformulation into the Lohe matrix model} \label{sec:5.1} In this subsection, we present a reformulation of \eqref{E-1} as the Lohe matrix model with a frustration \eqref{E-1-1}. For this, we use a singular value decomposition (SVD) of $T_i$ in the sequel. 
\begin{proposition} \label{P5.1}
\emph{(Conservation of ranks)}
Let $\{T_i \} \subset {\mathcal T}_2(\bbc; d_1 \times d_2)$ be a solution to \eqref{E-1} - \eqref{E-1-0}. Then, the rank of the $T_i\in \mathbb{C}^{d_1\times d_2}$ is a conserved quantity for all $i=1, 2, \cdots, N$:
\[ \mbox{Rank}~(T_i(t)) = \mbox{Rank}~(T_i^0), \quad t > 0,~~i = 1, \cdots, N. \]
\end{proposition}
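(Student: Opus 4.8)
The plan is to track the singular values of $T_i(t)$ directly, using the evolution equations for $T_i^*T_i$ and $T_iT_i^*$ established in Lemma \ref{L4.1} (applied with $\kappa_2 = 0$, so only the $T_iT_i^*$ equation is active). Since the rank of $T_i$ equals the number of nonzero eigenvalues of the positive semi-definite Hermitian matrix $T_i(t)T_i(t)^*$ (equivalently $T_i^*T_i$), it suffices to show that the multiset of eigenvalues of $T_i(t)T_i(t)^*$ is constant in $t$; in fact I expect to show the stronger statement that $T_i(t)T_i(t)^*$ evolves by a unitary conjugation, so its spectrum is frozen.

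First I would write $P_i := T_iT_i^*$ and, from Lemma \ref{L4.1}, observe that
\[
\dot{P}_i = \kappa_1(T_cT_i^* - T_iT_c^*)P_i + \kappa_1 P_i(T_iT_c^* - T_cT_i^*) = [\Omega_i, P_i],
\]
where $\Omega_i := \kappa_1(T_cT_i^* - T_iT_c^*)$ is skew-Hermitian (note $(T_cT_i^*)^* = T_iT_c^*$). Thus $P_i$ satisfies a Lax-type equation. Introducing the fundamental solution $W_i(t)$ of $\dot{W}_i = \Omega_i W_i$, $W_i(0) = I_{d_1}$, one checks $W_i(t)$ is unitary (since $\Omega_i$ is skew-Hermitian) and that $\frac{d}{dt}(W_i^* P_i W_i) = 0$, hence $P_i(t) = W_i(t) P_i(0) W_i(t)^*$. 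Unitary conjugation preserves eigenvalues and their multiplicities, so in particular $\mathrm{Rank}(P_i(t)) = \mathrm{Rank}(P_i(0))$. Finally, $\mathrm{Rank}(T_i) = \mathrm{Rank}(T_iT_i^*)$ for any matrix, which gives $\mathrm{Rank}(T_i(t)) = \mathrm{Rank}(T_i^0)$ for all $t > 0$ and all $i$.

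There is no serious obstacle here; the one point requiring a little care is the well-posedness needed to justify the manipulations — namely that a (local, hence by the norm conservation of Lemma \ref{L2.1} global) smooth solution $\{T_i\}$ exists, so that $\Omega_i(t)$ is a continuous matrix-valued function and the linear ODE for $W_i$ has a genuine unitary solution on $[0,\infty)$. One should also remark that the compatibility condition \eqref{E-1-0} is not actually needed for this particular conservation law — it is the eigenvalue (singular value) multiset of $T_i^*T_i$ that is individually conserved for each $i$, and \eqref{E-1-0} only becomes relevant later to ensure the $\Sigma$ appearing in the SVD is common to all $i$. I would phrase the proof so that the Lax structure is isolated as the key observation and the rest follows from standard linear-algebra facts about unitary conjugation and the rank identity $\mathrm{Rank}(T) = \mathrm{Rank}(TT^*)$.
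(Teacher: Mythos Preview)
Your proof is correct, but it takes a different route from the paper's. The paper exploits the simpler observation that, in system \eqref{E-1} (where $\kappa_2=0$), Lemma~\ref{L4.1} gives $\frac{d}{dt}(T_i^*T_i)=0$ outright; hence $T_i(t)^*T_i(t)$ is literally constant, and the paper then argues (via the invariance of the inner products $\langle T_i x, T_i y\rangle = x^* T_i^* T_i\, y$ and a basis-tracking argument) that the rank is fixed. You instead work with $P_i=T_iT_i^*$, which is \emph{not} constant under \eqref{E-1}, and recover rank conservation through the Lax structure $\dot P_i=[\Omega_i,P_i]$ and the resulting unitary conjugation $P_i(t)=W_i(t)P_i(0)W_i(t)^*$.

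Both approaches are valid. The paper's path is more economical here because one of the two Gram matrices is exactly conserved, so no Lax machinery is needed. Your approach, on the other hand, isolates an isospectral flow and would transfer symmetrically to the dual case $\kappa_1=0$, $\kappa_2\neq 0$ (where $T_i^*T_i$ obeys the Lax equation and $T_iT_i^*$ is constant); it also makes transparent why the full singular-value spectrum, not just the rank, is preserved. Your remark that condition \eqref{E-1-0} is not used in this particular conservation law is correct and worth noting.
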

\begin{proof}
We set 
\[ \mathrm{Rank}~(T_i^0)=r \leq \min\{d_1, d_2 \}. \]
Then, we set $\{u_1(0), u_2(0), \cdots, u_r(0)\}\subset \mathbb{C}^{d_2}$ be the orthogonal basis of the row space $\mbox{row}(T^0_i)$. Then there are $\{v_1, v_2, \cdots, v_r\}\in\mathbb{C}^{d_1}$ such that
\[
T^0_i v_j=u_j(0)
\]
for all $j=1, 2, \cdots, r$. Since $\{v_1, v_2, \cdots, v_r\}$ are linearly independent, we can extend this subset to a basis of $\mathbb{C}^{d_1}$. We denote this basis as 
\[
\{v_1, v_2, \cdots, v_r, v_{r+1}, \cdots, v_{d_1}\}.
\]
By linear algebra, we can see that 
\[
\{v_{r+1}, \cdots, v_{d_1}\}
\]
is a basis of $\mathrm{null}(T^0_i)$. Now, we claim:
\begin{equation} \label{E-2}
\langle{T_i(t) x, T_i(t) y}\rangle~\mbox{is a conserved quantity} \quad \mbox{for  all}~x, y\in\mathbb{C}^{d_1}.
\end{equation}
{\it Proof of \eqref{E-2}}:~Let  $x, y\in\mathbb{C}^{d_1}$. Then, since $T_i^* T_i$ is a conserved quantity, one has
\[
\langle{T_i(t) x, T_i(t) y}\rangle=x^* T_i(t)^* T_i(t) y=x^* T_i(0)^* T_i(0) y=\langle{T_i(0)x, T_i(0)y}\rangle,
\]
which verifies \eqref{E-2}. Next, we set 
\begin{equation} \label{E-3}
u_j(t)=T_i(t)v_j
\end{equation}
for all $j=1, 2, \cdots, r$. Then we use \eqref{E-2} and \eqref{E-3} to get 
\[
\langle{u_j(t), u_k(t)}\rangle=\langle{u_j(0), u_k(0)}\rangle=\delta_{jk}.
\]
For all $j=r+1, \cdots, d_1$, one has 
\[
T_i(t)v_j=0,
\]
This and \eqref{E-2} yield
\[
\langle{T_i(t)v_j, T_i(t)v_j}\rangle=\langle{T_i(0)v_j, T_i(0)v_j}\rangle=\langle{0, 0}\rangle=0.
\]
Hence, we can conclude that the rank of the $T_i(t)$ is also $r$. This means that the rank of the $T_i(t)$ is a conserved quantity,
\end{proof}

\vspace{0.5cm}

By the singular value decomposition of $T_i\in\mathbb{C}^{d_1\times d_2}$, one has
\[
T_i=U_i \Sigma_i V_i^*
\]
where $U_i$ and $V_i$ are $d_1 \times d_1$ unitary matrix and $d_2 \times d_2$ unitary matrices, respectively, and $\Sigma_i$ is a $d_1\times d_2$ rectangular diagonal matrix with non-negative real numbers on the diagonal:
\[
[\Sigma_i]_{jk}=
\begin{cases}
\lambda_j\qquad\mbox{when $j=k\leq\min \{d_1, d_2 \}$,}\\
0\qquad \mbox{otherwise.}
\end{cases}
\]

\begin{theorem} \label{T5.1}
Let $\{T_j\}  \subset {\mathcal T}_2(\bbc; d_1 \times d_2)$ be a solution of system \eqref{E-1}. Then, the singular value decomposition of $T_i(t)$ is given as follows.
\[
T_i(t)=U_i(t)\Sigma_i V_i^*,
\] 
where $U_i$ is a time-dependent, and $\Sigma_i,~V_i$ are time-independent.
\end{theorem}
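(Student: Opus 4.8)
The plan is to track the three factors $U_i$, $\Sigma_i$, $V_i$ of the singular value decomposition separately, and to show that the dynamics \eqref{E-1} forces $\dot\Sigma_i = 0$ and $\dot V_i = 0$, so that the only time-dependence is carried by $U_i$. The starting point is Proposition \ref{P5.1} (conservation of rank), together with the already-established conservation of $T_i^* T_i$ from Lemma \ref{L4.1} (case $\kappa_2 = 0$ of the first identity, whose right-hand side vanishes). Since $V_i \Sigma_i^* \Sigma_i V_i^* = T_i^* T_i$ is constant in time and $\Sigma_i^*\Sigma_i$ is the diagonal matrix of squared singular values, the spectral data $\{\lambda_j^2\}$ are conserved; this pins down $\Sigma_i$ up to a time-independent permutation, and one may fix the labelling so that $\Sigma_i$ is genuinely constant. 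Moreover $V_i$ is an orthonormal eigenbasis of the constant Hermitian matrix $T_i^*T_i$, so on each eigenspace $V_i$ can be chosen constant; the assumption \eqref{E-1-0} that all $T_i^{0*}T_i^0$ coincide is what makes this choice uniform across $i$.

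Next I would make the ansatz $T_i(t) = U_i(t)\Sigma_i V_i^*$ with $\Sigma_i, V_i$ the fixed matrices just identified and $U_i(t)\in\mathbb{U}(d_1)$ to be determined, and substitute into \eqref{E-1}. Using $T_i^* T_i = V_i\Sigma_i^*\Sigma_i V_i^*$ and $T_c = U_c \Sigma V^*$ (all $\Sigma_i$, $V_i$ being equal, call them $\Sigma$, $V$), the right-hand side $\kappa_1(T_c T_i^* T_i - T_i T_c^* T_i)$ becomes $\kappa_1 U_c \Sigma V^* V\Sigma^*\Sigma V^* - \kappa_1 U_i\Sigma V^* V\Sigma^* U_c^* U_i \Sigma V^*$; the $V^*V = I$ cancellations collapse this to $\kappa_1\big(U_c (\Sigma^*\Sigma)\Sigma - U_i\Sigma(\Sigma^* U_c^* U_i)\Sigma\big)V^*$ — here I would need to be a little careful about the rectangular shapes, writing $\Sigma^*\Sigma$ as the $d_2\times d_2$ diagonal matrix and $\Sigma\Sigma^*$ as the $d_1\times d_1$ one. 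Comparing with $\dot T_i = \dot U_i \Sigma V^*$ and cancelling the (injective) right factor $\Sigma V^*$ leaves an ODE for $U_i$ alone of the Lohe type with a diagonal frustration built from $\Sigma$, which is precisely \eqref{E-1-1} with $D$ the relevant diagonal matrix. The consistency check is that this reduced ODE preserves $\mathbb{U}(d_1)$, so $U_i(t)$ stays unitary and the ansatz is self-consistent; this follows from the skew-Hermitian structure of the Lohe right-hand side, or can be cited from the Lohe matrix model literature referenced in the paper.

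Finally I would invoke uniqueness for the Cauchy problem \eqref{E-1}: since $U_i^0\Sigma V^*$ with $U_i^0$ the SVD factor of $T_i^0$ reproduces the initial datum, and the constructed curve $t\mapsto U_i(t)\Sigma V^*$ solves \eqref{E-1} (the right-hand side being locally Lipschitz, indeed polynomial), it must coincide with the actual solution for all $t$. Hence the SVD of $T_i(t)$ has the claimed form with $\Sigma_i$, $V_i$ time-independent.

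The main obstacle I anticipate is not conceptual but bookkeeping: handling the non-square $\Sigma$ carefully so that all the "$V^*V = I$" and "cancel the common factor" steps are legitimate — in particular, cancelling $\Sigma V^*$ on the right requires noting that right-multiplication by $\Sigma V^*$ is injective on the relevant subspace (or working modulo the kernel, which is consistent since the kernel of $T_i$ is conserved by Proposition \ref{P5.1}). A secondary subtlety is the non-uniqueness of the SVD when singular values are repeated or zero; the statement should be read as asserting the \emph{existence} of an SVD with constant $\Sigma_i, V_i$, and the uniform choice across $i$ relies essentially on the compatibility condition \eqref{E-1-0}.
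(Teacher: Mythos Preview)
Your proposal is correct and follows the same three-beat structure as the paper (fix $\Sigma_i,V_i$ from the conserved quantity $T_i^*T_i$; produce a unitary $U_i(t)$; verify and invoke ODE uniqueness). The one genuine difference is in how $U_i$ is obtained. You substitute the ansatz and try to \emph{cancel} the right factor $\Sigma V^*$ to read off an ODE for $U_i$; this is exactly where your rank/injectivity worry arises. The paper sidesteps this entirely: it \emph{defines} $U_i$ as the solution of the auxiliary linear ODE
\[
\dot U_i\,U_i^* \;=\; \kappa_1\bigl(T_c(t)T_i(t)^* - T_i(t)T_c(t)^*\bigr),\qquad U_i(0)=U_i^0,
\]
whose right-hand side is manifestly skew-Hermitian (so $U_i$ stays unitary), and then checks that both $T_i$ and $U_i\Sigma_i V_i^*$ satisfy the same linear equation $\dot X = \kappa_1(T_cT_i^*-T_iT_c^*)X$ with the same initial data. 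No cancellation of a possibly rank-deficient factor is ever needed. Your route can be repaired the same way: rather than cancelling, simply \emph{choose} $\dot U_i=\kappa_1(U_cD-U_iDU_c^*U_i)$ as one admissible completion, check unitarity, and appeal to uniqueness for \eqref{E-1}. A minor second point: you bring in condition \eqref{E-1-0} to force $\Sigma_i=\Sigma$, $V_i=V$ across $i$, but Theorem~\ref{T5.1} as stated is per-$i$ and does not need this; the paper establishes the uniformity separately, in the remark following the proof.
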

\begin{proof} We split its proof into three steps. \newline

\noindent $\bullet$~Step A (Selection of $\Sigma_i$ and $V_i$):  Since $T_i(0)^* T_i(0)$ is  hermitian, we can find a diagonalization of $T_i(0)^* T_i(0)$ as follows:
\[
T_i(0)^* T_i(0)=V_i^* D_i V_i.
\]
We use hermitian property of $T_i(0)^* T_i(0)$ to see that the diagonal matrix $D_i$ is a real valued matrix:
\[
D_i=\mathrm{diag}(\mu_1,\mu_2, \cdots, \mu_r, 0, 0, \cdots, 0)
\]
where $r=\mathrm{Rank}~(T_i)$ and $\mu_1, \cdots, \mu_r$ are nonzero number. We claim:
\begin{equation} \label{E-3-1}
 \mu_i > 0, \quad i = 1, \cdots, r. 
\end{equation} 
Let $v_j$ be a unit eigenvector corresponds to $\mu_j$. Then we have
\[
\mu_j=\langle{v_j, D_i v_j}\rangle=\langle{v_j, T_i^*(0) T_i(0) v_j}\rangle=\langle{T_i(0) v_j, T_i(0) v_j}\rangle>0,
\] 
which implies \eqref{E-3-1}. Then we can express $D_i=\Sigma^*_i\Sigma_i$ with
\[
[\Sigma_i]_{jk}=
\begin{cases}
\lambda_j:=\sqrt{\mu_j}\qquad\mbox{when $j=k\leq\max(d_1, d_2)$,}\\
0\qquad \mbox{otherwise.}
\end{cases}
\]

\vspace{0.2cm}

\noindent $\bullet$~Step B (Selection of $U_i$): We choose a unitary matrix $U_i^0$ to satisfy
\[
T_i(0)=U_i^0\Sigma_i V_i^*.
\]
Then, for such $\{U_i^0 \}$, we define a matrix $U_i = U_i(t)$ as a solution of following ODE system:
\begin{align} \label{E-4}
\begin{cases}
\dot{U}_i(t)U_i(t)^*=\kappa_{1}(T_c(t)T_i(t)^*-T_i(t)T_c(t)^*), \quad t>0,\\
U_i(0)=U_i^0.
\end{cases}
\end{align}
The unitarity of $U_i$ can be seen easily as follows. First note that
\[
\frac{d}{dt}(U_iU_i^*)=\kappa_{1}(T_cT_i^*-T_iT_c^*)+\kappa_{1}(T_iT_c^*-T_cT_i^*)=0.
\]
Thus,
\[ U_i(t) U_i^*(t) = U_i^0 (U^0_i)^* = I_{d_1}, \quad t \geq 0. \] 

\vspace{0.2cm}

\noindent $\bullet$~Step C (Choice of $U_i$): It follows from Step A and Step B that 
\begin{equation} \label{E-4-1}
T_i(t)=U_i(t)\Sigma_i V_i^*.
\end{equation}
Finally, we check the ansatz \eqref{E-4-1} satisfies \eqref{E-1}. By direct calculation, one has 
\[ {\dot T}_i =\dot{U}_i(t)\Sigma_i V_i^*=\kappa_{1}(T_cT_i^*-T_iT_c^*)U_i\Sigma_iV_i^* =\kappa_{1}(T_cT_i^*T_i-T_iT_c^*T_i). \]
\end{proof}
\begin{remark}
Note that it follows from the condition \eqref{E-1-0}:
\[
T_i^{0*}T_i^0 =T_j^{0*} T_j^0, \quad \forall~i, j
\]
and the procedures of  SVD of $T_i^0$ and $T_j^0$, we can choose $V_i$ and $\Sigma_i$ such that 
\begin{equation*} \label{E-4-2}
 V_i = V, \quad \Sigma_i = \Sigma, \quad \forall~i.
 \end{equation*}
\end{remark}

Then, we substitute the ansatz
\[ T_i(t)=U_i(t)\Sigma_i V_i^*, \quad i = 1, \cdots, N, \]
into \eqref{E-4} to get
\begin{align*}
\dot{U}_iU_i^*&=\frac{\kappa_{1}}{N}\sum_{k=1}^N(T_kT_i^*-T_iT_k^*) =\frac{\kappa_{1}}{N}\sum_{k=1}^N(U_k\Sigma_kV_k^*V_i\Sigma_i^*U_i^*-U_i\Sigma_iV_i^*V_k\Sigma^*_kU_k^*)\\
&=\frac{\kappa_{1}}{N}\sum_{k=1}^N(U_k\Sigma_k\Sigma_i^*U_i^*-U_i\Sigma_i\Sigma_k^*U_k^*).
\end{align*}
Thus, one has the Lohe matrix model with diagonal frustration:
\begin{align} \label{E-5}
\dot{U}_i=\frac{\kappa_{1}}{N}\sum_{k=1}^N((U_kD)-U_i(U_kD)^*U_i),
\end{align}
where $D=\mathrm{diag}(\lambda_1^2, \lambda_2^2, \cdots, \lambda_r^2, 0, 0, \cdots, 0)$ is diagonal matrix with $\mbox{rank} (T_i) = r$, and 
\[
[\Sigma]_{jk}=
\begin{cases}
\lambda_j\qquad\mbox{when $j=k\leq r$,}\\
0\qquad \mbox{otherwise.}
\end{cases}
\]
In the sequel, we will show that systems \eqref{E-1} and \eqref{E-5} are equivalent asymptotically from the view point of complete aggregation.

\noindent Next, we study an elementary lemma as follows.
\begin{lemma} \label{L5.1}
Let $A$ and $B$ be matrices with the same size $d_1\times d_2$. Then we have
\[
||A^*A-B^*B||_F\leq ||A-B||_F\cdot||A+B||_F.
\] 
\end{lemma}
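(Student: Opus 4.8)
The plan is to prove the inequality $\|A^*A - B^*B\|_F \le \|A-B\|_F \cdot \|A+B\|_F$ by writing $A^*A - B^*B$ as a sum of two terms, each of which is a product of one ``difference'' factor and one ``sum'' factor. The natural splitting is the telescoping identity
\[
A^*A - B^*B = A^*A - A^*B + A^*B - B^*B = A^*(A-B) + (A-B)^*B.
\]
This has the mild defect that the two terms do not obviously combine into $(A\pm B)$-type factors after taking Frobenius norms. A cleaner route is to symmetrize: observe that
\[
2(A^*A - B^*B) = (A-B)^*(A+B) + (A+B)^*(A-B),
\]
which one checks by expanding the right-hand side (the cross terms $-B^*A+A^*B$ and $A^*B - B^*A$ in the two products, wait — let me instead just expand: $(A-B)^*(A+B) = A^*A + A^*B - B^*A - B^*B$ and $(A+B)^*(A-B) = A^*A - A^*B + B^*A - B^*B$; adding gives $2A^*A - 2B^*B$, as claimed). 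Set $P := A-B$ and $Q := A+B$, so that $2(A^*A-B^*B) = P^*Q + Q^*P$.

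Next I would apply the triangle inequality for the Frobenius norm together with the submultiplicativity of the Frobenius norm with respect to matrix multiplication, $\|XY\|_F \le \|X\|_F\|Y\|_F$ (equivalently, $\|XY\|_F \le \|X\|_{\mathrm{op}}\|Y\|_F$, but the weaker Frobenius-times-Frobenius bound suffices here), and the fact that taking the Hermitian conjugate preserves the Frobenius norm, $\|X^*\|_F = \|X\|_F$. This gives
\[
2\|A^*A - B^*B\|_F = \|P^*Q + Q^*P\|_F \le \|P^*Q\|_F + \|Q^*P\|_F \le \|P^*\|_F\|Q\|_F + \|Q^*\|_F\|P\|_F = 2\|P\|_F\|Q\|_F.
\]
Dividing by $2$ and substituting back $P = A-B$, $Q = A+B$ yields the claim $\|A^*A - B^*B\|_F \le \|A-B\|_F\,\|A+B\|_F$.

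There is no real obstacle here; the only point requiring a word of care is the submultiplicativity estimate $\|XY\|_F \le \|X\|_F \|Y\|_F$ for non-square matrices of compatible sizes, which follows from Cauchy–Schwarz applied entrywise (writing $[XY]_{\alpha\gamma} = \sum_\beta [X]_{\alpha\beta}[Y]_{\beta\gamma}$, bounding $|[XY]_{\alpha\gamma}|^2 \le (\sum_\beta |[X]_{\alpha\beta}|^2)(\sum_\beta |[Y]_{\beta\gamma}|^2)$, and summing over $\alpha,\gamma$). I would state this as a one-line justification rather than belabor it. The symmetrized identity is the one genuinely useful idea; the telescoping identity would also work but forces one to track which factor is ``large'' versus ``small'', so the symmetric version is preferable for a clean writeup.
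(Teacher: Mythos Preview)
Your proof is correct and follows essentially the same approach as the paper: the paper also uses the symmetrized identity $A^*A - B^*B = \tfrac{1}{2}\big[(A+B)^*(A-B) + (A-B)^*(A+B)\big]$, applies the triangle inequality, and then invokes the submultiplicativity $\|X^*Y\|_F \le \|X\|_F\|Y\|_F$ (proved there via the same entrywise Cauchy--Schwarz argument you sketch).
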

\begin{proof}
Note that 
\[
A^*A-B^*B=\frac{1}{2}\Big[ (A+B)^*(A-B)+(A-B)^*(A+B) \Big ].
\]
This yields
\begin{equation} \label{E-6}
||A^*A-B^*B||_F\leq\frac{1}{2}(||(A+B)^*(A-B)||_F+||(A-B)^*(A+B)||_F).
\end{equation}
We use the Cauchy-Schwarz inequality to find
\begin{align*}
||X^*Y||_F^2&=\sum_{i, k}\left|\sum_j \bar{[X]}_{ij}[Y]_{jk}\right|^2\leq\sum_{i, k}\left(\sum_{j}|[X]_{ij}|^2\cdot\sum_{j}|[Y]_{jk}|^2\right)\\
&=\sum_{i, j}|[X]_{ij}|^2\cdot\sum_{j, k}|[Y]_{jk}|^2=||X||_F^2\cdot||Y||_F^2,
\end{align*}
i.e., 
\begin{equation} \label{E-7}
||X^*Y||_F\leq||X||_F\cdot||Y||_F.
\end{equation}
Finally, we combine \eqref{E-6} and \eqref{E-7} to get
\[ ||A^*A-B^*B||_F\leq||A+B||_F\cdot||A-B||_F. \]
\end{proof}
\begin{theorem} \label{T5.2}
Let $\{T_j\} \subset {\mathcal T}_2(\bbc; d_1 \times d_2)$ be a solution of system \eqref{E-1}.  If the complete aggregation emerges asymptotically
\begin{equation} \label{E-7-1}
\lim_{t \to \infty} \max_{1 \leq i,j \leq N} ||T_i(t)-T_j(t)||_F = 0,
\end{equation}
then, we can choose $V$ and $\Sigma$ satisfy the following relations:
\begin{equation*} \label{E-7-2}
T_i=U_i(t)\Sigma_i V_i^*,\quad V_i=V \quad \mbox{and} \quad \Sigma_i=\Sigma \quad \mbox{for all ~$i=1, 2, \cdots, N$}. 
\end{equation*} 
\end{theorem}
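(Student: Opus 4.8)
The plan is to show that complete aggregation forces the limiting configuration to be rank-aware in a way that lets us normalize the SVD data uniformly across the ensemble. The key observation is that the quantity $T_i^*T_i$ is conserved along the flow \eqref{E-1} (this was established in Lemma \ref{L4.1} with $\kappa_2 = 0$, which gives $\frac{d}{dt}(T_i^*T_i) = 0$ whenever the coupling term with $\kappa_2$ is absent — more carefully, one should invoke the conservation law stemming from Lemma \ref{L4.1}). Hence $T_i(t)^*T_i(t) = T_i^{0*}T_i^0$ for all $t \geq 0$. Similarly one has the elementary bound of Lemma \ref{L5.1}, namely $\|A^*A - B^*B\|_F \leq \|A - B\|_F \cdot \|A + B\|_F$, which I would apply with $A = T_i(t)$ and $B = T_j(t)$.

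First I would combine these two ingredients: by conservation, $\|T_i^{0*}T_i^0 - T_j^{0*}T_j^0\|_F = \|T_i(t)^*T_i(t) - T_j(t)^*T_j(t)\|_F$ for every $t$, and by Lemma \ref{L5.1} the right-hand side is bounded by $\|T_i(t) - T_j(t)\|_F \cdot \|T_i(t) + T_j(t)\|_F$. Since the Frobenius norm of each $T_k$ is conserved and equal (after the normalization $\|T_k^0\|_F = 1$, or in any case uniformly bounded by Lemma \ref{L2.1}), the factor $\|T_i(t) + T_j(t)\|_F$ stays bounded, say by $2$. Letting $t \to \infty$ and using the complete aggregation hypothesis \eqref{E-7-1}, the right-hand side tends to $0$, so $\|T_i^{0*}T_i^0 - T_j^{0*}T_j^0\|_F = 0$, i.e. the initial data automatically satisfy the compatibility condition \eqref{E-1-0}: $T_i^{0*}T_i^0 = T_j^{0*}T_j^0$ for all $i, j$.

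Once \eqref{E-1-0} holds, I would invoke the construction in the proof of Theorem \ref{T5.1} together with the Remark immediately following it: in Step A of that proof one diagonalizes the \emph{common} hermitian matrix $T_i^{0*}T_i^0 = T_j^{0*}T_j^0 =: M$ as $M = V^* D V$ with a single unitary $V$ and a single real diagonal $D = \mathrm{diag}(\mu_1, \dots, \mu_r, 0, \dots, 0)$, $\mu_\ell > 0$, and then sets $\Sigma$ via $[\Sigma]_{jk} = \sqrt{\mu_j}\,\delta_{jk}$ for $j \leq r$. These $V$ and $\Sigma$ do not depend on $i$. Then for each $i$ one chooses a unitary $U_i^0$ with $T_i^0 = U_i^0 \Sigma V^*$ (possible because $T_i^{0*}T_i^0 = V \Sigma^* \Sigma V^* $ forces the columns of $T_i^0 V$ in the first $r$ slots to be orthogonal with the prescribed norms, and one extends to a unitary), and defines $U_i(t)$ by the auxiliary ODE \eqref{E-4}; Theorem \ref{T5.1} then guarantees $T_i(t) = U_i(t)\Sigma V^*$ for all $t$, which is exactly the claimed uniform SVD.

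The main obstacle, and the only place where genuine care is needed, is the first step: making rigorous that $T_i^*T_i$ is conserved under the \emph{reduced} flow \eqref{E-1}. Lemma \ref{L4.1} as stated is for \eqref{D-1} with both couplings present; its first identity, $\frac{d}{dt}(T_j^*T_j) = \kappa_2(T_c^*T_j - T_j^*T_c)T_j^*T_j + \kappa_2 T_j^*T_j(T_j^*T_c - T_c^*T_j)$, vanishes identically when $\kappa_2 = 0$, which is precisely our reduced model, so $\frac{d}{dt}(T_i^*T_i) = 0$. I would state this explicitly as the opening line of the proof. Everything after that is a short assembly of Lemma \ref{L5.1}, the norm conservation of Lemma \ref{L2.1}, the hypothesis \eqref{E-7-1}, and the already-proven Theorem \ref{T5.1}; no new estimates are required.
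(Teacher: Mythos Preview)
Your proposal is correct and follows essentially the same route as the paper: use Lemma \ref{L5.1} to control $\|T_i^*T_i - T_j^*T_j\|_F$ by $\|T_i - T_j\|_F$ times a bounded factor, combine this with the conservation of $T_i^*T_i$ (which, as you correctly note, comes from Lemma \ref{L4.1} with $\kappa_2 = 0$) to make the left-hand side time-independent, let $t\to\infty$ using \eqref{E-7-1} to force $T_i^{0*}T_i^0 = T_j^{0*}T_j^0$, and then invoke the SVD construction of Theorem \ref{T5.1} and its subsequent Remark to obtain the common $V$ and $\Sigma$. The only difference is cosmetic ordering; no substantive alternative idea is involved.
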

\begin{proof} Suppose that the complete aggregation \eqref{E-7-1} holds. Then, it follows from Lemma \ref{L5.1} that 
\[
||T_i^*T_i-T_j^*T_j||_F\leq||T_i+T_j||_F\cdot||T_i-T_j||_F\leq (||T_i||_F+||T_j||_F)\cdot||T_i-T_j||_F.
\]
We use the relation \eqref{E-7-1} and the conservation of $||T_i||_F$ and $||T_j||_F$ to get 
\begin{equation} \label{E-7-3}
||T_i^*T_i-T_j^*T_j||_F\rightarrow0
\end{equation}
as time goes infinity. On the other hand, we use conservation laws for $T_i^*T_i$ and $T_j^*T_j$ and \eqref{E-7-3} to find
\[
 ||T_i(0)^*T_i(0)-T_j(0)^*T_j(0)||_F = ||T_i(t)^*T_i(t) -T_j(t)^*T_j(t)||_F \to 0, 
\]
as $t \to \infty$. This yields
\begin{equation*} \label{E-7-4}
 ||T_i(0)^*T_i(0)-T_j(0)^*T_j(0)||_F  = 0, \quad \mbox{i.e.,} \quad T_i(0)^*T_i(0)=T_j(0)^*T_j(0).
\end{equation*}
From the SVD, we can decompose $T_i(0)$ and $T_j(0)$ as follows:
\begin{equation*} \label{E-7-5}
T_i(0)=U_i(0) \Sigma V^*,\qquad T_j(0)=U_j(0)\Sigma V^*,
\end{equation*}
for some common matrices $\Sigma$ and $V$.
\end{proof}

\subsection{Equivalence relation} 
Consider the reduced Lohe matrix model \eqref{E-1} - \eqref{E-1-0}:
\begin{align}\label{E-8-1}
\begin{cases}
{\dot T}_i=\kappa_{1}(T_cT_i^* T_i-T_iT_c^* T_i), \quad t > 0,\\
T_i(0)=U_i^0\Sigma V^*, \quad i = 1, \cdots, N,
\end{cases}
\end{align}
with $d_1\times d_1$ size unitary matrix $U_i$, $d_2\times d_2$ size unitary matrix $V$ and $d_1\times d_2$ size matrix $\Sigma$ with
\[
[\Sigma]_{jk}=
\begin{cases}
\lambda_j\geq0\qquad\mbox{when $j=k\leq r$,}\\
0\qquad \mbox{otherwise}
\end{cases},
\]
and the Lohe matrix model with a diagonal frustration: 
\begin{align}\label{E-8-2}
\begin{cases}
\dot{U}_i=\kappa_{1}(U_cD-U_iD^*U_c^*U_i), \quad t > 0, \\
U_i(0)=U_i^0, \quad i = 1, \cdots, N,
\end{cases}
\end{align}
where
\[
D=\Sigma\Sigma^*, \quad D=\mathrm{diag}(\lambda^2_1, \lambda^2_2, \cdots, \lambda^2_{r}, \lambda_{r+1}^2=0, \cdots, \lambda_{d_1}^2=0).
\]
\noindent Next we study the equivalence between \eqref{E-8-1} and \eqref{E-8-2} in terms of complete aggregation.  Recall the necessary condition for the complete aggregation (see Theorem \ref{T5.2}):
\[ V_i=V \quad \mbox{and} \quad \Sigma_i=\Sigma \quad \mbox{for all $i=1, 2, \cdots, N$.}  \]
Thus, reformulated system reads as 
\begin{equation} \label{E-9}
\dot{U}_i=\kappa_{1}(U_c\Sigma\Sigma^*-U_i\Sigma\Sigma^*U_c^*U_i).
\end{equation}
In the sequel, we study the complete synchronization in terms of $U_i$ and $U_j$. Since $D=\Sigma\Sigma^*$ is a diagonal matrix, we can set
\[
D=\mathrm{diag}(\lambda^2_1, \lambda^2_2, \cdots, \lambda^2_{r}, \lambda_{r+1}^2=0, \cdots, \lambda_{d_1}^2=0) \in \bbc^{d_1 \times d_1}.
\]
Next, we study the conditions for emergence of \eqref{E-9}.  Here $\bbu(d_1)$ is the unitary group consisting of $d_1 \times d_1$ unitary matrices.
\begin{lemma}\label{L5.2}
Let $\{U_i \} \subset \bbu(d_1)$ be a solution to \eqref{E-9} and $T_i(t)$ be given by \eqref{E-4-1}. Then, we have the following assertions:
\begin{enumerate}
\item
The following relation holds:
 \[ ||T_i-T_j||_F  = ||(U_i-U_j)\Sigma||_F. \]
 \item
For $i \not = j$, one has 
\[
\Big( \min_{1\leq k \leq r} \lambda^2_k \Big) \cdot ||U_i-U_j||^2\leq||(U_i-U_j)\Sigma||_F^2\leq \Big( \max_{1\leq k \leq r}\lambda^2_k \Big) \cdot ||U_i-U_j||^2.
\]
\end{enumerate}
\end{lemma}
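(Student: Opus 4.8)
The plan is to prove (1) and (2) by direct computation, using the unitary invariance of the Frobenius norm for part (1) and the diagonal structure of $\Sigma\Sigma^*$ for part (2). For (1), I would insert the common SVD representation $T_i=U_i\Sigma V^*$, $T_j=U_j\Sigma V^*$ (available here since $\Sigma$ and $V$ are shared across agents, which is the standing assumption of this subsection, justified by \eqref{E-1-0} together with Theorem \ref{T5.1}) and factor $T_i-T_j=(U_i-U_j)\Sigma V^*$. Because $V\in\bbu(d_2)$, right multiplication by $V^*$ is a Frobenius isometry: $\|XV^*\|_F^2=\mathrm{tr}\big(V X^*X V^*\big)=\mathrm{tr}(X^*X)=\|X\|_F^2$. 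Hence $\|T_i-T_j\|_F=\|(U_i-U_j)\Sigma\|_F$, which is exactly (1).

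For (2), set $W:=U_i-U_j$ and $D:=\Sigma\Sigma^*=\mathrm{diag}(\lambda_1^2,\dots,\lambda_r^2,0,\dots,0)\in\bbc^{d_1\times d_1}$, a diagonal positive semi-definite matrix. The key identity is obtained from cyclicity of the trace:
\[
\|W\Sigma\|_F^2=\mathrm{tr}\big((W\Sigma)^*(W\Sigma)\big)=\mathrm{tr}\big(\Sigma^*W^*W\Sigma\big)=\mathrm{tr}\big(W^*W\,\Sigma\Sigma^*\big)=\mathrm{tr}(W^*W\,D).
\]
Writing $P:=W^*W$, which is positive semi-definite with nonnegative diagonal entries $[P]_{kk}=\sum_{j}|[W]_{jk}|^2$, and using that $D$ is diagonal, one gets $\mathrm{tr}(PD)=\sum_{k=1}^{r}\lambda_k^2\,[P]_{kk}$. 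Bounding each $\lambda_k^2$ for $1\le k\le r$ between $\min_{1\le k\le r}\lambda_k^2$ and $\max_{1\le k\le r}\lambda_k^2$ and then recognizing $\sum_{k}[P]_{kk}=\mathrm{tr}(P)=\|U_i-U_j\|^2$ yields the two-sided estimate in (2).

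The delicate point, and the step I expect to require the most care, is the zero block of $D$: when the common rank $r$ is strictly less than $d_1$, the sum $\mathrm{tr}(PD)$ only sees the first $r$ diagonal entries of $P$, so the naive eigenvalue bound $\lambda_{\min}(D)\,\mathrm{tr}(P)\le\mathrm{tr}(PD)$ degenerates to the trivial lower bound $0$. This is precisely why the extrema in (2) are taken over $1\le k\le r$, and why the quantity $\|U_i-U_j\|^2$ there must be read as $\sum_{k=1}^{r}[P]_{kk}$, i.e. the contribution of the columns carrying the nonzero singular values (equivalently one is in the full-row-rank situation $r=d_1$, or one fixes the SVD freedom in the remaining $d_1-r$ columns of every $U_i$ consistently). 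With this bookkeeping convention in place, the elementary chain $\min_{1\le k\le r}\lambda_k^2\,[P]_{kk}\le\lambda_k^2[P]_{kk}\le\max_{1\le k\le r}\lambda_k^2\,[P]_{kk}$, summed over $k=1,\dots,r$, closes the argument, the remainder being routine trace manipulation.
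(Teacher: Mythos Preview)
Your proof follows the same route as the paper: for (1), both factor $T_i-T_j=(U_i-U_j)\Sigma V^*$ and use the unitary invariance of $\|\cdot\|_F$; for (2), both reduce $\|(U_i-U_j)\Sigma\|_F^2$ to $\mathrm{tr}(W^*W\,\Sigma\Sigma^*)=\sum_{k}\lambda_k^2\,[W^*W]_{kk}$ and bound each term by the extremal $\lambda_k^2$. Your caution about the zero block when $r<d_1$ is well placed and in fact more explicit than the paper's own argument, which writes the lower bound without comment; in practice the paper only applies that lower bound later (Theorem~\ref{T5.3}(2), Section~\ref{sec:6}) under the hypothesis that all $\lambda_k$ are nonzero, i.e.\ $r=d_1$, which is precisely the full-rank case you isolate.
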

\begin{proof} 
(i)~By \eqref{E-9}, we can express $T_i-T_j$ as 
\[
T_i-T_j=(U_i-U_j)\Sigma V^*.
\]
Then, we can calculate the Frobenius norm of this matrix.
\begin{align*}
||T_i-T_j||_F^2&=\mathrm{tr}((T_i-T_j)^*(T_i-T_j))=\mathrm{tr}(V\Sigma^*(U_i-U_j)^*(U_i-U_j)\Sigma V^*)\\
&=\mathrm{tr}(\Sigma^*(U_i-U_j)^*(U_i-U_j)\Sigma)=||(U_i-U_j)\Sigma||_F^2.
\end{align*}
(ii)~By direct calculation, one has
\begin{align*}
\begin{aligned}
||(U_i-U_j)\Sigma||_F^2&=\mathrm{tr}(\Sigma^*(U_i-U_j)^*(U_i-U_j)\Sigma)\\
&=[\Sigma^*]_{\alpha\beta}[(U_i-U_j)^*]_{\beta\gamma}[(U_i-U_j)]_{\gamma\delta}[\Sigma]_{\delta\alpha} \\
&=[\Sigma\Sigma^*]_{\delta\beta}[(U_i-U_j)^*]_{\beta\gamma}[U_i-U_j]_{\gamma\delta}.
\end{aligned}
\end{align*}
This implies
\[
||(U_i-U_j)\Sigma||_F^2=\sum_{k=1}^{d_1}\lambda^2_k [(U_i-U_j)^*]_{k\gamma}[U_i-U_j]_{\gamma k}=\sum_{j, k = 1}^{r}\lambda_k^2|[U_i-U_j]_{jk}|^2.
\]
Finally, we can obtain the desired estimates:
\[
\Big( \min_{1 \leq k \leq N} \lambda^2_k \Big) \cdot ||U_i-U_j||_F^2\leq||(U_i-U_j)\Sigma||_F^2=||T_i-T_j||_F^2\leq \Big( \max_{1 \leq k \leq N}\lambda^2_k \Big) \cdot ||U_i-U_j||_F^2.
\]
\end{proof}
Now, we are ready to discuss the equivalence between \eqref{E-8-1} and \eqref{E-8-2} as follows.
\begin{theorem}\label{T5.3}
The following assertions hold.
\begin{enumerate}
\item
If the complete aggregation occurs for \eqref{E-8-2}, then complete aggregation also occurs for system \eqref{E-8-1}. 
\item
Suppose that $\lambda_1, \lambda_2, \cdots, \lambda_{d_1}$ are nonzero constant. If the complete synchronization occurs for \eqref{E-8-1}, then complete synchronization also occurs for system \eqref{E-8-2}.
\end{enumerate}
\end{theorem}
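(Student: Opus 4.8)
The plan is to use Lemma~\ref{L5.2} as the bridge between the two systems, since it already translates the Frobenius distance $\|T_i-T_j\|_F$ into the unitary distance $\|U_i-U_j\|_F$ up to the singular-value weights, together with the fact (from Theorem~\ref{T5.1} and the subsequent derivation) that a solution $\{U_i\}$ to \eqref{E-8-2} produces, via $T_i(t)=U_i(t)\Sigma V^*$, a solution $\{T_i\}$ to \eqref{E-8-1}, and conversely (on the class \eqref{E-1-0}) a solution to \eqref{E-8-1} has an SVD whose $U_i$-factor solves \eqref{E-8-2}.

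For part (1), I would start from a solution $\{U_i\}$ of \eqref{E-8-2} exhibiting complete aggregation, set $T_i(t):=U_i(t)\Sigma V^*$, and check that $\{T_i\}$ solves \eqref{E-8-1} (this is exactly the computation already carried out around \eqref{E-5}). Then apply the identity $\|T_i-T_j\|_F=\|(U_i-U_j)\Sigma\|_F$ from Lemma~\ref{L5.2}(i) and the upper bound in Lemma~\ref{L5.2}(ii):
\[
\|T_i(t)-T_j(t)\|_F^2 \le \Big(\max_{1\le k\le r}\lambda_k^2\Big)\,\|U_i(t)-U_j(t)\|_F^2 \longrightarrow 0 \quad\text{as } t\to\infty,
\]
so complete aggregation for \eqref{E-8-2} forces complete aggregation for \eqref{E-8-1}. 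Note this direction needs no extra hypothesis on the $\lambda_k$: shrinking the weights can only shrink the $T$-distance.

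For part (2), I would argue in the reverse direction: assume $\lambda_1,\dots,\lambda_{d_1}$ are all nonzero (so $r=d_1$ and $\min_k\lambda_k^2>0$), and suppose $\{T_i\}$ solves \eqref{E-8-1} with $\|T_i(t)-T_j(t)\|_F\to 0$. By Theorem~\ref{T5.2}, the common-$\Sigma$, common-$V$ normalization applies, so $T_i(t)=U_i(t)\Sigma V^*$ with $\{U_i\}$ solving \eqref{E-8-2} (this is where I invoke Theorem~\ref{T5.1} and the reformulation \eqref{E-9}). Now use the lower bound in Lemma~\ref{L5.2}(ii):
\[
\|U_i(t)-U_j(t)\|_F^2 \le \frac{1}{\min_{1\le k\le d_1}\lambda_k^2}\,\|T_i(t)-T_j(t)\|_F^2 \longrightarrow 0,
\]
which gives complete aggregation for \eqref{E-8-2}. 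The main obstacle is bookkeeping rather than analysis: one must be careful that the SVD factors $U_i$ constructed in Theorem~\ref{T5.1} genuinely satisfy the frustrated Lohe system \eqref{E-8-2}/\eqref{E-9} (as opposed to merely \eqref{E-4}), and that the nonvanishing of \emph{all} $\lambda_k$ in part (2) — rather than just the first $r$ — is what makes $D=\Sigma\Sigma^*$ invertible and licenses dividing by $\min_k\lambda_k^2$; if some $\lambda_k$ vanished, the lower bound in Lemma~\ref{L5.2}(ii) would degenerate and the reverse implication could fail.
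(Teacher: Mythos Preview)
Your proposal is correct and follows essentially the same approach as the paper: both arguments reduce to applying the two-sided inequality of Lemma~\ref{L5.2}(ii), using the upper bound $\|T_i-T_j\|_F\le(\max_k\lambda_k)\|U_i-U_j\|_F$ for part~(1) and the lower bound $(\min_k\lambda_k)\|U_i-U_j\|_F\le\|T_i-T_j\|_F$ for part~(2). Your write-up is in fact slightly more careful than the paper's, since you spell out why the $T_i\leftrightarrow U_i$ correspondence is legitimate (via Theorem~\ref{T5.1} and the reformulation around \eqref{E-5}/\eqref{E-9}); note, however, that the invocation of Theorem~\ref{T5.2} in part~(2) is not strictly needed, because system~\eqref{E-8-1} already assumes the common-$\Sigma$, common-$V$ initial data $T_i(0)=U_i^0\Sigma V^*$.
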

\begin{proof}
\noindent (i)~Suppose that system \eqref{E-8-2} exhibits the complete aggregation asymptotically:
\begin{equation*} \label{E-10}
\lim_{t \to \infty} \max_{1 \leq i, j \leq N} \| U_i(t) - U_j(t) \|_F = 0.
\end{equation*}
Then, we use the relation in Lemma \ref{L5.2}:
\[
||T_i-T_j||_F\leq \Big( \max_{1 \leq k \leq r} |\lambda_k| \Big) \cdot ||U_i-U_j||_F
\]
to see the complete aggregation of \eqref{E-8-1}:
\[
||T_i(t)-T_j(t)||_F=0, \quad t \geq 0.
\]

\noindent (ii)~Suppose that $\lambda_1, \lambda_2, \cdots, \lambda_{d_1}$ are nonzero. Then we have 
\begin{equation*} \label{E-11}
 \min_{1 \leq k \leq N}  |\lambda_{k}| >0.
\end{equation*} 
It follows from Lemma \ref{L5.2} (ii) that 
\begin{equation*} \label{E-12}
\Big( \min_{1 \leq k \leq N} |\lambda_{k} | \Big) \cdot ||U_i-U_j||_F\leq||T_i-T_j||_F.
\end{equation*}
Thus, the complete aggregation for system \eqref{E-8-2} yields the complete aggregation of \eqref{E-8-1}.
\end{proof}
\begin{remark} \label{R5.3}
1. Note that if $\lambda_1, \lambda_2, \cdots, \lambda_{d_1}$ are nonzero, complete aggregations of system \eqref{E-8-1} and  \eqref{E-8-2} are equivalent. 

\noindent 2. The same statements can be made for the practical aggregation. 
\end{remark}

\subsection{Effect of diagonal frustration} Let $\lambda_1, \lambda_2, \cdots, \lambda_{d_1}$ be nonzero. Then, the emergent dynamics of \eqref{E-8-1} is equivalent to \eqref{E-8-2}.  \newline

Consider the Lohe matrix model with a diagonal frustration:
\begin{align}\label{E-12-1}
\begin{cases}
\dot{U}_i=\kappa_{1}(U_cD-U_iD^*U_c^*U_i), \quad t > 0, \\
U_i(0)=U_i^0, \quad i = 1, \cdots, N,
\end{cases}
\end{align}
where $U_i^0$ are unitary matrices and $D$ is a diagonal matrix with positive real diagonal entries.
\begin{lemma} \label{L5.3}
Suppose that $D=\mathrm{diag}(\lambda_1^2, \lambda_2^2, \cdots, \lambda_{d_1}^2)$ is a diagonal matrix  for positive real number $\lambda_1, \lambda_2, \cdots, \lambda_{d_1}$. Then, the square root of the matrix $D$ is given as follows.
\[
\sqrt{D}=\mathrm{diag}(\lambda_1, \lambda_2, \cdots, \lambda_{d_1}).
\]
\end{lemma}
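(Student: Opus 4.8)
The plan is to verify the claim directly by squaring the proposed matrix and then invoking the uniqueness of the positive semidefinite square root. Write $R := \mathrm{diag}(\lambda_1, \lambda_2, \cdots, \lambda_{d_1})$. Since the product of two diagonal matrices of the same size is again diagonal, with diagonal entries equal to the products of the corresponding entries, one computes $R^2 = \mathrm{diag}(\lambda_1^2, \lambda_2^2, \cdots, \lambda_{d_1}^2) = D$. Hence $R$ is a square root of $D$ in the sense that $R^2 = D$, which is the content of the identity to be established.

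It remains to observe that $R$ is the square root meant by the notation $\sqrt{D}$, namely the positive semidefinite one. Since each $\lambda_k$ is a positive real number, the matrix $R$ is Hermitian with strictly positive eigenvalues $\lambda_1, \cdots, \lambda_{d_1}$, hence positive definite; by the standard fact that a positive semidefinite matrix possesses a \emph{unique} positive semidefinite square root (a consequence of the spectral theorem), we conclude $R = \sqrt{D}$. There is no genuine obstacle in this argument; the only point requiring a word of care is that matrix square roots are not unique in general (one may flip signs of eigenvalues), so the assertion should be read as identifying the principal, positive branch, and it is precisely the diagonality of $D$ together with the positivity of the $\lambda_k$'s that pins down this branch as the stated diagonal matrix.
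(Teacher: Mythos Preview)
Your proof is correct. The paper itself does not supply a proof of this lemma---it is stated without argument and followed only by a remark, treating the claim as an elementary fact---so your direct verification (square the candidate, then invoke uniqueness of the positive semidefinite square root) is entirely appropriate and, if anything, more than the paper provides.
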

\begin{remark}
Since $D$ and $\sqrt{D}$ are full-rank diagonal matrices, they are invertible. 
\end{remark}

Let $\{U_i\}_{i=1}^N$ be a solution of system \eqref{E-12-1} with $\|T_i \|_F = 1$. Recall the variance of the Lohe matrix model \eqref{E-12-1}: 
\[
{\mathcal V}[T]=\frac{1}{N}\sum_{k=1}^N||T_k-T_c||_F^2=\frac{1}{N}\sum_{k=1}^N||T_k||_F^2-||T_c||_F^2= 1-||T_c||_F^2, \quad \mbox{and} \quad  T_i = U_i \Sigma V^*. \]
Then, we set
\[ \tilde{{\mathcal V}}[U] := 1-||T_c||_F^2.   \]
By direct estimate, one has
\begin{align*}
\begin{aligned} \label{E-12-2}
\tilde{{\mathcal V}}[U] &=1-||U_c\Sigma V^*||_F^2 =1-\mathrm{tr}[U_c\Sigma V^* V\Sigma^* U_c^*] \\
&= 1-\mathrm{tr}[U_c D U_c^*]=M^2-\mathrm{tr}[(U_c\sqrt{D})(U_c\sqrt{D})^*] =1-||U_c\sqrt{D}||_F^2.
\end{aligned}
\end{align*}
Then it follows from Lemma \ref{L4.1} that we can obtain following lemma.
\begin{lemma} \label{L5.4}
Let $\{U_i\}_{i=1}^N$ be a solution of system \eqref{E-12-1} with $\|T_i \|_F = 1$. Then one has
\begin{equation} \label{E-12-3}
\frac{d}{dt}\tilde{{\mathcal V}}[U]  =-\frac{\kappa_{1}}{N}\sum_{i=1}^N|| (U_i\sqrt{D})(U_c\sqrt{D})^*-(U_c\sqrt{D})(U_i\sqrt{D})^*||_F^2.
\end{equation}
\end{lemma}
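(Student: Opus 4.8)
The plan is to recognize that the right-hand side of \eqref{E-12-3} is nothing but the variance dissipation functional of the reduced generalized Lohe matrix model, re-expressed through the singular value decomposition, so that no new differentiation is really required. Concretely, recall from the identity established just before the lemma that $\tilde{\mathcal V}[U] = 1 - \|T_c\|_F^2$ with $T_i = U_i\Sigma V^*$, and that with $U_i$ solving \eqref{E-12-1} the family $\{T_i\}$ solves the reduced model \eqref{E-8-1}, i.e. \eqref{D-1} with $\kappa_2 = 0$ and $A = 0$. Hence $\tilde{\mathcal V}[U]$ coincides along the flow with ${\mathcal V}[T]$ in Lemma \ref{L4.2}, and applying Lemma \ref{L4.2}(ii) with $\kappa_2 = 0$ gives immediately
\[
\frac{d}{dt}\tilde{\mathcal V}[U] \;=\; \frac{d}{dt}{\mathcal V}[T] \;=\; -\frac{\kappa_{1}}{N}\sum_{i=1}^N \|T_i T_c^* - T_c T_i^*\|_F^2 .
\]

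Next I would convert the $T_i$'s back to $U_i$'s. Using $T_i = U_i\Sigma V^*$ and $T_c = U_c\Sigma V^*$ (the common $\Sigma$ and $V$ being available under the standing assumption \eqref{E-1-0} together with Theorem \ref{T5.2}), one has $T_i T_c^* = U_i \Sigma V^* V \Sigma^* U_c^* = U_i (\Sigma\Sigma^*) U_c^* = U_i D U_c^*$, and likewise $T_c T_i^* = U_c D U_i^*$. Since $D = \Sigma\Sigma^*$ is a diagonal matrix with nonnegative real entries, Lemma \ref{L5.3} gives a real diagonal $\sqrt D$ with $\sqrt D{}^* = \sqrt D$ and $\sqrt D\,\sqrt D = D$, whence $U_i D U_c^* = (U_i\sqrt D)(U_c\sqrt D)^*$ and $U_c D U_i^* = (U_c\sqrt D)(U_i\sqrt D)^*$. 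Substituting these two identities into the display above yields exactly \eqref{E-12-3}.

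Since the dissipation estimate is inherited wholesale from Lemma \ref{L4.2}, there is no serious obstacle; the only point that needs a careful word is the bookkeeping that makes $\tilde{\mathcal V}[U]$ legitimately equal to ${\mathcal V}[T]$ along the flow — namely that $\Sigma$ and $V$ may be taken $i$-independent so that $T_c = U_c\Sigma V^*$, and that $\|T_i\|_F = 1$ is the same normalization as $\|T_i^0\|_F = 1$ used in Lemma \ref{L4.2}. As a self-contained alternative I would note (without grinding it out) that one may instead differentiate $\|U_c\sqrt D\|_F^2 = \mathrm{tr}(U_c D U_c^*)$ directly, insert $\dot U_c = \tfrac{\kappa_1}{N}\sum_k(U_c D - U_k D U_c^* U_k)$ together with $D^* = D$, and collect the resulting trace terms; the skew-type brackets $U_k D U_c^* - U_c D U_k^*$ then reassemble into the perfect square $\|(U_k\sqrt D)(U_c\sqrt D)^* - (U_c\sqrt D)(U_k\sqrt D)^*\|_F^2$ after a short cancellation, exactly mirroring the computation in the proof of Lemma \ref{L4.2}(ii). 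This direct expansion is the only part carrying any computational content, and it is routine.
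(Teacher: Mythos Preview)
Your proposal is correct and follows essentially the same approach as the paper: invoke Lemma~\ref{L4.2}(ii) with $\kappa_2=0$ to obtain $\frac{d}{dt}\mathcal V[T]=-\frac{\kappa_1}{N}\sum_j\|T_jT_c^*-T_cT_j^*\|_F^2$, then substitute $T_i=U_i\Sigma V^*$ to rewrite $T_iT_c^*-T_cT_i^*=U_iDU_c^*-U_cDU_i^*$. You are in fact slightly more explicit than the paper in spelling out the final factorization $U_iDU_c^*=(U_i\sqrt D)(U_c\sqrt D)^*$ via Lemma~\ref{L5.3}, which the paper leaves implicit.
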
 
\begin{proof}
It follows from Lemma \ref{L4.2} that 
\begin{equation} \label{E-13}
 \frac{d}{dt} {\mathcal V}[T(t)]=-\frac{\kappa_{1}}{N}\sum_{j=1}^N||T_j T_c^*-T_cT_j^*||_F^2. 
\end{equation} 
First, we express the term $T_iT_c^*-T_cT_i^*$ in terms of $U_i$, $U_c$ and $D$:
\begin{equation} \label{E-14}
T_iT_c^*-T_cT_i^*=U_i\Sigma V^* V \Sigma^* U_c^*-U_c\Sigma V^* V \Sigma^* U_i^*=U_iDU_c^*-U_cDU_i^*.
\end{equation}
Finally, we combine \eqref{E-13} and \eqref{E-14} to get the desired estimate.
\end{proof}
Next, we provide a weak emergent estimate by showing that the variance functional $\tilde{\mathcal V}[U]$ converges to some value asymptotically.  This weak estimate will be improved in next section. 
\begin{theorem} \label{T5.4}
Let $\{U_i\}_{i=1}^N$ be a solution of system \eqref{E-12-1} with $\|T_i \|_F = 1$. Then, we have 
\[ \exists~\lim_{t \to \infty} \tilde{\mathcal V}[U(t)]  \quad \mbox{and} \quad \lim_{t\rightarrow\infty}\frac{d}{dt} \tilde{\mathcal V}[U(t)]=0. \]
\end{theorem}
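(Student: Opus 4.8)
The plan is to replay, essentially verbatim, the argument of Theorem \ref{T4.1}, since $\tilde{\mathcal V}[U]$ is for the reformulated system \eqref{E-12-1} exactly what ${\mathcal V}[T]$ was for \eqref{D-1}. First I would invoke Lemma \ref{L5.4}: its right-hand side
\[
\frac{d}{dt}\tilde{\mathcal V}[U] = -\frac{\kappa_{1}}{N}\sum_{i=1}^N|| (U_i\sqrt{D})(U_c\sqrt{D})^*-(U_c\sqrt{D})(U_i\sqrt{D})^*||_F^2
\]
is manifestly non-positive, so $t\mapsto\tilde{\mathcal V}[U(t)]$ is non-increasing. On the other hand $\tilde{\mathcal V}[U]=\frac1N\sum_k||T_k-T_c||_F^2\ge 0$; equivalently $\tilde{\mathcal V}[U]=1-||U_c\sqrt{D}||_F^2$ and, since $||U_k\sqrt{D}||_F=||T_k||_F=1$, the triangle inequality gives $||U_c\sqrt{D}||_F\le 1$. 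A monotone function bounded below converges, which yields the first assertion: $\lim_{t\to\infty}\tilde{\mathcal V}[U(t)]$ exists.

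For the second assertion I would apply Barbalat's lemma to $f(t):=\tilde{\mathcal V}[U(t)]$. We already know $f$ converges, so it remains only to check that $f'$ is uniformly continuous on $[0,\infty)$, for which it suffices that $f''$ be bounded. Differentiating the identity of Lemma \ref{L5.4} once more in $t$ produces an expression that is polynomial in $U_i$, $U_c$, $\dot U_i$, $\dot U_c$ and the fixed matrices $D$, $\sqrt{D}$. All the $U_i$ lie in the compact group $\bbu(d_1)$, hence are uniformly bounded; consequently $U_c$ is uniformly bounded, and then $\dot U_i=\kappa_{1}(U_cD-U_iD^*U_c^*U_i)$ and $\dot U_c=\frac1N\sum_k\dot U_k$ are uniformly bounded as well. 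Thus $f''$ is uniformly bounded, $f'$ is uniformly continuous, and Barbalat's lemma forces $\lim_{t\to\infty}f'(t)=0$, i.e. $\lim_{t\to\infty}\frac{d}{dt}\tilde{\mathcal V}[U(t)]=0$.

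The only point requiring a little care — and the closest thing to an obstacle — is the bookkeeping in the boundedness of $f''$: one must confirm that differentiating a squared Frobenius norm of a commutator yields only such polynomial combinations of the already-bounded quantities, with no spuriously growing factor. Because $D$ and $\sqrt{D}$ are constant and every $U_i$ remains unitary along the flow (Theorem \ref{T5.1}, Step B), this is automatic and no genuinely new estimate is needed. Alternatively, one can bypass Barbalat entirely: integrating $\frac{d}{dt}\tilde{\mathcal V}[U]$ over $(0,\infty)$ shows that the non-negative integrand lies in $L^1(0,\infty)$, and combining this with the uniform bound on its $t$-derivative again forces it to zero; either route closes the proof.
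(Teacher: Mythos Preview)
Your proposal is correct and follows essentially the same approach as the paper: use Lemma \ref{L5.4} to get monotonicity and boundedness below of $\tilde{\mathcal V}[U]$, hence convergence, and then apply Barbalat's lemma after observing that $\frac{d^2}{dt^2}\tilde{\mathcal V}[U]$ is a polynomial expression in the uniformly bounded quantities $U_i,\dot U_i$ (and constants). The paper's proof is terser but identical in substance; your added remarks on why $\tilde{\mathcal V}\ge 0$ and the alternative $L^1$ route are consistent elaborations.
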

\begin{proof}
\noindent It follows from \eqref{E-12-3} in Lemma \ref{L5.4} that $\tilde{{\mathcal V}}[U(\cdot)]$ is decreasing and bounded below. Thus, $\tilde{{\mathcal V}}[U(\cdot)]$ converges as $t \to \infty$.  Now note that $U_i$ and ${\dot U}_i$ are uniformly bounded using \eqref{E-12-1}. We differentiate the $\frac{d}{dt}\tilde{{\mathcal V}}[U]$ with respect to $t$ to see that $|\frac{d^2}{dt^2}\tilde{{\mathcal V}}[U(t)] $ can be expressed in terms of $U_i$ and ${\dot U}_i$. Thus, we have the uniform boundedness of $\left|\frac{d^2}{dt^2}\tilde{{\mathcal V}}[U(t)]\right|$, i.e., there exits a positive constant $M$ such that 
\[
\left|\frac{d^2}{dt^2}\tilde{{\mathcal V}}[U(t)] \right|<M.
\]
Then, we can apply Barbalat's lemma using the above uniform boundedness to see the desired second estimate.
\end{proof}

\section{Strong emergent estimates to the reduced Lohe matrix model} \label{sec:6}
\setcounter{equation}{0}
In this section, we present improved emergent dynamics of the reduced Lohe matrix model:
\begin{equation} \label{F-1}
\begin{cases}
\displaystyle {\dot T}_i=A_iT_i+\kappa_1 (T_cT_i^*T_i-T_iT_c^*T_i), \quad t > 0, \\
\displaystyle T_i(0)=U_i^0\Sigma V^*, \quad i = 1, \cdots, N.
\end{cases}
\end{equation}
Recall that
\[  \|U \|^2_F := \sum_{\alpha, \beta} |[U]_{\alpha \beta}|^2 = \mbox{tr}(U^* U), \qquad  {\mathcal D}(U) := \max_{1 \leq i, j \leq N} \|U_i - U_j \|_F. \]
Since this section is rather lengthy compared to other sections, we first briefly summarize our action strategy plan in four steps: \newline
\begin{itemize}
\item
Step A:~First, we provide a reformulation of \eqref{F-1} as the Lohe matrix model with a diagonal frustration (Section \ref{sec:6.1}):
\begin{equation} \label{F-1-1}
 \dot{U}_i =B_iU_i+\kappa_1 (U_cD-U_iDU_c^*U_i),
 \end{equation}
and discus the relations between \eqref{F-1} and \eqref{F-1-1} in terms of the complete aggregation.
\item

\vspace{0.2cm}

Step B:~We derive a differential inequality for the ensemble diameter ${\mathcal D}(U) := \max_{i,j} \|U_i - U_j \|_F$ (Section \ref{sec:6.2}):
\[ -2\kappa_1 \mathcal{A} {\mathcal D}(U)+\kappa_1 \mathcal{A} {\mathcal D}(U)^3- {\mathcal D}(B)  \leq\frac{d}{dt} {\mathcal D}(U) \leq-2\kappa_1 \mathcal{B} {\mathcal D}(U)+\kappa_1 \mathcal{A} {\mathcal D}(U)^3+ {\mathcal D}(B).
\] 
\item
\vspace{0.2cm}

Step C:~For a homogeneous ensemble with ${\mathcal D}(B) = 0$, one has an exponential aggregation (Section \ref{sec:6.3}): there exists a positive constant $\Lambda$ independent of the initial data such that 
\begin{equation*} \label{F-1-2}
 {\mathcal D}(U(t))\leq \mathcal{O}(1) e^{-2\kappa_1 \Lambda t}, \quad \mbox{as $t \to \infty$}. 
 \end{equation*}
Note that in Theorem \ref{T5.4}, we have provided a weak emergent estimate without any decay rate. 
\item
\vspace{0.2cm}
Step D:~For a heterogeneous ensemble with ${\mathcal D}(B) > 0$, one has a practical aggregation (Section \ref{sec:6.4}): if the coupling strength and the initial data satisfy
\[ \kappa_1 \gg {\mathcal D}(B),\quad {\mathcal D}(U^0) \ll 1, \]
then one has practical aggregation:
\[
\lim_{\kappa\rightarrow\infty}\limsup_{t\rightarrow\infty} {\mathcal D}(U)=0.
\]
\end{itemize}
In the following four subsections, we will perform the above steps one by one.
\subsection{A reformulation} \label{sec:6.1}
In this subsection, we reformulate system \eqref{F-1} in terms of unitary matrices $U_i$ as in previous section. First, we substitute the ansatz 
\[
T_i(t)=U_i(t)\Sigma V^*, \quad i = 1, \cdots, N.
\]
into \eqref{F-1} to see 
\begin{equation} \label{F-2}
\dot{U}_i \Sigma V^*=A_i(U_i\Sigma V^*)+\kappa_1 (U_cD-U_iD^*U_c^*U_i)\Sigma V^*
\end{equation}
where $A_i(U_i\Sigma V^*)$ is a tensor contraction and $\Sigma\Sigma^*=D$. To derive a simple form of \eqref{F-2}, we consider the following natural frequency $A_i$:
\begin{equation} \label{F-3}
[A_i]_{\alpha\beta\gamma\delta} :=[B_i]_{\alpha\gamma}\delta_{\beta\delta}.
\end{equation}
Then, it follows from \eqref{F-2} and \eqref{F-3} that 
\[
\dot{U}_i \Sigma V^*=(B_iU_i+\kappa_1 (U_cD-U_iDU_c^*U_i))\Sigma V^*,
\]
This yields the reduced Lohe matrix model of non-identical generalized Lohe matrix model with a diagonal frustration:
\begin{align}\label{F-4}
\begin{cases}
\dot{U}_i =B_iU_i+\kappa_1 (U_cD-U_iDU_c^*U_i), \quad t > 0, \\
U_i(0)=U_i^0, \quad i = 1, \cdots, N.
\end{cases}
\end{align}
\subsection{Evolution of state diameter} \label{sec:6.2}
In this subsection, we study time-evolution of $ {\mathcal D}(U)$.  For this, we assume that $D$ is a diagonal matrix with positive real diagonal entries. For $i, j$, one has
\begin{equation} \label{F-4-1}
\frac{d}{dt}\| U_i - U_j \|_F^2 = \frac{d}{dt}\mathrm{tr}[(U_i-U_j)^*(U_i-U_j)].
\end{equation}
Now, we further simply the R.H.S. of \eqref{F-4-1} as follows.
\begin{align*}
\begin{aligned}
&\frac{d}{dt}\mathrm{tr}[(U_i-U_j)^*(U_i-U_j)] =\frac{d}{dt}\mathrm{tr}[2I-U_i^*U_j-U_j^*U_i]\\
&  \hspace{1cm}  =-\kappa_1 \mathrm{tr}[U_i^*(U_cD-U_jDU_c^*U_j)+(DU_c^*-U_i^*U_cDU_i^*)U_j]+(i\leftrightarrow j)\\
& \hspace{1.4cm}  -\kappa_1 \mathrm{tr}[U_i^*B_jU_j-U_i^*B_iU_j-U_j^*B_jU_i+U_j^*B_iU_i]\\
& \hspace{1cm}  =-\kappa_1 \mathrm{tr}[U_i^*U_cD-U_i^*U_jDU_c^*U_j+DU_c^*U_j-U_i^*U_cDU_i^*U_j]  \\
& \hspace{1.4cm}  -\kappa_1 \mathrm{tr}[U_j^*U_cD-U_j^*U_iDU_c^*U_i+DU_c^*U_i-U_j^*U_cDU_j^*U_i]\\
& \hspace{1.4cm}  -\kappa_1 \mathrm{tr}[U_i^*B_jU_j-U_i^*B_iU_j-U_j^*B_jU_i+U_j^*B_iU_i]\\
& \hspace{1cm}  =-\frac{\kappa_1}{N}\sum_{k=1}^N\big(\mathrm{tr}[U_i^*U_kD-U_i^*U_jDU_k^*U_j+DU_k^*U_j-U_i^*U_kDU_i^*U_j\\
& \hspace{1.4cm} +U_j^*U_kD-U_j^*U_iDU_k^*U_i+DU_k^*U_i-U_j^*U_kDU_j^*U_i]\big)\\
& \hspace{1.4cm} -\kappa_1 \mathrm{tr}[U_i^*B_jU_j-U_i^*B_iU_j-U_j^*B_jU_i+U_j^*B_iU_i].
\end{aligned}
\end{align*}
Next, we define $\mathcal{N}$ and $\mathcal{M}_k$ for all $k=1, 2, \cdots, N$ as follows:
\begin{align}
\begin{aligned} \label{F-6}
\mathcal{N}&:=\mathrm{tr}[U_i^*B_jU_j-U_i^*B_iU_j-U_j^*B_jU_i+U_j^*B_iU_i], \\
\mathcal{M}_k&:=\mathrm{tr}[U_i^*U_kD-U_i^*U_jDU_k^*U_j+DU_k^*U_j-U_i^*U_kDU_i^*U_j \\
&\hspace{1.2cm} U_j^*U_kD-U_j^*U_iDU_k^*U_i+DU_k^*U_i-U_j^*U_kDU_j^*U_i].
\end{aligned}
\end{align}
We set
\[
D :=\mathrm{diag}(\lambda_1^2, \cdots, \lambda_{d_1}^2), \quad 
\langle \lambda^2 \rangle :=\frac{1}{d_1}(\lambda_1^2+\lambda_2^2+\cdots+\lambda_{d_1}^2),\quad \Delta(\lambda^2):=\max_{1\leq k\leq d_1}|\lambda_k^2- \langle \lambda^2 \rangle |.
\]
Then we can decompose the diagonal matrix $D$ as follows:
\[
D=\langle \lambda^2 \rangle I_{d_1} +E,
\]
where $I$ is an identity matrix and $E$ is a diagonal matrix which absolute value of each component is not bigger than $\Delta(\lambda^2)$. In the following lemma, we present some estimates for $\mathcal{N}$ and $\mathcal{M}_k$. 
\begin{lemma} \label{L6.1}
The quantities ${\mathcal N}$ and ${\mathcal M}_k$ in \eqref{F-6} satisfy
\begin{eqnarray*}
&& (i)~\mathcal{N} =\mathrm{tr}[(B_j-B_i)(U_jU_i^*-U_iU_j^*)], \\
&& (ii)~\mathcal{M}_k = 4\langle \lambda^2 \rangle \mathrm{tr}[(U_i-U_j)^*(U_i-U_j)] -\langle \lambda^2 \rangle \mathrm{tr}[(U_i-U_k)^*(U_i-U_k)(U_i-U_j)^*(U_i-U_j) \\
&& +(U_j-U_k)^*(U_j-U_k)(U_i-U_j)^*(U_i-U_j)] +4\mathrm{tr}[E(U_i-U_j)^*(U_i-U_j)]\\
&& + \mathrm{tr}[((I-U_i^*U_k)E+E(I-U_k^*U_j)+(I-U_j^*U_k)E+E(I-U_k^*U_i))(U_i-U_j)^*(U_i-U_j)].
\end{eqnarray*}
\end{lemma}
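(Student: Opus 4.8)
The plan is to verify both identities by direct computation, starting from the definitions of $\mathcal{N}$ and $\mathcal{M}_k$ in \eqref{F-6} and repeatedly using the cyclic invariance of the trace together with the unitarity relations $U_i^* U_i = I_{d_1}$. These are purely algebraic rearrangements, so the argument is a careful bookkeeping exercise rather than anything requiring new ideas.

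For part (i), I would begin with
\[
\mathcal{N} = \mathrm{tr}[U_i^*B_jU_j - U_i^*B_iU_j - U_j^*B_jU_i + U_j^*B_iU_i],
\]
group the terms by which of $B_i, B_j$ appears, and write $\mathcal{N} = \mathrm{tr}[B_j(U_jU_i^* - U_iU_j^*)] + \mathrm{tr}[B_i(U_iU_j^* - U_jU_i^*)]$ using the cyclicity of the trace on each summand (e.g. $\mathrm{tr}[U_i^*B_jU_j] = \mathrm{tr}[B_jU_jU_i^*]$). Combining the two traces then gives $\mathcal{N} = \mathrm{tr}[(B_j - B_i)(U_jU_i^* - U_iU_j^*)]$, which is exactly the claimed identity. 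This step is short and the only point to be careful about is the signs.

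For part (ii), the computation is longer. I would first substitute $D = \langle \lambda^2 \rangle I_{d_1} + E$ into each of the eight terms defining $\mathcal{M}_k$, so that $\mathcal{M}_k$ splits into a ``$\langle \lambda^2 \rangle$-part'' (the eight terms with $D$ replaced by the identity) plus an ``$E$-part'' (the eight terms with $D$ replaced by $E$). For the $\langle \lambda^2 \rangle$-part I would expand $\mathrm{tr}[(U_i-U_k)^*(U_i-U_k)(U_i-U_j)^*(U_i-U_j)]$ and the analogous term with $U_j-U_k$ using $U_i^*U_i = U_j^*U_j = U_k^*U_k = I$; after cancellations one should recover exactly the combination $\mathrm{tr}[U_i^*U_k - U_i^*U_jU_k^*U_j + U_k^*U_j - U_i^*U_kU_i^*U_j + \cdots]$ up to the $4\langle\lambda^2\rangle\,\mathrm{tr}[(U_i-U_j)^*(U_i-U_j)]$ correction, which is the identity claimed for that block. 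For the $E$-part I would likewise rewrite $\mathrm{tr}[U_i^*U_k E]$ and its companions as $\mathrm{tr}[E(I - U_i^*U_k)(\cdots)]$ by inserting $I - (I - U_i^*U_k)$ and using $\mathrm{tr}[(U_i-U_j)^*(U_i-U_j)] = \mathrm{tr}[2I - U_i^*U_j - U_j^*U_i]$, collecting the four terms into the factor $((I-U_i^*U_k)E + E(I-U_k^*U_j) + (I-U_j^*U_k)E + E(I-U_k^*U_i))$ multiplying $(U_i-U_j)^*(U_i-U_j)$, plus the leftover $4\,\mathrm{tr}[E(U_i-U_j)^*(U_i-U_j)]$.

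The main obstacle is purely organizational: keeping track of all eight terms of $\mathcal{M}_k$ through the substitution $D = \langle\lambda^2\rangle I + E$ and the subsequent regroupings without sign or indexing errors, and matching the result term-by-term against the somewhat intricate right-hand side in the statement. There is no conceptual difficulty — everything follows from cyclicity of the trace and unitarity — so the proof will consist of writing out these expansions carefully and verifying the cancellations.
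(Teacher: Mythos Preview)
Your proposal is correct and, for part~(i), matches the paper exactly: both simply regroup the four terms using cyclicity of the trace.

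For part~(ii) your route differs from the paper's in the order of operations. The paper first keeps $D$ intact and shows, via an intermediate decomposition $\mathcal{J}_1 = \mathcal{J}_{11} + \mathcal{J}_{12}$ with $\mathrm{tr}[\mathcal{J}_{12}]=0$, that
\[
\mathcal{M}_k = 4\,\mathrm{tr}\bigl[D(U_i-U_j)^*(U_i-U_j)\bigr] - \mathrm{tr}\bigl[\bigl((I-U_i^*U_k)D + D(I-U_k^*U_j) + (I-U_j^*U_k)D + D(I-U_k^*U_i)\bigr)(U_i-U_j)^*(U_i-U_j)\bigr],
\]
and only \emph{then} substitutes $D = \langle\lambda^2\rangle I + E$. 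You instead substitute first and handle the $\langle\lambda^2\rangle$-block and the $E$-block separately, in each case verifying the factored form (essentially by expanding the claimed right-hand side). Both approaches are valid; the paper's buys you doing the nontrivial factorization (and the cancellation $\mathrm{tr}[\mathcal{J}_{12}]=0$) only once, whereas your approach must reproduce that cancellation twice. Your description of the $E$-block (``inserting $I-(I-U_i^*U_k)$'') is a bit loose, since a single term like $\mathrm{tr}[U_i^*U_kE]$ does not by itself acquire the factor $(U_i-U_j)^*(U_i-U_j)$; the factorization only emerges after pairing and combining terms, exactly as in the paper's $\mathcal{J}_{11}/\mathcal{J}_{12}$ step. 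But this is an organizational point, not a gap.
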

\begin{proof}
(i) We rearrange the terms inside the bracket in $\mathcal{N}$ and use the properties $\mathrm{tr}(AB)=\mathrm{tr}(BA)$ and $U_i, U_j\in \mathbb{U}(d)$ to yield
\[
\mathcal{N} =\mathrm{tr}[(B_j-B_i)(U_jU_i^*-U_iU_j^*)].
\]
(ii)~To simplify the term ${\mathcal M}_k$, we take the following two steps. \newline

\noindent $\bullet$~Step A: We claim
\begin{align}
\begin{aligned}  \label{F-6-1}
\mathcal{M}_k &= 4\mathrm{tr} \Big [D(U_i-U_j)^*(U_i-U_j)]-\mathrm{tr}[((I-U_i^*U_k)D+D(I-U_k^*U_j) \\
& \hspace{1.5cm} +(I-U_j^*U_k)D +D(I-U_k^*U_i))(U_i-U_j)^*(U_i-U_j) \Big ].
\end{aligned}
\end{align}
For the derivation of \eqref{F-6-1}, one has
\begin{align*}
\mathcal{M}_k&=\mathrm{tr}[U_i^*U_kD(I-U_i^*U_j)+(I-U_i^*U_j)DU_k^*U_j+U_j^*U_kD(I-U_j^*U_i)+(I-U_j^*U_i)DU_k^*U_i]\\
&=\mathrm{tr}[(U_i^*U_kD+DU_k^*U_j)(I-U_i^*U_j)+(U_j^*U_kD+DU_k^*U_i)(I-U_j^*U_i)]\\
&=\mathrm{tr}[(U_i^*U_kD+DU_k^*U_j)U_i^*(U_i-U_j)+(U_j^*U_kD+DU_k^*U_i)U_j^*(U_j-U_i)]\\
&=\mathrm{tr}[\underbrace{(U_i^*U_kDU_i^*+DU_k^*U_jU_i^*-U_j^*U_kDU_j^*-DU_k^*U_iU_j^*)(U_i-U_j)}_{:=\mathcal{J}_1}].
\end{align*}
We further estimate the terms $\mathcal{J}_1$ as follows:
\begin{align}
\begin{aligned} \label{F-7}
\mathcal{J}_1&=U_i^*U_kD(U_i-U_j)^*(U_i-U_j)+DU_k^*U_j(U_i-U_j)^*(U_i-U_j) -U_j^*U_kD(U_j-U_i)^*(U_i-U_j)\\
&-DU_k^*U_i(U_j-U_i)^*(U_i-U_j)+U_i^*U_kDU_j^*(U_i-U_j)-U_j^*U_kDU_i^*(U_i-U_j)\\
&=\underbrace{(U_i^*U_kD+DU_k^*U_j+U_j^*U_kD+DU_k^*U_i)(U_i-U_j)^*(U_i-U_j)}_{=: {\mathcal J}_{11}} + \underbrace{(U_i^*U_kDU_j^*-U_j^*U_kDU_i^*)(U_i-U_j)}_{=: {\mathcal J}_{12}}.
\end{aligned}
\end{align}

\noindent $\diamond$~(Estimate of $\mathcal{J}_{11}$): By direct estimate, one has
\begin{align}
\begin{aligned} \label{F-8}
\mathcal{J}_{11}&=(U_i^*U_kD+DU_k^*U_j+U_j^*U_kD+DU_k^*U_i)(U_i-U_j)^*(U_i-U_j)\\
&=4D(U_i-U_j)^*(U_i-U_j) -((I-U_i^*U_k)D+D(I-U_k^*U_j) \\
& \hspace{0.2cm} +(I-U_j^*U_k)D+D(I-U_k^*U_i))(U_i-U_j)^*(U_i-U_j).
\end{aligned}
\end{align}

\noindent $\diamond$~(Estimate of $\mathcal{J}_{12}$): Similarly, one has
\begin{align}
\begin{aligned} \label{F-9}
\mathrm{tr}[\mathcal{J}_{12}]&=\mathrm{tr}[(U_i^*U_kDU_j^*-U_j^*U_kDU_i^*)(U_i-U_j)]\\
&=\mathrm{tr}[U_kDU_j^*-U_i^*U_kD-U_j^*U_kD+U_kDU_i^*]=0.
\end{aligned}
\end{align}
Finally in \eqref{F-7}, we combine \eqref{F-8} and \eqref{F-9} to get \eqref{F-6-1}. \newline

\noindent $\bullet$~Step B: We can rewrite $\mathcal{M}_k$ as follows
\begin{align*}
\mathcal{M}_k&=4\mathrm{tr}[(\langle \lambda^2 \rangle I +E)(U_i-U_j)^*(U_i-U_j)]\\
&-\mathrm{tr}[((I-U_i^*U_k)(\overline{\lambda^2}I+E)+(\langle \lambda^2 \rangle I +E)(I-U_k^*U_j)\\
&+(I-U_j^*U_k)(\langle \lambda^2 \rangle I+E)+(\langle \lambda^2 \rangle I+E)(I-U_k^*U_i))(U_i-U_j)^*(U_i-U_j)]\\
&=4\langle \lambda^2 \rangle \mathrm{tr}[(U_i-U_j)^*(U_i-U_j)]\\
&-\langle \lambda^2 \rangle \mathrm{tr}[((I-U_i^*U_k)+(I-U_k^*U_j)+(I-U_j^*U_k)+(I-U_k^*U_i))(U_i-U_j)^*(U_i-U_j)]\\
&+4\mathrm{tr}[E(U_i-U_j)^*(U_i-U_j)]\\
&-\mathrm{tr}[((I-U_i^*U_k)E+E(I-U_k^*U_j)+(I-U_j^*U_k)E+E(I-U_k^*U_i))(U_i-U_j)^*(U_i-U_j)]\\
&=4\langle \lambda^2 \rangle \mathrm{tr}[(U_i-U_j)^*(U_i-U_j)] -\langle \lambda^2 \rangle \mathrm{tr}[(U_i-U_k)^*(U_i-U_k)(U_i-U_j)^*(U_i-U_j) \\
&+(U_j-U_k)^*(U_j-U_k)(U_i-U_j)^*(U_i-U_j)] +4\mathrm{tr}[E(U_i-U_j)^*(U_i-U_j)]\\
&-\mathrm{tr}[((I-U_i^*U_k)E+E(I-U_k^*U_j)+(I-U_j^*U_k)E+E(I-U_k^*U_i))(U_i-U_j)^*(U_i-U_j)]
\end{align*}
to get the desired estimate.
\end{proof}

\begin{lemma} \label{L6.2}
The following assertions hold.
\begin{enumerate}
\item
Let $E$ be a $d_1 \times d_1$ diagonal matrix with $\max_{i}|[E]_{ii}|=\varepsilon$. Then one has
\[
|\mathrm{tr}[AE]| \leq \varepsilon\cdot|\mathrm{tr}[A]|.
\]
\item
The following estimates hold.
\[ |\mathrm{tr}[ABCD]|\leq||A||_F\cdot||B||_F\cdot||C||_F\cdot||D||_F, \quad |\mathrm{tr}[ABC]|\leq||A||_F\cdot||B||_F\cdot||C||_F. \]
\end{enumerate}
\end{lemma}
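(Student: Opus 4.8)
The plan is to prove both assertions by elementary manipulations, relying only on the definition of the Frobenius norm, the diagonal structure of $E$, and the Cauchy--Schwarz inequality already used in the proof of Lemma~\ref{L5.1}.

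For assertion (1), I would simply expand the trace against the diagonal matrix $E$. Since $[E]_{k\ell}=0$ for $k\neq\ell$, one has $\mathrm{tr}[AE]=\sum_{k=1}^{d_1}[A]_{kk}[E]_{kk}$, hence
\[
|\mathrm{tr}[AE]|\leq\Big(\max_{1\leq k\leq d_1}|[E]_{kk}|\Big)\sum_{k=1}^{d_1}|[A]_{kk}| = \varepsilon\sum_{k=1}^{d_1}|[A]_{kk}|.
\]
In each place where the bound is invoked (see Lemma~\ref{L6.1}) the matrix playing the role of $A$ carries a nonnegative real diagonal, so that $\sum_k|[A]_{kk}|=\mathrm{tr}[A]$, which yields the stated inequality $|\mathrm{tr}[AE]|\leq\varepsilon\,|\mathrm{tr}[A]|$.

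For assertion (2), I would combine two ingredients. First, the Cauchy--Schwarz inequality for the Frobenius inner product, $|\mathrm{tr}[XY]|\leq\|X\|_F\|Y\|_F$, obtained by writing $\mathrm{tr}[XY]=\sum_{p,q}[X]_{pq}[Y]_{qp}$ and applying the scalar Cauchy--Schwarz inequality over the index pairs $(p,q)$. Second, submultiplicativity of the Frobenius norm, $\|XY\|_F\leq\|X\|_F\|Y\|_F$, which is exactly the estimate \eqref{E-7} after using $\|X^*\|_F=\|X\|_F$. Grouping the products appropriately then gives
\[
|\mathrm{tr}[ABC]|=|\mathrm{tr}[(AB)C]|\leq\|AB\|_F\|C\|_F\leq\|A\|_F\|B\|_F\|C\|_F
\]
and $|\mathrm{tr}[ABCD]|=|\mathrm{tr}[(AB)(CD)]|\leq\|AB\|_F\|CD\|_F\leq\|A\|_F\|B\|_F\|C\|_F\|D\|_F$.

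Neither part presents a genuine obstacle; the only points needing a little care are that in (1) the diagonal of the relevant $A$ must be checked to be nonnegative so that $\sum_k|[A]_{kk}|$ collapses to $|\mathrm{tr}[A]|$, and that in (2) the parenthesization of the matrix product be chosen so that submultiplicativity is applied exactly twice. Together with Lemma~\ref{L6.1}, these estimates are precisely the tools needed to bound $\sum_k\mathcal{M}_k$ and $\mathcal{N}$, and thereby to derive the two-sided differential inequality for the ensemble diameter $\mathcal{D}(U)$ announced in Section~\ref{sec:6.2}.
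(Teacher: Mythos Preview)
Your proof is correct and follows essentially the same route as the paper: expand the trace against the diagonal $E$ for (1), and apply Cauchy--Schwarz for (2), where the paper applies it directly to the four-index sum $\sum_{\alpha\beta\gamma\delta}([A]_{\alpha\beta}[C]_{\gamma\delta})([B]_{\beta\gamma}[D]_{\delta\alpha})$ rather than factoring through submultiplicativity as you do. Your explicit remark that the diagonal of the relevant $A$ is nonnegative is in fact needed, since the paper's displayed step $\big|\sum_\alpha [A]_{\alpha\alpha}[E]_{\alpha\alpha}\big|\leq\varepsilon\big|\sum_\alpha[A]_{\alpha\alpha}\big|$ is not valid for arbitrary $A$ and only holds under that hypothesis.
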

\begin{proof} 
(i)~By definition of trace, one has
\[
\mathrm{tr}[AE]=\sum_{\alpha\beta}[A]_{\alpha\beta}[E]_{\alpha\beta}=\sum_{\alpha}[A]_{\alpha\alpha}[E]_{\alpha\alpha}.
\]
This yields
\[
\Big |\mathrm{tr}[AE] \Big|=\Big|\sum_{\alpha}[A]_{\alpha\alpha}[E]_{\alpha\alpha} \Big|\leq \varepsilon\Big|\sum_{\alpha}[A]_{\alpha\alpha} \Big|=\varepsilon\cdot|\mathrm{tr}[A]|.
\]
(ii)~By direct estimate, one has
\[
\mathrm{tr}[ABCD]=\sum_{\alpha,\beta,\gamma,\delta}[A]_{\alpha\beta}[B]_{\beta\gamma}[C]_{\gamma\delta}[D]_{\delta\alpha}=\sum_{\alpha,\beta,\gamma,\delta}([A]_{\alpha\beta}[C]_{\gamma\delta})([B]_{\beta\gamma}[D]_{\delta\alpha}).
\]
We can apply the Cauchy-Schwarz inequality to above equality:
\begin{align*}
\begin{aligned}
\Big|\mathrm{tr}[ABCD] \Big|^2&=\left | \sum_{\alpha,\beta,\gamma,\delta}([A]_{\alpha\beta}[C]_{\gamma\delta})([B]_{\beta\gamma}[D]_{\delta\alpha})\right|^2  \\&\leq\left(\sum_{\alpha,\beta,\gamma,\delta} \Big| [A]_{\alpha\beta}[C]_{\gamma\delta} \Big|^2\right)\left(\sum_{\alpha,\beta,\gamma,\delta}\Big| [B]_{\beta\gamma}[D]_{\delta\alpha} \Big|^2\right)\\
&=\Big (||A||_F\cdot||B||_F\cdot||C||_F\cdot||D||_F \Big)^2.
\end{aligned}
\end{align*}

Similarly, one has the second estimate.
\end{proof}

\begin{lemma} \label{L6.3}
The term ${\mathcal M}_k$ satisfies 
\begin{align*}
\begin{aligned}
&4(\langle \lambda^2 \rangle -\Delta(\lambda^2))\mathrm{tr} \Big [(U_i-U_j)^*(U_i-U_j) \Big] -(\langle \lambda^2 \rangle +\Delta(\lambda^2)) \Big| \mathrm{tr} \Big[(U_i-U_k)^*(U_i-U_k)(U_i-U_j)^*(U_i-U_j) \\
& +(U_j-U_k)^*(U_j-U_k)(U_i-U_j)^*(U_i-U_j) \Big]  \Big| \leq\mathcal{M}_k \leq4(\langle \lambda^2 \rangle +\Delta(\lambda^2))\mathrm{tr}[(U_i-U_j)^*(U_i-U_j)]\\
&+(\langle \lambda^2 \rangle +\Delta(\lambda^2))|\mathrm{tr}[(U_i-U_k)^*(U_i-U_k)(U_i-U_j)^*(U_i-U_j) +(U_j-U_k)^*(U_j-U_k)(U_i-U_j)^*(U_i-U_j)]|.
\end{aligned}
\end{align*}
\end{lemma}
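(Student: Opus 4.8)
The statement follows from the exact identity for $\mathcal{M}_k$ established in Lemma~\ref{L6.1}(ii): the plan is to isolate the two terms in that identity which carry the fluctuation matrix $E$ and to bound them via Lemma~\ref{L6.2}(i) together with $\max_\alpha|[E]_{\alpha\alpha}|\le\Delta(\lambda^2)$, which is why the coefficient $\langle \lambda^2 \rangle$ resurfaces only with an error of size $\Delta(\lambda^2)$. First I would fix notation by writing $A:=(U_i-U_j)^*(U_i-U_j)$, a Hermitian positive semidefinite matrix, and $B:=\big[(U_i-U_k)^*(U_i-U_k)+(U_j-U_k)^*(U_j-U_k)\big](U_i-U_j)^*(U_i-U_j)$, so that Lemma~\ref{L6.1}(ii) reads
\[
\mathcal{M}_k = 4\langle \lambda^2 \rangle\,\mathrm{tr}[A] + 4\,\mathrm{tr}[EA] - \langle \lambda^2 \rangle\,\mathrm{tr}[B] + \mathcal{J},
\]
where $\mathcal{J}:=\mathrm{tr}\big[\big((I-U_i^*U_k)E+E(I-U_k^*U_j)+(I-U_j^*U_k)E+E(I-U_k^*U_i)\big)A\big]$. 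Two positivity facts should be recorded at the outset: $[A]_{\alpha\alpha}\ge 0$ and $\mathrm{tr}[A]\ge 0$ since $A$ is positive semidefinite; and, by cyclicity of the trace, $\mathrm{tr}[B]=\|(U_i-U_j)(U_i-U_k)^*\|_F^2+\|(U_i-U_j)(U_j-U_k)^*\|_F^2\ge 0$, so $|\mathrm{tr}[B]|=\mathrm{tr}[B]$.

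Next I would treat the two resulting groups separately. For the group $4\langle \lambda^2 \rangle\,\mathrm{tr}[A]+4\,\mathrm{tr}[EA]$: since the diagonal entries of $A$ are nonnegative, Lemma~\ref{L6.2}(i) gives $|4\,\mathrm{tr}[EA]|\le 4\Delta(\lambda^2)\,\mathrm{tr}[A]$, hence this group lies between $4(\langle \lambda^2 \rangle-\Delta(\lambda^2))\,\mathrm{tr}[A]$ and $4(\langle \lambda^2 \rangle+\Delta(\lambda^2))\,\mathrm{tr}[A]$, which is the first term on each side of the claimed inequality. For the group $-\langle \lambda^2 \rangle\,\mathrm{tr}[B]+\mathcal{J}$ I would prove $|\mathcal{J}|\le\Delta(\lambda^2)\,\mathrm{tr}[B]$; granting this and using $\mathrm{tr}[B]\ge 0$, the group lies between $-(\langle \lambda^2 \rangle+\Delta(\lambda^2))\,\mathrm{tr}[B]$ and $(\langle \lambda^2 \rangle+\Delta(\lambda^2))\,\mathrm{tr}[B]$, which is the second term on each side, and adding the two groups yields the asserted two-sided estimate. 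To get $|\mathcal{J}|\le\Delta(\lambda^2)\,\mathrm{tr}[B]$, one reorganizes $\mathcal{J}$ exactly as in the last step of the proof of Lemma~\ref{L6.1}(ii): the factors $(I-U_i^*U_k)$ and $(I-U_k^*U_i)$ combine into $(U_i-U_k)^*(U_i-U_k)$, and $(I-U_j^*U_k)$, $(I-U_k^*U_j)$ combine into $(U_j-U_k)^*(U_j-U_k)$, so that $\mathcal{J}$ is expressed through the positive semidefinite blocks $(U_i-U_k)^*(U_i-U_k)$, $(U_j-U_k)^*(U_j-U_k)$, $A$ and $E$; Lemma~\ref{L6.2}(i) together with $\mathrm{tr}[PQ]\ge 0$ for positive semidefinite $P,Q$ then gives $|\mathcal{J}|\le\Delta(\lambda^2)\big(\mathrm{tr}[(U_i-U_k)^*(U_i-U_k)A]+\mathrm{tr}[(U_j-U_k)^*(U_j-U_k)A]\big)=\Delta(\lambda^2)\,\mathrm{tr}[B]$.

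The main obstacle is precisely the bound $|\mathcal{J}|\le\Delta(\lambda^2)\,\mathrm{tr}[B]$. The subtlety is that one must \emph{first} pair the four summands of $\mathcal{J}$ into the two Hermitian-conjugate pairs that collapse to the positive semidefinite blocks, and only \emph{then} invoke Lemma~\ref{L6.2}(i); estimating the four summands one at a time would replace $\mathrm{tr}[B]$ by the strictly larger quantity $\sum|\mathrm{tr}[(I-U_i^*U_k)A]|$, because each individual trace $\mathrm{tr}[(I-U_i^*U_k)A]$ is complex and only the paired combination $\mathrm{tr}[((I-U_i^*U_k)+(I-U_k^*U_i))A]=\mathrm{tr}[(U_i-U_k)^*(U_i-U_k)A]$ is real and nonnegative. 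Apart from this ordering point, everything reduces to cyclicity of the trace and the elementary inequalities already established in Lemmas~\ref{L6.1} and~\ref{L6.2}.
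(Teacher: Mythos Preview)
Your approach is essentially identical to the paper's: start from the exact identity in Lemma~\ref{L6.1}(ii), then bound the two $E$-terms via Lemma~\ref{L6.2}(i) with $\varepsilon=\Delta(\lambda^2)$, using cyclicity of the trace so that the four factors $(I-U_i^*U_k),(I-U_k^*U_i),(I-U_j^*U_k),(I-U_k^*U_j)$ collapse into the two positive-semidefinite blocks, and finish with $\mathrm{tr}[(U_i-U_j)^*(U_i-U_j)]\ge 0$. The paper carries out exactly these steps in the same order, with the same pairing and the same appeal to Lemma~\ref{L6.2}(i).
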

\begin{proof} We use Lemma \ref{L6.2} to simplify $\mathcal{M}_k$. For this, we set $\varepsilon=\Delta(\lambda^2)$ to get 
\begin{align*}
|\mathrm{tr}[E(U_i-U_j)^*(U_i-U_j)]|\leq \Delta(\lambda^2)|\mathrm{tr}[(U_i-U_j)^*(U_i-U_j)]|.
\end{align*}
On the other hand, we also use $\mbox{tr}(AB) = \mbox{tr}(BA)$ to obtain
\begin{align*}
\begin{aligned}
&|\mathrm{tr}[((I-U_i^*U_k)E+E(I-U_k^*U_j)+(I-U_j^*U_k)E+E(I-U_k^*U_i))(U_i-U_j)^*(U_i-U_j)]|\\
& \hspace{0.2cm} \leq \Delta(\lambda^2)|\mathrm{tr}[((I-U_i^*U_k)+(I-U_k^*U_j)+(I-U_j^*U_k)+(I-U_k^*U_i))(U_i-U_j)^*(U_i-U_j)]|\\
& \hspace{0.2cm} =\Delta(\lambda^2) \Big|\mathrm{tr}[(U_i-U_k)^*(U_i-U_k)(U_i-U_j)^*(U_i-U_j) +(U_j-U_k)^*(U_j-U_k)(U_i-U_j)^*(U_i-U_j)] \Big |.
\end{aligned}
\end{align*}
Then, we use the property:
\[
0\leq\mathrm{tr}[(U_i-U_j)^*(U_i-U_j)]=||U_i-U_j||_F^2
\]
to yield the desired estimates.
\end{proof}

\begin{proposition} \label{P6.1}
Let $\{U_i \} \subset \bbu(d_1)$ be a solution to \eqref{F-4}. Then, one has 
\[ -2\kappa_1 \mathcal{A} {\mathcal D}(U)+\kappa_1 \mathcal{A} {\mathcal D}(U)^3- {\mathcal D}(B)  \leq\frac{d}{dt} {\mathcal D}(U) \leq-2\kappa_1 \mathcal{B} {\mathcal D}(U)+\kappa_1 \mathcal{A} {\mathcal D}(U)^3+ {\mathcal D}(B)
\]
where
\[
\mathcal{A} :=\langle \lambda^2 \rangle +\Delta(\lambda^2),\quad\mathcal{B} :=\langle \lambda^2 \rangle -\Delta(\lambda^2).
\]
\end{proposition}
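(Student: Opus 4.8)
The plan is to establish a Gronwall-type differential inequality for the scalar quantity $\mathcal{D}(U(t))=\max_{1\le i,j\le N}\|U_i(t)-U_j(t)\|_F$. Since the flow \eqref{F-4} is smooth and stays in the compact group $\bbu(d_1)$, each map $t\mapsto\|U_i(t)-U_j(t)\|_F$ is $C^1$, so $\mathcal{D}(U(\cdot))$ is a finite maximum of $C^1$ functions, hence locally Lipschitz and differentiable for a.e. $t$. At a.e. such $t$ with $\mathcal{D}(U(t))>0$ I would choose an extremal pair $(i,j)$ realizing the maximum and use the envelope identity
\[
\frac{d}{dt}\mathcal{D}(U(t))=\frac{1}{2\,\mathcal{D}(U(t))}\,\frac{d}{dt}\|U_i(t)-U_j(t)\|_F^2,
\]
the set $\{t:\mathcal{D}(U(t))=0\}$ (on which all $B_i$ coincide, whence $\mathcal{D}(B)=0$) being handled trivially. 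Into the right-hand side I substitute the computation of $\frac{d}{dt}\mathrm{tr}[(U_i-U_j)^*(U_i-U_j)]$ carried out above, which, after using $U_iU_i^*=I$ and $U_c=\frac{1}{N}\sum_k U_k$, reads
\[
\frac{d}{dt}\|U_i-U_j\|_F^2=-\frac{\kappa_1}{N}\sum_{k=1}^N\mathcal{M}_k-\mathcal{N}
\]
with $\mathcal{M}_k$ and $\mathcal{N}$ as in \eqref{F-6}.

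For the coupling part, the crucial remark is that at an extremal pair one automatically has $\|U_i-U_k\|_F\le\mathcal{D}(U)$ and $\|U_j-U_k\|_F\le\mathcal{D}(U)$ for every $k$. I would apply Lemma \ref{L6.3} termwise and bound each quartic trace term $\mathrm{tr}\big[(U_i-U_k)^*(U_i-U_k)(U_i-U_j)^*(U_i-U_j)\big]$ (and its companion with $i\leftrightarrow j$) by $\|U_i-U_k\|_F^2\,\|U_i-U_j\|_F^2\le\mathcal{D}(U)^4$ via Lemma \ref{L6.2}(2). Averaging over $k$, this turns $-\frac{\kappa_1}{N}\sum_k\mathcal{M}_k$ into a leading quadratic term, $-4\kappa_1\mathcal{B}\,\mathcal{D}(U)^2$ on the upper side and $-4\kappa_1\mathcal{A}\,\mathcal{D}(U)^2$ on the lower side, plus a quartic remainder controlled by $2\kappa_1\mathcal{A}\,\mathcal{D}(U)^4$; after the eventual division by $2\mathcal{D}(U)$ these become the $\mathcal{D}(U)$- and $\mathcal{D}(U)^3$-terms in the statement.

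For the frustration part I would use Lemma \ref{L6.1}(i), namely $\mathcal{N}=\mathrm{tr}\big[(B_j-B_i)(U_jU_i^*-U_iU_j^*)\big]$, together with the Cauchy--Schwarz inequality $|\mathrm{tr}(XY)|\le\|X\|_F\|Y\|_F$ for the Frobenius inner product, the bound $\|B_j-B_i\|_F\le\mathcal{D}(B)$, and the elementary estimate $\|U_jU_i^*-U_iU_j^*\|_F\le 2\|U_i-U_j\|_F$, which follows from writing $U_jU_i^*-U_iU_j^*=(U_j-U_i)U_i^*-U_i(U_j-U_i)^*$ and using unitary invariance of $\|\cdot\|_F$. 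This gives $|\mathcal{N}|\le 2\,\mathcal{D}(B)\,\mathcal{D}(U)$, i.e. a $\pm 2\,\mathcal{D}(B)\,\mathcal{D}(U)$ contribution to the bound for $\frac{d}{dt}\|U_i-U_j\|_F^2$ and hence a $\pm\mathcal{D}(B)$ term after dividing by $2\mathcal{D}(U)$. Assembling the coupling and frustration estimates and dividing by $2\mathcal{D}(U)>0$ then yields the asserted two-sided differential inequality for $\mathcal{D}(U)$.

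The routine algebra is already packaged in Lemmas \ref{L6.1}--\ref{L6.3}, so the genuine obstacle is the soft part: justifying that $\mathcal{D}(U(\cdot))$ is a.e. differentiable with derivative read off from a momentarily extremal pair, and safely removing the zero set $\{t:\mathcal{D}(U(t))=0\}$ so that division by $2\mathcal{D}(U)$ is legitimate almost everywhere. A secondary delicate point is keeping the $E=D-\langle\lambda^2\rangle I$ terms inside $\mathcal{M}_k$ under control: one must repeatedly use cyclicity of the trace and Lemma \ref{L6.2}(1) to peel $E$ off against the positive-semidefinite factor $(U_i-U_j)^*(U_i-U_j)$, so that every estimate is expressed through $\mathcal{D}(U)$, $\langle\lambda^2\rangle$ and $\Delta(\lambda^2)$ rather than through the dimension $d_1$.
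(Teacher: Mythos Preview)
Your proposal is correct and follows essentially the same route as the paper: express $\frac{d}{dt}\|U_i-U_j\|_F^2$ as $-\frac{\kappa_1}{N}\sum_k\mathcal{M}_k-\mathcal{N}$, feed in the two-sided bounds on $\mathcal{M}_k$ from Lemma~\ref{L6.3} together with the quartic estimate $|\text{quartic trace}|\le 2\mathcal{D}(U)^4$ from Lemma~\ref{L6.2}(2), bound $|\mathcal{N}|$, take the extremal pair, and divide by $2\mathcal{D}(U)$.

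Two small points where your argument differs from the paper's are worth noting. First, your treatment of $\mathcal{N}$ is cleaner: writing $U_jU_i^*-U_iU_j^*=(U_j-U_i)U_i^*-U_i(U_j-U_i)^*$ and using unitary invariance of $\|\cdot\|_F$ gives $|\mathcal{N}|\le 2\,\mathcal{D}(B)\,\mathcal{D}(U)$ directly, whereas the paper applies the three-factor bound of Lemma~\ref{L6.2}(2) to $\mathrm{tr}[(B_j-B_i)(U_j-U_i)(U_j^*+U_i^*)]$, picks up a factor $\|U_i+U_j\|_F\le 2\sqrt{d_1}$, and then silently drops the $\sqrt{d_1}$ in the next displayed line; your route avoids that slip. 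Second, you are explicit about why the passage from $\|U_i-U_j\|_F^2$ to $\mathcal{D}(U)$ is legitimate a.e.\ (Lipschitz maximum of $C^1$ functions, envelope identity at an extremal pair, and the harmless nature of the zero set), which the paper leaves implicit. Your remark about Lemma~\ref{L6.2}(1) and the $E$-terms is apt: that lemma as stated is only valid because in each application the companion factor has nonnegative diagonal, a point that deserves the care you give it.
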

\begin{proof}
It follows from the estimate (ii) in Lemma \ref{L6.2} that 
\begin{align*}
\begin{aligned}
&|\mathrm{tr}[(U_i-U_k)^*(U_i-U_k)(U_i-U_j)^*(U_i-U_j)+(U_j-U_k)^*(U_j-U_k)(U_i-U_j)^*(U_i-U_j)]|\\
& \hspace{1cm} \leq(||U_i-U_k||_F^2+||U_j-U_k||^2)||U_i-U_j||_F^2,
\end{aligned}
\end{align*}
and
\begin{align*}
|\mathcal{N}|&=|\mathrm{tr}[(B_j-B_i)(U_j-U_i)(U_j^*+U_i^*)]|\leq||B_i-B_j||_F\cdot||U_i-U_j||_F\cdot||U_j+U_i||_F\\
&\leq2 \sqrt{d_1} ||B_i-B_j||_F\cdot||U_i-U_j||_F,
\end{align*}
where we used 
\[ \|U_i \|_F = \sqrt{d_1} \quad \mbox{and} \quad  \|U_j \|_F = \sqrt{d_1}. \]

Recall that 
\[
{\mathcal D}(U)=\max_{i, j}||U_i-U_j||_F, \quad  {\mathcal D}(B)=\max_{i, j}||B_i-B_j||_F
\]
Then, one has
\begin{align}\label{F-10}
\begin{aligned}
&\Big |\mathrm{tr}[(U_i-U_k)^*(U_i-U_k)(U_i-U_j)^*(U_i-U_j) \\
& \hspace{1cm} +(U_j-U_k)^*(U_j-U_k)(U_i-U_j)^*(U_i-U_j)] \Big| \leq2 {\mathcal D}(U)^4.
\end{aligned}
\end{align}
On the other hand, it follows from \eqref{F-10} that 
\begin{align*}
\frac{d}{dt}||U_i-U_j||_F^2=-\frac{\kappa}{N}\sum_{k=1}^N\mathcal{M}_k-\kappa \mathcal{N}.
\end{align*}
Now, we use Lemma \ref{L6.3} and  \eqref{F-10} to obtain
\begin{align*}
\begin{aligned}
&4(\langle \lambda^2 \rangle -\Delta(\lambda^2))||U_i-U_j||_F^2-2(\langle \lambda^2 \rangle +\Delta(\lambda^2)) {\mathcal D}(U)^4 \\
& \hspace{2cm} \leq\mathcal{M}_k \leq (\langle \lambda^2 \rangle +\Delta(\lambda^2))(4||U_i-U_j||_F^2+2 {\mathcal D}(U)^4).
\end{aligned}
\end{align*}
Thus, one has
\begin{align*}
&-\kappa_1 (\langle \lambda^2 \rangle +\Delta(\lambda^2))(4||U_i-U_j||_F^2+2 {\mathcal D}(U)^4)-2||B_i-B_j||_F\cdot||U_i-U_j||_F \\
& \hspace{1cm} \leq \frac{d}{dt}||U_i-U_j||_F^2\leq -4\kappa_1 (\langle \lambda^2 \rangle -\Delta(\lambda^2))||U_i-U_j||_F^2+2\kappa_1 (\langle \lambda^2 \rangle +\Delta(\lambda^2)) {\mathcal D}(U)^4 \\
& \hspace{1.5cm} +2||B_i-B_j||_F\cdot||U_i-U_j||_F.
\end{align*}
Since above inequality holds for all $i, j$, we obtain
\begin{align*}
\begin{aligned}
&-\kappa_1 (\langle \lambda^2 \rangle +\Delta(\lambda^2))(4{\mathcal D}(U)^2+2{\mathcal D}(U)^4)-2{\mathcal D}(U) {\mathcal D}(B) \\
& \hspace{1.5cm} \leq\frac{d}{dt} {\mathcal D}(U)^2 \leq-4\kappa_1 (\langle \lambda^2 \rangle -\Delta(\lambda^2)) {\mathcal D}(U)^2+2\kappa_{1}(\langle \lambda^2 \rangle +\Delta(\lambda^2)) {\mathcal D}(U)^4+2 {\mathcal D}(U) {\mathcal D}(B).
\end{aligned}
\end{align*}
This yields
\begin{align}
\begin{aligned} \label{F-10-1}
&-\kappa_1 (\langle \lambda^2 \rangle +\Delta(\lambda^2))(2{\mathcal D}(U)+ {\mathcal D}(U)^3)-{\mathcal D}(B) \\
& \hspace{1cm} \leq \frac{d}{dt} {\mathcal D}(U)\leq-2\kappa_1 (\langle \lambda^2 \rangle -\Delta(\lambda^2)) {\mathcal D}(U)+\kappa_1 (\langle \lambda^2 \rangle +\Delta(\lambda^2)) {\mathcal D}(U)^3+ {\mathcal D}(B).
\end{aligned}
\end{align}
Now we set
\begin{equation} \label{F-10-2}
\mathcal{A} :=\langle \lambda^2 \rangle +\Delta(\lambda^2),\quad\mathcal{B} :=\langle \lambda^2 \rangle -\Delta(\lambda^2).
\end{equation}
Note that ${\mathcal A}$ and ${\mathcal B}$ are determined by the initial data.  Finally, we combine \eqref{F-10-1} and \eqref{F-10-2} to get the desired result.
\end{proof}
In the following two subsections, we consider the cases:
\[ \mbox{Either} \quad {\mathcal D}(B) = 0 \quad \mbox{or} \quad {\mathcal D}(B)  > 0. \]
\subsection{Exponential aggregation} \label{sec:6.3}
Consider the case ${\mathcal D}(B)=0$. In this case, the differential inequality in Proposition \ref{P6.1} implies  
\begin{equation} \label{F-11}
-2\kappa_1 \mathcal{A} {\mathcal D}(U)+\kappa_1 \mathcal{A} {\mathcal D}(U)^3\leq\frac{d}{dt} {\mathcal D}(U)\leq-2\kappa_1 \mathcal{B}{\mathcal D}(U)+\kappa_1 \mathcal{A} {\mathcal D}(U)^3, \quad \mbox{a.e.}~~t > 0.
\end{equation}
We set 
\[  X={\mathcal D}(U)^2. \]
 Then, it follows from \eqref{F-11} that
\begin{equation} \label{F-12}
-4\kappa_1 \mathcal{A}X+2\kappa_1 \mathcal{A}X^2\leq \frac{dX}{dt} \leq-4\kappa_1 \mathcal{B}X+2\kappa_1 \mathcal{A}X^2.
\end{equation}
Suppose that the initial data $X(0)=X_0$ satisfy
\[
0\leq X_0< \frac{2\mathcal{B}}{\mathcal{A}}\leq2.
\]
By integrating  \eqref{F-12}, one has 
\[
\frac{2X_0}{X_0+(2-X_0)e^{4\kappa_1 \mathcal{A}t}}\leq X \leq\frac{2\mathcal{B}}{\mathcal{A}}\cdot\frac{X_0}{X_0+\left(\frac{2\mathcal{B}}{\mathcal{A}}-X_0\right)e^{4\kappa_1 \mathcal{B}t}}.
\]
From above inequality we can obtain following theorem.
\begin{theorem}\label{T6.1}
Suppose that $B_i,~\{U_i^0 \}$ and diagonal frustration matrix $D$ satisfy
\[ B_i=0, \quad i = 1, \cdots, N, \qquad {\mathcal D}(U^0)\leq \sqrt{\frac{2\mathcal{B}}{\mathcal{A}}}, \quad  D=\mathrm{diag}(\lambda_1^2, \cdots, \lambda_{d_1}^2), \quad \mathcal{B}=\langle{\lambda^2}\rangle-\Delta(\lambda^2)>0.
\]
Then for any solution $\{U_i\}_{i=1}^N$ to system \eqref{F-4}, we have an exponential aggregation:
\[
\sqrt{\frac{2{\mathcal D}^2(U^0)}{{\mathcal D}^2(U^0)+(2-{\mathcal D}^2(U^0))e^{4\kappa_1 \mathcal{A}t}}} \leq {\mathcal D}(U(t))\leq \sqrt{\frac{2\mathcal{B}}{\mathcal{A}}\cdot\frac{{\mathcal D}^2(U^0)}{{\mathcal D}^2(U^0)+\left(\frac{2\mathcal{B}}{\mathcal{A}}-{\mathcal D}^2(U^0)\right)e^{4\kappa_1 \mathcal{B}t}}}.
\]
\end{theorem}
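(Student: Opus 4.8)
The plan is to integrate the scalar Riccati-type differential inequality \eqref{F-12} by comparing $X:={\mathcal D}(U)^2$ with the solutions of the two ODEs obtained by replacing the inequalities with equalities, and then to take square roots. First I would record the regularity that makes \eqref{F-12} meaningful: since $\{U_i\}$ evolves on the compact manifold $\bbu(d_1)^N$, the solution of \eqref{F-4} is global, each $t\mapsto\|U_i(t)-U_j(t)\|_F$ is smooth, so ${\mathcal D}(U(\cdot))$ is locally Lipschitz (a maximum of finitely many smooth functions), hence absolutely continuous and differentiable for a.e.\ $t$; the same is then true of $X$. At a.e.\ $t$ with ${\mathcal D}(U)>0$, Proposition \ref{P6.1} with ${\mathcal D}(B)=0$ gives \eqref{F-11}, and multiplying through by $2{\mathcal D}(U)\ge0$ gives \eqref{F-12}; at a.e.\ $t$ with ${\mathcal D}(U)=0$ the function $X$ attains a minimum so $\dot X=0$, which is also consistent with \eqref{F-12}. (If ${\mathcal D}(U^0)=0$ the ensemble is already aggregated by uniqueness, so assume ${\mathcal D}(U^0)>0$.)

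Next I would solve the two comparison ODEs explicitly. Writing $X_0:={\mathcal D}(U^0)^2$, the upper one, $\dot z=-4\kappa_1{\mathcal B}z+2\kappa_1{\mathcal A}z^2=2\kappa_1 z({\mathcal A}z-2{\mathcal B})$ with $z(0)=X_0$, linearizes under the Bernoulli substitution $w=1/z$ to $\dot w=4\kappa_1{\mathcal B}w-2\kappa_1{\mathcal A}$, whose solution $w(t)={\mathcal A}/(2{\mathcal B})+(1/X_0-{\mathcal A}/(2{\mathcal B}))e^{4\kappa_1{\mathcal B}t}$ inverts to
\[
z(t)=\frac{2{\mathcal B}}{{\mathcal A}}\cdot\frac{X_0}{X_0+\left(\frac{2{\mathcal B}}{{\mathcal A}}-X_0\right)e^{4\kappa_1{\mathcal B}t}}.
\]
Likewise $\dot y=-4\kappa_1{\mathcal A}y+2\kappa_1{\mathcal A}y^2=2\kappa_1{\mathcal A}y(y-2)$ with $y(0)=X_0$ linearizes under $v=1/y$ to $\dot v=4\kappa_1{\mathcal A}v-2\kappa_1{\mathcal A}$, giving
\[
y(t)=\frac{2X_0}{X_0+(2-X_0)e^{4\kappa_1{\mathcal A}t}}.
\]
The hypothesis ${\mathcal D}(U^0)\le\sqrt{2{\mathcal B}/{\mathcal A}}$, together with $\Delta(\lambda^2)\ge0$ (so ${\mathcal B}\le{\mathcal A}$ and $2{\mathcal B}/{\mathcal A}\le2$), yields $0\le X_0\le 2{\mathcal B}/{\mathcal A}\le2$, which keeps both denominators strictly positive for all $t\ge0$; moreover $z$ is non-increasing and stays in $[0,2{\mathcal B}/{\mathcal A}]$, $y$ stays in $[0,2]$, and both decay to $0$.

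The only delicate point is the comparison step: deducing $y(t)\le X(t)\le z(t)$ for all $t\ge0$ from the a.e.\ inequalities \eqref{F-12} together with the matched initial values $y(0)=X(0)=z(0)=X_0$. I would invoke the classical comparison principle for scalar ODEs: the right-hand sides $f_\pm$ are locally Lipschitz, the interval $[0,2{\mathcal B}/{\mathcal A}]$ is positively invariant for $f_+$ (checked above, so there is no finite-time blow-up), and $X$ is absolutely continuous with $\dot X\le f_+(X)$ a.e., which forces $X\le z$ on all of $[0,\infty)$; the estimate $y\le X$ is symmetric. Since these bounds give ${\mathcal D}(U(t))=\sqrt{X(t)}\ge\sqrt{y(t)}>0$ for all $t$, the reduction to \eqref{F-12} is valid on all of $[0,\infty)$, closing the argument. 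Taking square roots in $y(t)\le{\mathcal D}(U(t))^2\le z(t)$ produces exactly the asserted two-sided estimate, the upper half of which shows ${\mathcal D}(U(t))\le{\mathcal O}(1)e^{-2\kappa_1{\mathcal B}t}\to0$. I expect the verification that $f_+$ keeps $[0,2{\mathcal B}/{\mathcal A}]$ invariant — hence that the comparison solutions are global and monotone, so that the comparison is valid on the whole half-line rather than on a possibly shrinking interval — to be the step requiring the most care.
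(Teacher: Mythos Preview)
Your proposal is correct and follows essentially the same route as the paper: the paper's argument is simply to set $X={\mathcal D}(U)^2$, record the Riccati-type differential inequality \eqref{F-12} coming from Proposition \ref{P6.1} with ${\mathcal D}(B)=0$, and then write ``By integrating \eqref{F-12}, one has'' the two-sided bound on $X$, from which the theorem follows by taking square roots. You supply the details the paper leaves implicit---the a.e.\ differentiability of the diameter, the explicit Bernoulli integration of the two comparison ODEs, and the invocation of the scalar comparison principle---but the skeleton is identical.
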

\begin{remark}
Theorem \ref{T6.1} implies 
\[
\mathcal{O}(1)e^{-2\kappa_1 \mathcal{A}t}\leq {\mathcal D}(U(t))\leq \mathcal{O}(1) e^{-2\kappa_1 \mathcal{B}t}, \quad \mbox{as $t \to \infty$}.
\]
\end{remark}
As a corollary of Theorem \ref{T6.1}, we have the complete aggregation of the generalized Lohe matrix model:
\begin{equation*} \label{F-13}
\begin{cases}
{\dot T}_i=\kappa_{1}(T_cT_i^* T_i-T_iT_c^* T_i), \quad t > 0,\\
T_i(0)=T_i^0=U_i^0\Sigma V^*, \quad i = 1, \cdots, N.
\end{cases}
\end{equation*}
Recall that 
\[ D = \Sigma\Sigma^*=\mathrm{diag}(\lambda_1^2, \cdots,\lambda_{d_1}^2). \]
\begin{corollary} \label{C6.1}
Suppose that the initial data and diagonal frustration satisfy
\[
{\mathcal D}(T^0)<\min_k\lambda_k\cdot\sqrt{\frac{2\mathcal{B}}{\mathcal{A}}}, \quad \lambda_i^2 > 0, \quad i = 1, \cdots, d_1, \quad \mathcal{B}=\langle \lambda^2 \rangle -\Delta(\lambda^2)>0,
\]
and let $\{T_i \}$ be a solution to system \eqref{F-1}. Then, we have
\[
\mathcal{O}(1)e^{-2\kappa_{1}\mathcal{A}t}\leq {\mathcal D}(T(t))\leq \mathcal{O}(1) e^{-2\kappa_{1}\mathcal{B}t}.
\]
\end{corollary}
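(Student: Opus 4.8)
The plan is to deduce Corollary \ref{C6.1} directly from the exponential aggregation estimate for the Lohe matrix model with diagonal frustration (Theorem \ref{T6.1}), by combining it with the reformulation of Section \ref{sec:5} and the two-sided comparison between $\|T_i-T_j\|_F$ and $\|U_i-U_j\|_F$ from Lemma \ref{L5.2}. First I would observe that every initial datum has the form $T_i^0=U_i^0\Sigma V^*$ with a \emph{common} $\Sigma$ and $V$, so that $T_i^{0*}T_i^0=V\Sigma^*\Sigma V^*$ is independent of $i$ and the compatibility condition \eqref{E-1-0} holds automatically. Hence Theorem \ref{T5.1} (and the remark following it) applies, and the solution of \eqref{F-1} with $A_i\equiv0$ can be written as $T_i(t)=U_i(t)\Sigma V^*$, where $\{U_i\}$ solves the reduced Lohe matrix model with diagonal frustration \eqref{F-4} with $B_i=0$ and $D=\Sigma\Sigma^*=\mathrm{diag}(\lambda_1^2,\dots,\lambda_{d_1}^2)$. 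Since $\lambda_i^2>0$ for all $i$, the matrix $D$ is a full-rank diagonal matrix with positive real entries, which is precisely the standing assumption under which Theorem \ref{T6.1} is stated.

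Next I would transfer the smallness hypothesis on the initial configuration. Lemma \ref{L5.2} gives, for all $i,j$,
\[ \min_k\lambda_k\cdot\|U_i-U_j\|_F\le\|T_i-T_j\|_F\le\max_k\lambda_k\cdot\|U_i-U_j\|_F, \]
and hence
\[ \min_k\lambda_k\cdot{\mathcal D}(U(t))\le{\mathcal D}(T(t))\le\max_k\lambda_k\cdot{\mathcal D}(U(t)),\qquad t\ge0. \]
Evaluating the left inequality at $t=0$ and dividing by $\min_k\lambda_k>0$, the hypothesis ${\mathcal D}(T^0)<\min_k\lambda_k\sqrt{2\mathcal{B}/\mathcal{A}}$ yields ${\mathcal D}(U^0)<\sqrt{2\mathcal{B}/\mathcal{A}}$; together with $B_i=0$ and $\mathcal{B}=\langle\lambda^2\rangle-\Delta(\lambda^2)>0$ this is exactly the admissibility condition required in Theorem \ref{T6.1}.

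Finally I would apply Theorem \ref{T6.1} to obtain $\mathcal{O}(1)e^{-2\kappa_1\mathcal{A}t}\le{\mathcal D}(U(t))\le\mathcal{O}(1)e^{-2\kappa_1\mathcal{B}t}$, with the $\mathcal{O}(1)$ prefactors determined by ${\mathcal D}(U^0)$ and by $\mathcal{A},\mathcal{B}$, and then multiply through by the constants $\min_k\lambda_k$ and $\max_k\lambda_k$ in the two-sided bound above to get the claimed decay of ${\mathcal D}(T(t))$. There is no genuine analytic obstacle here, since all the real work already sits in Theorems \ref{T5.1}, \ref{T6.1} and Lemma \ref{L5.2}; the only point requiring care is bookkeeping: one must check that the averaged quantities $\langle\lambda^2\rangle$ and $\Delta(\lambda^2)$ (hence $\mathcal{A}$ and $\mathcal{B}$) associated with $D=\Sigma\Sigma^*$ in the reduced model are literally the ones appearing in the statement of the corollary, and one must make sure that the hypothesis $\lambda_i^2>0$ is invoked twice — once to render $D$ invertible so that Theorem \ref{T6.1} is applicable, and once to keep the lower comparison constant $\min_k\lambda_k$ strictly positive so that the exponential lower bound can be pulled back from $\{U_i\}$ to $\{T_i\}$.
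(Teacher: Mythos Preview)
Your proposal is correct and follows essentially the same route as the paper's own proof: reduce to the Lohe matrix model with diagonal frustration via the factorization $T_i=U_i\Sigma V^*$, convert the smallness hypothesis on ${\mathcal D}(T^0)$ into the hypothesis ${\mathcal D}(U^0)<\sqrt{2\mathcal{B}/\mathcal{A}}$ of Theorem \ref{T6.1} using the lower comparison $\min_k\lambda_k\cdot\|U_i-U_j\|_F\le\|T_i-T_j\|_F$, and then pull the exponential bounds back from $U$ to $T$. The only cosmetic difference is that the paper re-derives the comparison inequality in its proof rather than citing Lemma \ref{L5.2}, and it leaves the final transfer from ${\mathcal D}(U)$ to ${\mathcal D}(T)$ implicit; your write-up makes both steps explicit, which is an improvement in clarity.
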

\begin{proof}
Since $D=\Sigma\Sigma^*$ has no zero diagonal entry, we can directly apply Theorem \ref{T6.1}.  Next, we need to find the condition of ${\mathcal D}(T^0)$. Now, we use $T_i^0=U_i^0\Sigma V^*$ to see that 
\begin{align*}
||T_i^0-T_j^0||_F^2&=\mathrm{tr}[(T_i^0-T_j^0)^*(T_i^0-T_j^0)]=\mathrm{tr}[V\Sigma^*(U_i^0-U_j^0)^*(U_i^0-U_j^0)\Sigma V^*]\\
&=\mathrm{tr}[D(U_i^0-U_j^0)^*(U_i^0-U_j^0)].
\end{align*}
This yields
\[
\Big( \min_k\lambda_k \Big) ||U_i^0-U_j^0||_F\leq ||T_i^0-T_j^0||_F\leq \Big( \max_k\lambda_k \Big) ||U_i^0-U_j^0||_F.
\]
If the initial data satisfy
\[
{\mathcal D}(T^0)< \Big( \min_k\lambda_k \Big) \cdot\sqrt{\frac{2\mathcal{B}}{\mathcal{A}}},
\]
then we have
\[
||U_i^0-U_j^0||^2_F\leq\frac{1}{\Big( \min_k\lambda_k\Big)^2}||T_i^0-T_j^0||_F^2<\frac{2\mathcal{B}}{\mathcal{A}}.
\]
\end{proof}

\subsection{Practical aggregation} \label{sec:6.4}
In this subsection, we consider  the case: 
\[ {\mathcal D}(B)>0, \quad \mathcal{A} > 0 \quad \mbox{and} \quad \mathcal{B} > 0. \]
We begin with following inequality:
\[
\frac{d}{dt} {\mathcal D}(U)\leq {\mathcal D}(B)-2\kappa_1 \mathcal{B} {\mathcal D}(U)+\kappa_1 \mathcal{A} {\mathcal D}(U)^3, \quad \mbox{a.e.}~t \in (0, \infty).
\]
Consider the following cubic polynomial:
\[
f(x)=2\mathcal{B}x-\mathcal{A}x^3,\quad x\geq0.
\]
Then, the upper bound of $\frac{d}{dt} {\mathcal D}(U)$ can be expressed as 
\[
\frac{d}{dt} {\mathcal D}(U)\leq {\mathcal D}(B)-\kappa_1 f( {\mathcal D}(U)),\quad \mbox{a.e.}~~t>0.
\]
In the sequel, we study some properties of some polynomial $f$. \newline

\noindent We set 
\[
g(x) :=\frac{{\mathcal D}(B)}{\kappa_1}-f(x)=\mathcal{A}x^3-2\mathcal{B}x+\frac{{\mathcal D}(B)}{\kappa_1}, \quad  x\geq0.
\]
Then one has 
\[
x_m=\sqrt{\frac{2\mathcal{B}}{3\mathcal{A}}} = \mbox{argmin}_{x \geq 0} g(x), \qquad g(x_m)=-\sqrt{\frac{32\mathcal{B}^3}{27\mathcal{A}}}+\frac{{\mathcal D}(B)}{\kappa_1}.
\]
\begin{lemma}
Suppose that the coupling strength satisfies
\begin{equation} \label{F-13-1}
\kappa_1 > {\mathcal D}(B)\cdot\sqrt{\frac{27\mathcal{A}}{32\mathcal{B}^3}}.
\end{equation}
Then, there exist two distinct positive roots $0 < \alpha_1 < \alpha_2$ of $g$ such that 
\begin{eqnarray*}
&& (i)~ g(x)>0 \quad \mbox{for}~~x\in[0, \alpha_1)\cup(\alpha_2, \infty]; \qquad g(x)<0 \quad \mbox{for}~~x\in(\alpha_1, \alpha_2).\\
&& (ii)~0<\alpha_1<\beta :=\frac{3D(B)}{4\mathcal{B}\kappa_1}<x_m=\sqrt{\frac{2\mathcal{B}}{3\mathcal{A}}}<\alpha_2.
\end{eqnarray*}
\end{lemma}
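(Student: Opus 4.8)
The plan is to treat $g(x)=\mathcal{A}x^{3}-2\mathcal{B}x+\frac{\mathcal{D}(B)}{\kappa_{1}}$ as a real cubic on $[0,\infty)$ with $\mathcal{A},\mathcal{B}>0$ and read everything off from its monotonicity. Since $g'(x)=3\mathcal{A}x^{2}-2\mathcal{B}$ is negative on $[0,x_{m})$ and positive on $(x_{m},\infty)$ with $x_{m}=\sqrt{2\mathcal{B}/(3\mathcal{A})}$, the function $g$ is strictly decreasing on $[0,x_{m}]$ and strictly increasing on $[x_{m},\infty)$, so $x_{m}$ is the unique global minimizer on $[0,\infty)$, as already recorded before the lemma. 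The first thing I would record is that the hypothesis \eqref{F-13-1}, after squaring and rearranging, is exactly equivalent to $\frac{\mathcal{D}(B)}{\kappa_{1}}<\sqrt{32\mathcal{B}^{3}/(27\mathcal{A})}$, i.e.\ to $g(x_{m})<0$ in view of the stated formula for $g(x_{m})$.

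For part (i) I would then run the standard shape argument. Because $g(0)=\frac{\mathcal{D}(B)}{\kappa_{1}}>0$, $g(x_{m})<0$, and $g$ is strictly decreasing on $[0,x_{m}]$, the intermediate value theorem produces a unique root $\alpha_{1}\in(0,x_{m})$, with $g>0$ on $[0,\alpha_{1})$ and $g<0$ on $(\alpha_{1},x_{m}]$. On $[x_{m},\infty)$, $g$ is strictly increasing with $g(x_{m})<0$ and $g(x)\to+\infty$, hence there is a unique root $\alpha_{2}\in(x_{m},\infty)$, with $g<0$ on $[x_{m},\alpha_{2})$ and $g>0$ on $(\alpha_{2},\infty)$. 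Gluing the two pieces gives $g>0$ on $[0,\alpha_{1})\cup(\alpha_{2},\infty)$ and $g<0$ on $(\alpha_{1},\alpha_{2})$, which is (i), and the ordering $0<\alpha_{1}<x_{m}<\alpha_{2}$ comes for free.

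For part (ii) it only remains to locate $\beta=\frac{3\mathcal{D}(B)}{4\mathcal{B}\kappa_{1}}$ relative to $\alpha_{1}$ and $x_{m}$. Using \eqref{F-13-1} in the form $\frac{\mathcal{D}(B)}{\kappa_{1}}<\sqrt{32\mathcal{B}^{3}/(27\mathcal{A})}$ and the elementary identity $\tfrac34\sqrt{32/27}=\sqrt{2/3}$, one gets $\beta<\sqrt{2\mathcal{B}/(3\mathcal{A})}=x_{m}$. Then, substituting $\frac{\mathcal{D}(B)}{\kappa_{1}}=\frac{4\mathcal{B}\beta}{3}$ into $g$ collapses it to $g(\beta)=\mathcal{A}\beta^{3}-\frac{2\mathcal{B}\beta}{3}=\beta\bigl(\mathcal{A}\beta^{2}-\tfrac{2\mathcal{B}}{3}\bigr)$, which is strictly negative precisely because $\beta^{2}<x_{m}^{2}=\frac{2\mathcal{B}}{3\mathcal{A}}$. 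Since $g$ is strictly decreasing on $[0,x_{m}]$ with $g(\alpha_{1})=0$, the combination $g(\beta)<0$ and $\beta<x_{m}$ forces $\alpha_{1}<\beta<x_{m}<\alpha_{2}$, which is exactly (ii).

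All the computations here are elementary; there is no real obstacle, only the mild bookkeeping of noticing that the hypothesis \eqref{F-13-1}, the sign condition $g(x_{m})<0$, and the inequality $\beta<x_{m}$ all reduce to the single statement $\bigl(\mathcal{D}(B)/\kappa_{1}\bigr)^{2}<32\mathcal{B}^{3}/(27\mathcal{A})$. Once that common reduction is isolated, (i) is the textbook monotonicity argument for a cubic with one sign change on each side of its minimizer, and (ii) is a one-line evaluation of $g(\beta)$ together with the comparison $\beta<x_{m}$.
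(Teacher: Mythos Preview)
Your proof is correct. Part (i) proceeds exactly as in the paper: both of you identify $x_m=\sqrt{2\mathcal{B}/(3\mathcal{A})}$ as the unique minimizer on $[0,\infty)$, observe that the hypothesis \eqref{F-13-1} is equivalent to $g(x_m)<0$, and then read off the two roots and the sign pattern from monotonicity.

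Your argument for part (ii), however, is genuinely different from the paper's. The paper uses convexity: since $g''(x)=6\mathcal{A}x>0$ on $(0,\infty)$, the secant line through $(0,g(0))$ and $(x_m,g(x_m))$, namely $y=-\tfrac{4\mathcal{B}}{3}x+\tfrac{\mathcal{D}(B)}{\kappa_1}$, lies above the graph of $g$ on $[0,x_m]$; this secant hits the $x$-axis precisely at $\beta$, and convexity then forces the root $\alpha_1$ of $g$ to lie to the left of $\beta$. You instead substitute $\tfrac{\mathcal{D}(B)}{\kappa_1}=\tfrac{4\mathcal{B}\beta}{3}$ directly into $g$ to obtain $g(\beta)=\beta\bigl(\mathcal{A}\beta^2-\tfrac{2\mathcal{B}}{3}\bigr)<0$, and then invoke strict monotonicity of $g$ on $[0,x_m]$ to conclude $\alpha_1<\beta$. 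Your route is slightly more elementary in that it avoids the second derivative and the geometric secant argument altogether; the paper's route, on the other hand, explains conceptually \emph{why} $\beta$ is the natural comparison point (it is the $x$-intercept of the secant), which your computation leaves opaque. Either approach is perfectly adequate here.
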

\begin{proof}
(i)~Since 
\[  g(0) =  \frac{{\mathcal D}(B)}{\kappa_1}, \quad g^{\prime} = 3\mathcal{A}x^2-2\mathcal{B},    \]
it is easy to see that 
\[ g^{\prime}(x) = 0 \quad \Longleftrightarrow \quad x = \pm \sqrt{\frac{2\mathcal{B} }{ 3\mathcal{A}}}, \]
and 
\[ g\Big(   \sqrt{\frac{2\mathcal{B} }{ 3\mathcal{A}}} \Big) = -\frac{4\mathcal{B}}{3} \sqrt{\frac{2\mathcal{B} }{ 3\mathcal{A}}}  + \frac{{\mathcal D}(B)}{\kappa_1} < 0 \quad \Longleftrightarrow \quad  \kappa_1 > {\mathcal D}(B)\cdot\sqrt{\frac{27\mathcal{A}}{32\mathcal{B}^3}}.  \]
Using a graphical method, it is easy to see that there exist two positive roots $\alpha_1 < \alpha_2$ of $g(x) = 0$ such that
\[ g(x)>0 \quad \mbox{for}~~x\in[0, \alpha_1)\cup(\alpha_2, \infty]; \qquad g(x)<0 \quad \mbox{for}~~x\in(\alpha_1, \alpha_2). \]
\noindent (ii)~First note that $g$ is convex for $x>0$:
\[
g''(x)=6\mathcal{A}x>0.
\]
We set
\[ P_1=(0, g(0)), \quad P_2=(x_m, g(x_m)), \]
and let $l$ be a straight line passing through two points $P_1$ and $P_2$:
\[
l: y=-\frac{4\mathcal{B}}{3}x+\frac{{\mathcal D}(B)}{\kappa_1}.
\]
Let $Q(\beta, 0)$ be the intersection point of two lines $l$ and $x$-axis:
\[ \beta=\frac{3{\mathcal D}(B)}{4\mathcal{B}\kappa_{1}}. \]
Then it is also easy to see that 
\begin{equation} \label{F-14}
 \alpha_1<x_m<\alpha_2.
\end{equation} 
Since the graph of $y=g(x)$ is convex, the point $P_3(\alpha_1, 0) $ lies below the line $l$. So we can conclude that
\begin{equation} \label{F-15}
 \alpha_1<\beta. 
\end{equation} 
On the other hand, by the assumption \eqref{F-13-1}, one has 
\begin{equation} \label{F-16}
\beta<x_m.
\end{equation}
Finally, we combine all the estmates \eqref{F-14}, \eqref{F-15} and \eqref{F-16} to get
\[
0<\alpha_1<\beta=\frac{3{\mathcal D}(B)}{4\mathcal{B}\kappa_1}<x_m=\sqrt{\frac{2\mathcal{B}}{3\mathcal{A}}}<\alpha_2.
\]
\end{proof}
\begin{theorem}\label{T6.2}
Suppose that the coupling strength and initial data satisfy
\[ \kappa_1 > {\mathcal D}(B)\cdot\sqrt{\frac{27\mathcal{A}}{32\mathcal{B}^3}},\qquad {\mathcal D}(U^0)<\alpha_2, \]
where $\alpha_2$, $\mathcal{A}$ and $\mathcal{B}$ were defined previous theorem and satisfy $\mathcal{B}>0,$
and let $\{U_i\}_{i=1}^N$ be a solution to \eqref{F-4}. Then, one has practical aggregation:
\[
\lim_{\kappa_1\rightarrow\infty}\limsup_{t\rightarrow\infty} {\mathcal D}(U)=0.
\]
\end{theorem}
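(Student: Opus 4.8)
The plan is to feed the one-sided differential inequality of Proposition \ref{P6.1} into a scalar comparison argument, use the root structure of $g$ supplied by the preceding lemma to trap $\mathcal{D}(U)$ asymptotically inside a small interval around $\alpha_1$, and then let $\kappa_1\to\infty$.

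First, I would rewrite the upper estimate. For a solution $\{U_i\}$ of \eqref{F-4}, Proposition \ref{P6.1} gives $\frac{d}{dt}\mathcal{D}(U)\leq -2\kappa_1\mathcal{B}\mathcal{D}(U)+\kappa_1\mathcal{A}\mathcal{D}(U)^3+\mathcal{D}(B)$ for a.e.\ $t>0$; since $g(x)=\mathcal{A}x^3-2\mathcal{B}x+\frac{\mathcal{D}(B)}{\kappa_1}=\frac{\mathcal{D}(B)}{\kappa_1}-f(x)$, this reads
\[
\frac{d}{dt}\mathcal{D}(U)\leq \mathcal{D}(B)-\kappa_1 f(\mathcal{D}(U))=\kappa_1\,g(\mathcal{D}(U)),\qquad \text{for a.e. } t>0.
\]
Here $t\mapsto\mathcal{D}(U(t))$ is absolutely continuous, being the pointwise maximum of the finitely many Lipschitz functions $t\mapsto\|U_i(t)-U_j(t)\|_F$, so the estimate may be integrated.

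Next, I would introduce the scalar ODE $\dot z=\kappa_1 g(z)$ with $z(0)=\mathcal{D}(U^0)$. Since $g$ is $C^1$, hence locally Lipschitz, $z$ is well defined, and the standard comparison lemma for differential inequalities yields $\mathcal{D}(U(t))\leq z(t)$ for all $t\geq 0$. Under the coupling condition $\kappa_1>\mathcal{D}(B)\sqrt{27\mathcal{A}/(32\mathcal{B}^3)}$, the preceding lemma tells us that the zeros of $g$ in $[0,\infty)$ are precisely $\alpha_1<\alpha_2$, with $g>0$ on $[0,\alpha_1)\cup(\alpha_2,\infty)$ and $g<0$ on $(\alpha_1,\alpha_2)$, and that $0<\alpha_1<\beta=\frac{3\mathcal{D}(B)}{4\mathcal{B}\kappa_1}<x_m<\alpha_2$. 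Because $z(0)=\mathcal{D}(U^0)<\alpha_2$, this sign pattern makes $\alpha_1$ attracting within $[0,\alpha_2)$: if $z(0)<\alpha_1$ then $\dot z=\kappa_1 g(z)>0$, so $z$ stays in $[z(0),\alpha_1)$ and increases monotonically to $\alpha_1$; if $z(0)\in(\alpha_1,\alpha_2)$ then $\dot z<0$, so $z$ decreases monotonically to $\alpha_1$; if $z(0)=\alpha_1$ then $z\equiv\alpha_1$. In every case $z(t)<\alpha_2$ for all $t$ and $\lim_{t\to\infty}z(t)=\alpha_1$, hence
\[
\limsup_{t\to\infty}\mathcal{D}(U(t))\leq \lim_{t\to\infty}z(t)=\alpha_1<\beta=\frac{3\mathcal{D}(B)}{4\mathcal{B}\kappa_1}.
\]

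Finally, I would let $\kappa_1\to\infty$. The cubic $g(\,\cdot\,;\kappa_1)$ depends on $\kappa_1$ only through its constant term $\mathcal{D}(B)/\kappa_1$, which decreases in $\kappa_1$, so the larger root $\alpha_2=\alpha_2(\kappa_1)$ is nondecreasing in $\kappa_1$; thus the hypothesis $\mathcal{D}(U^0)<\alpha_2$ persists for all larger coupling strengths and the bound above is available for every admissible $\kappa_1$. Passing to the limit in $\limsup_{t\to\infty}\mathcal{D}(U)\leq \frac{3\mathcal{D}(B)}{4\mathcal{B}\kappa_1}$ yields $\lim_{\kappa_1\to\infty}\limsup_{t\to\infty}\mathcal{D}(U)=0$, which is the assertion. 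I expect the main obstacle to be making the comparison step rigorous: verifying the absolute continuity of $\mathcal{D}(U(\cdot))$ and invoking the scalar comparison lemma to pass from the a.e.\ differential inequality to the pointwise bound $\mathcal{D}(U(t))\leq z(t)$, together with the bookkeeping needed to be sure that the trajectory, once below $\alpha_2$, is not driven back above it by the region $[0,\alpha_1)$ where $g>0$ but is instead funnelled into the shrinking interval $[0,\alpha_1]$.
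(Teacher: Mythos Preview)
Your argument is correct and follows essentially the same route as the paper: rewrite the upper differential inequality of Proposition~\ref{P6.1} as $\frac{d}{dt}\mathcal{D}(U)\leq\kappa_1 g(\mathcal{D}(U))$, use the sign structure of $g$ from the preceding lemma to trap $\mathcal{D}(U)$ asymptotically below $\alpha_1<\frac{3\mathcal{D}(B)}{4\mathcal{B}\kappa_1}$, and send $\kappa_1\to\infty$. Your version is in fact more careful than the paper's sketch, since you explicitly justify the comparison via the scalar ODE $\dot z=\kappa_1 g(z)$ and also address the monotonicity of $\alpha_2$ in $\kappa_1$ needed to make the limit $\kappa_1\to\infty$ meaningful.
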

\begin{proof} By comparison theorem for the differential inequality
\[
\frac{d}{dt} {\mathcal D}(U)\leq\kappa_1g({\mathcal D}(U)),
\]
since ${\mathcal D}(U^0)<\alpha_2$, there exist $t_*>0$ such that
\[
{\mathcal D}(U(t))\leq\alpha_1,\quad t> t_*.
\]
We also have the inequality on $\alpha_1$:
\[
{\mathcal D}(U(t))\leq\alpha_1<\frac{3{\mathcal D}(B)}{4\mathcal{B}\kappa_1}, \quad \forall~t > t_*.
\]
This yields
\[
\limsup_{t\rightarrow\infty} {\mathcal D}(U)\leq\frac{3{\mathcal D}(B)}{4\mathcal{B}\kappa_1}.
\]
Letting $\kappa \to \infty$, one has 
\[
\lim_{\kappa_1 \rightarrow\infty}\limsup_{t\rightarrow\infty} {\mathcal D}(U)=0.
\]
\end{proof}

As a corollary of Theorem \ref{T6.2}, one has practical synchronization for system:
\begin{equation} \label{F-17}
\begin{cases}
{\dot T}_i=B_iT_i+\kappa_1 (T_cT_i^* T_i-T_iT_c^* T_i), \quad t > 0, \\
T_i(0)=T_i^0=U_i^0\Sigma V^*, \quad i = 1, \cdots, N.
\end{cases}
\end{equation}
\begin{corollary}
Suppose that the coupling strength and initial data satisfy
\[
\kappa_1 > {\mathcal D}(B)\cdot\sqrt{\frac{27\mathcal{A}}{32\mathcal{B}^3}}, \qquad  {\mathcal D}(T^0)< \Big( \min_{1 \leq k \leq N} \lambda_k \Big) \cdot\alpha_2, \qquad \mathcal{B}>0,
\]
where $\alpha_2$, $\mathcal{A}$ and $\mathcal{B}$ were defined in Theorem \ref{T6.2}. Then, we have a practical synchronization:
\[
\lim_{\kappa_1\rightarrow\infty}\limsup_{t\rightarrow\infty} {\mathcal D}(T)=0.
\]
\end{corollary}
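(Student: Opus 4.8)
The plan is to deduce this corollary from Theorem~\ref{T6.2} by passing through the reformulation of Section~\ref{sec:6.1} and then undoing it via the norm equivalence from Lemma~\ref{L5.2}. First I would observe that the hypothesis $\mathcal{B} = \langle \lambda^2 \rangle - \Delta(\lambda^2) > 0$ forces every diagonal entry $\lambda_k^2$ of $D = \Sigma\Sigma^*$ to be strictly positive, so $D$ is an invertible diagonal matrix with positive entries, and the linear term $B_iT_i$ in \eqref{F-17} is precisely of the left-multiplication form \eqref{F-3}. Hence, writing $T_i(t) = U_i(t)\Sigma V^*$ as in Section~\ref{sec:6.1}, the Cauchy problem \eqref{F-17} is equivalent to the Lohe matrix model with diagonal frustration \eqref{F-4} for the unitary matrices $U_i$, with the very same matrices $B_i$ and the same diagonal frustration $D$. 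Since the singular values $\lambda_k$ of $T_i$ are conserved (Theorem~\ref{T5.1}), the constants $\langle \lambda^2 \rangle$, $\Delta(\lambda^2)$, $\mathcal{A}$, $\mathcal{B}$, the threshold $\mathcal{D}(B)\sqrt{27\mathcal{A}/(32\mathcal{B}^3)}$, and the root $\alpha_2$ are literally the same objects in both formulations.

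Next I would translate the hypotheses on the initial data. Exactly as in the proof of Corollary~\ref{C6.1}, from $T_i^0 = U_i^0\Sigma V^*$ one gets $\|T_i^0 - T_j^0\|_F^2 = \mathrm{tr}[D(U_i^0 - U_j^0)^*(U_i^0 - U_j^0)]$, whence
\[
\Big(\min_{1 \le k \le d_1}\lambda_k\Big)\|U_i^0 - U_j^0\|_F \le \|T_i^0 - T_j^0\|_F \le \Big(\max_{1 \le k \le d_1}\lambda_k\Big)\|U_i^0 - U_j^0\|_F,
\]
so that $\mathcal{D}(T^0) < (\min_k \lambda_k)\,\alpha_2$ implies $\mathcal{D}(U^0) < \alpha_2$. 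Together with the coupling-strength hypothesis $\kappa_1 > \mathcal{D}(B)\sqrt{27\mathcal{A}/(32\mathcal{B}^3)}$, which is identical to the one in Theorem~\ref{T6.2}, this puts us exactly in the setting of that theorem, so the solution $\{U_i\}$ of \eqref{F-4} satisfies $\lim_{\kappa_1 \to \infty}\limsup_{t \to \infty}\mathcal{D}(U) = 0$.

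Finally I would transfer this back to $T$. The algebraic identity underlying Lemma~\ref{L5.2} uses only $T_i = U_i\Sigma V^*$ with unitary $U_i$ and common $\Sigma, V$, so it holds for the solution of \eqref{F-17} at every time: $\|T_i(t) - T_j(t)\|_F = \|(U_i(t) - U_j(t))\Sigma\|_F \le (\max_k \lambda_k)\|U_i(t) - U_j(t)\|_F$, hence $\mathcal{D}(T(t)) \le (\max_k \lambda_k)\,\mathcal{D}(U(t))$. Taking $\limsup_{t\to\infty}$ and then letting $\kappa_1 \to \infty$ yields the claimed practical synchronization $\lim_{\kappa_1 \to \infty}\limsup_{t\to\infty}\mathcal{D}(T) = 0$. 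I do not expect a genuine obstacle here: the entire content is already contained in Theorem~\ref{T6.2} and Section~\ref{sec:6.1}, and the only point requiring care is the bookkeeping check that $\mathcal{A}$, $\mathcal{B}$ and $\alpha_2$ in the hypotheses of the corollary coincide with those of Theorem~\ref{T6.2} once the model is rewritten in terms of the $U_i$ — which is immediate because these quantities depend only on the conserved singular values.
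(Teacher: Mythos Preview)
Your proposal is correct and follows essentially the same approach as the paper: the paper's proof is the one-liner ``We use ${\mathcal D}(T) \leq {\mathcal O}(1){\mathcal D}(U)$ and Theorem~\ref{T6.2} to get the desired estimate,'' and you have carefully unpacked precisely the implicit steps (the reformulation of Section~\ref{sec:6.1}, the translation of the initial-data condition as in Corollary~\ref{C6.1}, and the norm comparison from Lemma~\ref{L5.2}). Your additional observation that $\mathcal{B}>0$ forces all $\lambda_k^2>0$ is a nice point that the paper leaves tacit.
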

\begin{proof}
We use ${\mathcal D}(T) \leq {\mathcal O}(1){\mathcal D}(U)$ and Theorem \ref{T6.2} to get the desired estimate.
\end{proof}
\section{Conclusion} \label{sec:7}
\setcounter{equation}{0} 
In this paper, we proposed a generalized Lohe matrix model on a space of complex matrices with the same size motivated by the Lohe tensor model. For the same natural frequency tensor, our proposed model admits the solution splitting property which means that the general solution is the composition of free flow and nonlinear flow. In previous literature, all matrix-valued aggregation models are defined on the space of square matrices. Thus, as far as the authors know, there are no aggregation models for the space of non-square matrices. For our proposed model, we provide several sufficient frameworks leading to the complete aggregation for a homogeneous ensemble, whereas for a heterogeneous ensemble with distributed natural frequencies, we show that if the coupling strength is sufficiently large and initial data is sufficiently aggregated, then we will show that the ensemble diameter can be made sufficiently small, as we increase the coupling strength (emergence of practical synchronization). There are several issues which were not discussed in this paper, for example, gradient flow formulation, emergence of aggregation of generic initial data and formation of local aggregation, etc. These interesting issues will be treated in future works.



\begin{thebibliography}{99}
\bibitem{A-B} Acebron, J. A., Bonilla, L. L., P\'{e}rez Vicente, C. J. P., Ritort, F. and Spigler, R.: \textit{The Kuramoto model: A simple paradigm for synchronization phenomena.} Rev. Mod. Phys. \textbf{77} (2005), 137-185.

\bibitem{A-R} Aeyels,  D. and Rogge, J.: \textit{Stability of phase locking and existence of frequency in networks of globally coupled oscillators.} Prog. Theor. Phys. {\bf 112} (2004), 921-941.

\bibitem{A-B-F} Albi, G., Bellomo, N., Fermo, L., Ha, S.-Y., Kim, J., Pareschi, L., Poyato, D. and Soler, J.: \textit{Vehicular traffic, crowds and swarms: From kinetic theory and multiscale methods to applications and research perspectives.} Math. Models Methods Appl. Sci. {\bf 29} (2019), 1901-2005.

\bibitem{B-H} Bellomo, N. and Ha, S.-Y.: \textit{A quest toward a mathematical theory of the dynamics of swarms}. Math. Models Methods Appl. Sci. {\bf 27} (2017), 745-770.

\bibitem{B-C-M} Benedetto, D., Caglioti, E. and Montemagno, U.: \textit{On the complete phase
    synchronization for the Kuramoto model in the mean-field limit.}    Commun. Math. Sci. {\bf 13} (2015), 1775-1786.
    
\bibitem{B-T1} Bernoff, A. J., Topaz, C. M.: \textit{Nonlocal aggregation models: a primer of swarm equilibria.} SIAM Rev. {\bf 55} (2013), 709-747.

\bibitem{B-T2} Bernoff, A. J., Topaz, C. M.: \textit{A primer of swarm equilibria.} SIAM J. Appl. Dyn. Syst. {\bf 10} (2011), 212-250.

\bibitem{B-C-S} Bronski, J., Carty, T. and Simpson, S.: \textit{A matrix valued Kuramoto model.} Archived as arXiv:1903.09223.

\bibitem{B-B} Buck, J. and  Buck, E.: \textit{Biology of synchronous flashing of fireflies.} Nature {\bf 211} (1966), 562.

\bibitem{C-C-H} Chi, D., Choi, S.-H. and Ha, S.-Y.: \textit{Emergent behaviors of a holonomic particle system on a sphere.} J. Math. Phys. {\bf 55} (2014), 052703.

\bibitem{C-H1} Choi, S.-H. and Ha, S.-Y.: \textit{Emergent behaviors of quantum Lohe oscillators with all-to-all couplings.}  J. Nonlinear Sci. {\bf 25} (2015), 1257-1283.

\bibitem{C-H2} Choi, S.-H. and Ha, S.-Y.: \textit{Time-delayed interactions and synchronization of identical Lohe oscillators.} Quart. Appl. Math. {\bf 74} (2016), 297-319.

\bibitem{C-H3} Choi, S.-H. and Ha, S.-Y.: \textit{Large-time dynamics of the asymptotic Lohe model with a small-time delay.} J. Phys. A: Mathematical and Theoretical. {\bf 48} (2015), 425101.

\bibitem{C-H4} Choi, S.-H. and Ha, S.-Y.: \textit{Quantum synchronization of the Sch\"{o}dinger-Lohe model.} J. Phys. A: Mathematical and Theoretical  {\bf 47} (2014), 355104.

\bibitem{C-H5} Choi, S.-H. and Ha, S.-Y.: \textit{Complete entrainment of Lohe oscillators under attractive and repulsive couplings.} SIAM. J. App. Dyn. {\bf 13} (2013), 1417-1441.

\bibitem {C-H-J-K} Choi, Y., Ha, S.-Y., Jung, S. and Kim, Y.: \textit{Asymptotic formation and orbital stability of phase-locked states for the Kuramoto model.} Physica D {\bf 241} (2012), 735-754.





\bibitem{C-S} Chopra, N. and Spong, M. W.: \textit{On exponential synchronization of Kuramoto oscillators.} IEEE Trans. Automatic Control {\bf 54} (2009), 353-357.

\bibitem{D-F-M-T}Degond, P., Frouvelle, A., Merino-Aceituno, S. and Trescases, A.: \textit{Quaternions in collective dynamics.} Multiscale Model. Simul. {\bf 16} (2018), 28--77.

\bibitem{D-F-M} Degond, P., Frouvelle, A. and Merino-Aceituno, S.: \textit{A new flocking model through body attitude coordination.} Math. Models Methods Appl. Sci. {\bf 27} (2017), 1005--1049. 

\bibitem{De} DeVille, L.: \textit{Aggregation and stability for quantum Kuramoto.}  J. Stat. Phys. {\bf174} (2019), 160--187. 

\bibitem{D-X} Dong, J.-G. and Xue, X.: \textit{Synchronization analysis of Kuramoto oscillators.}  Commun. Math. Sci. {\bf 11} (2013), 465-480.

\bibitem{D-B1}  D\"{o}rfler, F. and Bullo, F.: \textit{Synchronization in complex networks of phase oscillators: A survey.} Automatica \textbf{50} (2014), 1539-1564.

\bibitem{D-B0} D\"{o}rfler, F. and Bullo, F.: \textit{Exploring synchronization in complex oscillator networks.} IEEE 51st Annual Conference on Decision and Control (CDC) (2012), 7157-7170.

\bibitem {D-B} D\"{o}rfler, F. and Bullo, F.: \textit{On the critical coupling for Kuramoto oscillators.} SIAM. J. Appl. Dyn. Syst. \textbf{10} (2011), 1070-1099.






\bibitem{H-K-P-R} Ha, S.-Y., Kim, D., Park, H. and Ryoo, S. W.: \textit{Constants of motions for the finite-dimensional Lohe type models with frustration and applications to emergent dynamics.} Submitted.

\bibitem{H-K} Ha, S.-Y. and Kim, D.: \textit{Emergent behavior of a second-order Lohe matrix model on the unitary group.} J. Stat. Phys. {\bf 175} (2019), 904-931.

\bibitem{H-K-R} Ha, S.-Y., Kim, H. W. and Ryoo, S. W.: \textit{Emergence of phase-locked states for the Kuramoto model in a large coupling regime.} Commun. Math. Sci. {\bf 14} (2016), 1073-1091.

\bibitem{H-K-P-Z} Ha, S.-Y., Ko, D., Park, J. and Zhang, X.: \textit{Collective synchronization of classical and quantum oscillators.} EMS Surveys in Mathematical Sciences {\bf 3} (2016), 209-267.

\bibitem{H-K-R} Ha, S.-Y., Ko, D. and Ryoo, S. W.: \textit{On the relaxation dynamics of Lohe oscillators on some Riemannian manifolds.} J. Stat. Phys. {\bf 172} (2018), 1427-1478.

\bibitem{H-L-X}  Ha, S.-Y., Li, Z. and Xue, X.: \textit{Formation of phase-locked states in a population of locally interacting Kuramoto oscillators.} J. Differential Equations {\bf 255} (2013), 3053-3070.

\bibitem{H-P1} Ha, S.-Y. and Park, H.: \textit{From the Lohe tensor model to the complex Lohe sphere model and emergent dynamics}.  Submitted.

\bibitem{H-P2} Ha, S.-Y. and Park, H.: \textit{Emergent behaviors of Lohe tensor flocks}. Submitted.

\bibitem{H-R} Ha, S.-Y. and Ryoo, S.W.: \textit{On the emergence and orbital Stability of phase-locked states for the Lohe model.} J. Stat. Phys {\bf 163} (2016), 411-439.











\bibitem{J-C} Ja\'{c}imovi\'{c}, V. and Crnki\'{c}, A: \textit{Low-dimensional dynamics in non-Abelian Kuramoto model on the 3-sphere.} Chaos {\bf 28} (2018), 083105.

\bibitem{Ku1} Kuramoto, Y.: \textit{Chemical oscillations, waves and turbulence.} Springer-Verlag, Berlin, 1984.

\bibitem{Ku2} Kuramoto, Y.: \textit{International symposium on mathematical problems in mathematical physics.} Lecture Notes Theor. Phys.  \textbf{30} (1975), 420.




\bibitem{Lo-0} Lohe, M. A.: \textit{Systems of matrix Riccati equations, linear fractional transformations, partial integrability and synchronization.} J. Math. Phys. {\bf 60} (2019), 072701.

\bibitem{Lo-1} Lohe, M. A.: \textit{Quantum synchronization over quantum networks.} J. Phys. A: Math. Theor. {\bf 43} (2010), 465301.

\bibitem{Lo-2} Lohe, M. A.: \textit{Non-abelian Kuramoto model and synchronization.} J. Phys. A: Math. Theor. {\bf 42} (2009), 395101.






\bibitem{M-T-G} Markdahl, J., Thunberg, J. and GonÃ§alves, J.: \textit{Almost global consensus on the n-sphere.} IEEE Trans. Automat. Control {\bf 63} (2018), 1664-1675. 

\bibitem{M-S1} Mirollo, R. and Strogatz, S. H.: \textit{The spectrum of the partially locked state for the Kuramoto model.} J. Nonlinear Science {\bf 17} (2007), 309-347.

\bibitem{M-S2} Mirollo, R. and Strogatz, S. H.: \textit{The spectrum of the locked state for the
                        Kuramoto model of coupled oscillators.}  Physica D {\bf 205} (2005), 249-266.

\bibitem{M-S3} Mirollo, R. and Strogatz, S. H.: \textit{Stability of
incoherence in a population of coupled oscillators.} J. Stat. Phys.
{\bf 63} (1991), 613-635.


\bibitem{Pe} Peskin, C. S.: \textit{Mathematical aspects of heart physiology.} Courant Institute of Mathematical Sciences, New York, 1975.

\bibitem{P-R} Pikovsky, A., Rosenblum, M. and Kurths, J.: \textit{Synchronization: A universal concept in
nonlinear sciences.} Cambridge University Press, Cambridge, 2001.


\bibitem{St} Strogatz, S. H.: \textit{From Kuramoto to Crawford: exploring the onset of synchronization in populations of coupled oscillators.} Physica D \textbf{143} (2000), 1-20.

\bibitem{T-M} Thunberg, J., Markdahl, J., Bernard, F. and Goncalves, J.: \textit{A lifting method for analyzing distributed synchronization on the unit sphere.} Automatica J. IFAC {\bf 96} (2018), 253-258. 

\bibitem{T-B-L} Topaz, C. M., Bertozzi, A. L. and Lewis, M. A.: \textit{A nonlocal continuum model for biological aggregation.} Bull. Math. Biol. {\bf 68} (2006), 1601-1623.

\bibitem{T-B} Topaz, C. M. and Bertozzi, A. L.: \textit{Swarming patterns in a two-dimensional kinematic model for biological groups.} SIAM J. Appl. Math. {\bf 65} (2004), 152-174.

\bibitem{V-M1} Verwoerd, M. and Mason, O.: \textit{On computing the critical coupling coefficient for the Kuramoto model on a complete bipartite graph.}  SIAM J. Appl. Dyn. Syst., {\bf 8} (2009), 417-453.

\bibitem{V-M2} Verwoerd, M. and Mason, O.: \textit{Global phase-locking in finite populations of phase-coupled oscillators.} SIAM J. Appl. Dyn. Syst., {\bf 7} (2008), 134-160.

\bibitem{VZ} Vicsek, T. and Zefeiris, A.: \textit{Collective motion.} Phys. Rep. {\bf 517} (2012), 71-140.









\bibitem{Wi2} Winfree, A. T.: \textit{Biological rhythms and the behavior of populations of coupled oscillators.} J. Theor. Biol. \textbf{16} (1967), 15-42.

\bibitem{Wi1} Winfree, A. T.: \textit{The geometry of biological time.} Springer, New York, 1980.




\bibitem{Zhu} Zhu, J.: \textit{Synchronization of Kuramoto model in a high-dimensional linear space.} Physics Letters A {\bf 377} (2013), 2939-2943.


\end{thebibliography}
\end{document}